\renewcommand*{\CustomAcronymFields}{%
  name={\the\glsshorttok},
  description={\the\glslongtok},
  first={\noexpand\emph{\the\glslongtok}\space(\the\glsshorttok)},%
  firstplural={\noexpand\emph{\the\glslongtok\noexpand\acrpluralsuffix}\space(\the\glsshorttok)},%
  text={\the\glsshorttok},%
  plural={\the\glsshorttok\noexpand\acrpluralsuffix}%
}
\newtheorem{theorem}{Theorem}[section]
\newtheorem{corollary}[theorem]{Corollary}
\newtheorem{proposition}[theorem]{Proposition}
\newtheorem{fact}[theorem]{Fact}
\theoremstyle{remark}
\newtheorem{remark}[theorem]{Remark}
\theoremstyle{example}
\newtheorem{example}[theorem]{Example}
\theoremstyle{definition}
\newtheorem{definition}[theorem]{Definition}
\definecolor{light-gray}{gray}{0.90}
\newcounter{theorem-backup}
\newcommand{\N}{\mathbb{N}}
\newcommand{\Z}{\mathbb{Z}}
\newcommand{\R}{\mathbb{R}}
\newcommand{\B}{\mathbb{B}}
\lstdefinelanguage{PseudoC}[ISO]{C++} { 
morekeywords={foreach, and, not, or, is, FIFO_Queue, HashTable, FILE, Cache}, 
morekeywords=[2]{HashInsert, Enqueue, Dequeue, next}, 
mathescape=true,
alsoletter={'},
deletestring=[b]'
}
\lstdefinelanguage{PseudoCWithNumbers}[ISO]{C++} { 
morekeywords={foreach, and, not, or, is, FIFO_Queue, HashTable, FILE, Cache}, 
morekeywords=[2]{HashInsert, Enqueue, Dequeue, next}, 
mathescape=true,
alsoletter={'},
deletestring=[b]'
}
\lstdefinelanguage{Murphi}[]{Pascal} { 
morekeywords={ruleset, rule, invariant, startstate, return, endif, endfor, endswitch, forall, endforall, exists, endexists}, 
mathescape=true, 
morestring=[b]",
morestring=[b]', 
morecomment=[s]{/*}{*/} ,
morecomment=[l]{--}
}
\lstdefinelanguage{PRISM}[ISO]{C++} { 
morekeywords={probabilistic, stochastic, const, rate, module, endmodule, init, P, U},
mathescape=true, 
alsoletter={'},
deletestring=[b]'
}
\lstdefinelanguage{Yacc}[ISO]{C++} { 
morekeywords={token, left}, 
mathescape=false,
alsoletter={'},
deletestring=[b]'
}
\lstdefinestyle{PseudoC}{
language=PseudoC,
basicstyle=\ttfamily,
tabsize=1,
showlines=false,
emptylines=*1,
breaklines=true,
breakindent=5pt,
keywordstyle=\rmfamily\bfseries,
keywordstyle=[2]\rmfamily,
commentstyle=\itshape, 
columns=fixed,
showspaces=false, 
showstringspaces=false, 
showtabs=false, 
escapechar=\%
}
\lstdefinestyle{PseudoCWithNumbers}{
language=PseudoC,
basicstyle=\ttfamily,
tabsize=1,
showlines=false,
emptylines=*1,
breaklines=true,
breakindent=5pt,
keywordstyle=\rmfamily\bfseries,
keywordstyle=[2]\rmfamily,
commentstyle=\itshape, 
columns=fixed,
showspaces=false, 
showstringspaces=false, 
showtabs=false, 
escapechar=\%,
numbersep=5pt,framexleftmargin=15pt,numbers=left,
}
\lstdefinestyle{Murphi}{
language=Murphi,
basicstyle=\ttfamily,
tabsize=1,
showlines=false,
emptylines=*1,
breaklines=true,
breakindent=5pt,
keywordstyle=\rmfamily\bfseries,
commentstyle=\itshape, 
columns=fixed,
showspaces=false, 
showstringspaces=false, 
showtabs=false,
escapechar=\%}
\lstdefinestyle{PRISM}{
language=PRISM,
basicstyle=\ttfamily,
tabsize=1,
showlines=false,
emptylines=*1,
breaklines=true,
breakindent=5pt,
keywordstyle=\rmfamily\bfseries,
commentstyle=\itshape, 
columns=fixed,
showspaces=false, 
showstringspaces=false, 
showtabs=false,
escapechar=\%}
\lstdefinestyle{Yacc}{
language=PseudoC,
basicstyle=\ttfamily,
tabsize=1,
showlines=false,
emptylines=*1,
breaklines=true,
breakindent=5pt,
keywordstyle=\rmfamily\bfseries,
keywordstyle=[2]\rmfamily,
commentstyle=\itshape, 
columns=fixed,
showspaces=false, 
showstringspaces=false, 
showtabs=false, 
}
\newcommand{\qks}{\mbox{QKS}}
\newcommand{\fun}[1]{{\textsl{#1}}}
\newcommand{\ctrset}{{\sf C}({\cal H}, {\cal Q})}
\newcommand{\rimandotekrep}[1]{\cite{cav2010-tekrep-art-2011} (App.~\ref{#1})}
\begin{document}


%
%




\title{Model Based Synthesis of Control Software from System Level Formal Specifications}

%
%
%

\author{Federico Mari, Igor Melatti, Ivano Salvo, Enrico Tronci\\
\small \itshape Department of Computer Science\\
\small \itshape Sapienza University of Rome\\
\small \itshape via Salaria 113, 00198 Rome\\
\small email: \{mari,melatti,salvo,tronci\}@di.uniroma1.it\\
\small {\em Accepted for publication by ACM Transactions on Software Engineering and Methodology (TOSEM)}}

\maketitle

\begin{abstract}

Many \emph{Embedded Systems} are indeed \emph{Software Based Control Systems},
that is control systems whose controller consists of \emph{control software} running on a 
microcontroller device. 
This motivates investigation on \emph{Formal Model Based Design} 
approaches for automatic synthesis of embedded systems control software. 
We present an algorithm, along with a tool \qks \ implementing it, that 
from a formal model (as a \emph{Discrete Time Linear Hybrid System})
of the controlled system (\emph{plant}), 
\emph{implementation specifications} 
(that is, number of bits in the \emph{Analog-to-Digital}, AD, conversion) and
\emph{System Level Formal Specifications}
(that is, safety and liveness requirements for the \emph{closed loop system})
returns correct-by-construction control software that has a
\emph{Worst Case Execution Time} (WCET) linear in the number of AD bits
and meets the given specifications. 
We show feasibility
of our approach by presenting experimental results on using it 
to synthesize control software for a buck DC-DC converter, 
a widely used mixed-mode analog circuit, and for the inverted pendulum.
\end{abstract}

\section{Introduction}
\label{intro.tex}

\newacronym{LTS}{LTS}{Labeled Transition System}
\newacronym{QFC}{QFC}{Quantized Feedback Control}%




Many \emph{Embedded Systems} are indeed
\newacronym{SBCS}{SBCS}{Software Based Control System}%
\glspl{SBCS}.
An \gls{SBCS}
consists of two main subsystems:
the \emph{controller} and the \emph{plant}.
Typically, the plant is a physical system
consisting, for example, of mechanical or electrical devices
whereas the controller 
consists of \emph{control software} running on a microcontroller (see Fig.~\ref{fig:closed_loop}).
In an endless loop, 
the controller
reads \emph{sensor} outputs from the plant and
sends commands to plant \emph{actuators}
in order to guarantee that the 
\emph{closed loop system}
(that is, the system consisting of both
plant and controller) meets given
\emph{safety} and \emph{liveness} specifications
(\emph{System Level Formal Specifications}).
Missing such goals can cause failures or damages to the plant, thus making an \gls{SBCS} a hard real-time system. 

Software generation from models
and formal specifications forms the core of
\emph{Model Based Design} of embedded software 
\cite{Henzinger-Sifakis-fm06}.
This approach is particularly interesting for \glspl{SBCS}
since in such a case system level (formal) specifications are much easier
to define than the control software behavior itself.


%

Fig. \ref{control-loop-figure.tex} shows the 
typical control loop skeleton for an \gls{SBCS}.
Measures from plant \emph{sensors} go through an
\newacronym{AD}{AD}{Analog-to-Digital}%
\gls{AD} conversion (\emph{quantization})
before being processed (line \ref{quantized-feedback})
and commands from the control
software go through a 
\newacronym{DA}{DA}{Digital-to-Analog}%
\gls{DA} conversion
before being sent to plant \emph{actuators}
(line \ref{DA-conversion}).
Basically, the control software design problem for \glspl{SBCS} consists
in designing software implementing functions
\texttt{Control\_Law} and
\texttt{Controllable\_Region}
computing, respectively,
the command to be sent to the plant
(line \ref{control-law})
and the set of states on which the \texttt{Control\_Law}
function works correctly
(\emph{Fault Detection} in line \ref{fault-detection}).
Fig.~\ref{fig:closed_loop} summarizes the complete closed loop system forming an \gls{SBCS}.


\begin{figure}[!t]
  \centering
  \begin{minipage}{.62\hsize}
    \begin{center}
      \begin{algorithmic}[1]
        \STATE {\bf Every} $T$ seconds (\emph{sampling time)} {\bf do} \label{sampling-and-hold}
        \STATE \hspace*{1ex} {\bf Read} \gls{AD} conversion $\hat{x}$ of plant sensor outputs $x$ \label{quantized-feedback}
        \STATE \hspace*{1ex} {\bf If} ($\hat{x}$ is not in the \texttt{Controllable\_Region}) \label{fault-detection}
        \STATE \hspace*{2ex} {\bf Then} {\em // Exception (Fault Detected):} 
        \STATE \hspace*{3ex}      Start Fault Isolation and Recovery (\texttt{FDIR})
        \STATE \hspace*{2ex} {\bf Else} {\em // Nominal case:}
        \STATE \hspace*{3ex}  Compute (\texttt{Control\_Law}) command $\hat{u}$ from $\hat{x}$ \label{control-law}
        \STATE \hspace*{3ex} {\bf Send} \gls{DA} conversion $u$ of $\hat{u}$ to plant actuators \label{DA-conversion}
      \end{algorithmic}
      \caption{A typical control loop skeleton.}
      \label{control-loop-figure.tex}
    \end{center}
  \end{minipage}
  \begin{minipage}{.37\hsize}
    \begin{center}
      \scalebox{0.35}{\input{closed_loop.pstex_t}} 
      \caption{Software Based Control System.}
      \label{fig:closed_loop}
    \end{center}
  \end{minipage}
\end{figure}

\subsection{The Separation-of-Concerns Approach}
\label{separation-of-concerns}

For \gls{SBCS} system level specifications are typically
given with respect to the desired behavior 
of the closed loop system. The 
\emph{control software} 
(that is, \texttt{Control\_Law} and \texttt{Controllable\_Region})
is designed using a 
\emph{separation-of-concerns} approach. That is, 
\emph{Control Engineering} techniques 
(e.g., see \cite{modern-control-theory-1990})
are used
to design, from the closed loop system level specifications,
\emph{functional specifications} (\emph{control law})
for the \emph{control software}
whereas
\emph{Software Engineering} techniques are used to
design control software implementing the given
functional specifications.

Such a separation-of-concerns approach has 
several
drawbacks. 

First, usually control engineering techniques do not
yield a formally verified specification for the
control law or controllable region when quantization is taken into account. 
This is particularly the case when the plant has to be modelled 
as a \emph{Hybrid System}
\cite{alur-sfm04,alg-hs-tcs95,HHT96,AHH96}
(that is a system with continuous as well as discrete state changes).
As a result, even if the control software meets its functional
specifications there is no formal guarantee that
system level specifications are met since
quantization effects are not formally accounted for.

\sloppy

Second, issues concerning computational resources, 
such as control software
\newacronym{WCET}{WCET}{Worst Case Execution Time}%
\gls{WCET}, %
can only be considered very late in the \gls{SBCS} design activity,
namely once the software has been designed. As a result, since the
\gls{SBCS} is a hard real-time system (Fig.~\ref{fig:closed_loop}),
the control software may have a \gls{WCET} greater than the sampling
time (line \ref{sampling-and-hold} in
Fig. \ref{control-loop-figure.tex}). This invalidates the
schedulability analysis (typically carried out before the control
software is completed) and may trigger redesign of the software or
even of its functional specifications (in order to simplify its
design).

\fussy


Last, but not least, the classical separation-of-concerns approach does not effectively support 
design space exploration for the control software. 
In fact, although in general there will be many functional specifications
for the control software that will allow meeting the given
system level specifications, the software engineer only
gets one 
 to play with. This overconstrains a priori the design space
for the control software implementation preventing, for example, effective performance trading 
(e.g., 
between number of bits in \gls{AD} conversion,  
\gls{WCET}, 
RAM usage, 
CPU power consumption, 
etc.).

We note that the above considerations also apply
to the typical situation where
Control Engineering techniques are used to design a control
law and then tools like Berkeley's Ptolemy \cite{ptolemy03},
Esterel's SCADE \cite{scade12}
or MathWorks Simulink \cite{simulink12}
are used to generate the control software. 
Even when the control law is automatically generated and proved correct
(for example, as in \cite{pessoa-cav10}) such an approach
does not yield any formal guarantee about the software correctness
since quantization of the state measurements
is not taken into account in the computation of the control
law. Thus such an approach cannot answer questions like:
1) Can 8 bit \gls{AD} be used or instead we need, say, 12 bit \gls{AD}?
2) Will the control software code run \emph{fast enough} 
   on a, say, 1 MIPS microcontroller 
(that is, is the control software \gls{WCET} less than the sampling time)?
3) What is the controllable region?

The previous considerations motivate research on 
\emph{Software Engineering} 
methods and tools focusing on control software synthesis
(rather than on control law synthesis as in \emph{Control Engineering}).
The objective is that from the plant model (as a hybrid system), 
from formal specifications for the closed loop system behavior 
(\emph{System Level Formal Specifications}) 
and from \emph{Implementation Specifications} 
(that is, number of bits used in the quantization process)
such methods and tools can generate correct-by-construction control software
satisfying the given specifications.
This is the focus of the present paper.

For a more in-depth discussion of the literature related to the present
paper, we refer the reader to Sect.~\ref{related-works.tex} and 
Tab.~\ref{tab:related-works.tab.tex}.



\subsection{Our Main Contributions}
We model the controlled system (plant) as a
\newacronym{DTLHS}{DTLHS}{Discrete Time Linear Hybrid System}%
\gls{DTLHS} (see Sect.~\ref{dths.tex}), that is a discrete time hybrid system
whose dynamics is defined as a \emph{linear predicate}
(i.e., a boolean combination of linear constraints, see Sect.~\ref{basic.tex}) on its
variables. 
%
We model system level safety as well as liveness specifications as sets of states
defined, in turn, as linear predicates. 
In our setting, as always in control problems, liveness constraints define the
set of states that any evolution of the closed loop system should eventually reach
(\emph{goal states}). 
Using an approach similar to the one in
 \cite{dt-ctr-rect-aut-icalp97,decidability-hybrid-automata-jcss98,AgrawalSTY06}, in \cite{ictac2012} we prove that
both existence of a controller for a \gls{DTLHS} and existence of a 
\emph{quantized} controller for a \gls{DTLHS} are undecidable problems.
Accordingly, we can only hope for semi- or incomplete algorithms.

We present an algorithm computing a sufficient condition and a necessary condition for existence
of a solution to our control software synthesis problem (see Sects.~\ref{lts-ctr.tex} and~\ref{ctr-abs.tex}).
Given a \gls{DTLHS} model ${\cal H}$ for the plant,
a quantization schema (i.e. how many bits we use for \gls{AD} conversion) and system level formal specifications,
our algorithm (see Sect.~\ref{ctr-syn-algorithm.tex}) will return 1 if they are able to decide if a solution exists or not, and
0 otherwise (unavoidable case since our problem is undecidable).
Furthermore, when our sufficient condition is satisfied, 
we return a pair of C functions (see Sect.~\ref{sec:controlSoftware})
\texttt{Control\_Law}, \texttt{Controllable\_Region}
such that:
function \texttt{Control\_Law}
implements
a 
\emph{Quantized Feedback Controller} (QFC) for ${\cal H}$
meeting the given system level formal specifications
and function \texttt{Controllable\_Region}
computes the set of states on which
\texttt{Control\_Law} is guaranteed to work correctly
(\emph{controllable region}).
While {\em WCET analysis} is actually performed after control software
generation, our contribution is to supply both functions with a
\glsfirst{WCET} guaranteed to be linear in the number of bits of the
state quantization schema (see Sect.~\ref{wcet.tex}).
Furthermore, function
\texttt{Control\_Law} is \emph{robust}, that is, it meets the given
closed loop requirements
notwithstanding (nondeterministic) \emph{disturbances} such as
variations in the plant parameters.

\sloppy

We implemented our algorithm on top of the 
CUDD 
package and of
the GLPK 
\newacronym{MILP}{MILP}{Mixed Integer Linear Programming}%
\gls{MILP} solver, 
thus obtaining tool
\newacronym{QKS}{QKS}{Quantized feedback Kontrol Synthesizer}%
\gls{QKS} (publicly available at \cite{QKS}).
This allows us to
present
experimental results on using
\gls{QKS}
to synthesize robust control software for a widely
used mixed-mode analog circuit: the buck DC-DC converter 
(e.g. see \cite{fuzzy-dc-dc-1996}). 
This is an interesting and challenging example
(e.g., see \cite{reliability-power-systems-2008}, \cite{time-optimal-dc-dc-2008})
for automatic synthesis
of correct-by-construction control software from
system level formal specifications. 
Moreover, in order to show effectiveness of our approach,
we also present experimental results on using \gls{QKS} for the inverted pendulum \cite{KB94}.

\fussy

Our experimental results address both computational feasibility and closed loop performances.
As for computational feasibility, we show that within about
40 hours
of CPU time
and within 100MB of RAM we can synthesize
control software
for
a 10-bit
quantized buck DC-DC converter.
As for closed loop performances, our synthesized control software set-up time (i.e., the time needed to reach the steady state) and ripple (i.e., the wideness of the oscillations around the steady state once this has been reached) compares well with those available from the Power Electronics community \cite{fuzzy-dc-dc-1996,time-optimal-dc-dc-2008} and from commercial products \cite{texas-instruments-buck-dc-dc}.






\section{Background}
\label{basic.tex}
\label{vars-notation.lab}

We denote with $[n]$ an initial segment $\{1,\ldots, n\}$ of the natural numbers. 
We denote with $X$ = $[x_1, \ldots, x_n]$ a
finite sequence (list) of variables.
By abuse of language we may regard sequences as sets and
we use $\cup$ to denote list concatenation.
Each variable $x$ ranges on a known (bounded or unbounded)
interval ${\cal D}_x$ either of the reals or of the integers (discrete
variables). 
We denote with ${\cal D}_X$ the set $\prod_{x\in X} {\cal D}_x$.
To clarify that a variable $x$ is {\em continuous} 
(i.e. real valued) we may write $x^{r}$. Similarly, 
to clarify that a variable $x$ is {\em discrete} 
(i.e. integer valued) we may write $x^{d}$. 
Boolean variables are discrete variables
ranging on the set $\B$ = \{0, 1\}. 
We may write $x^{b}$ to denote a boolean variable.
Analogously $X^{r}$ ($X^{d}$, $X^{b}$) denotes the sequence
of real (integer, boolean) variables in $X$.
Unless otherwise stated, we suppose
${\cal D}_{X^r} = \R^{|X^r|}$ and ${\cal D}_{X^d} = \Z^{|X^d|}$. 
Finally, if $x$ is a boolean variable 
we write $\bar{x}$ for $(1 - x)$.

\subsection{Predicates}
\label{subsection:predicates}
 
A {\em linear expression} $L(X)$ over a list of variables $X$ is a linear
combination of variables in $X$ with rational coefficients, $\sum_{x_i \in X}{a_i x_i}$. 
%
A {\em linear constraint} over $X$ (or simply a {\em constraint})  is an expression of the form
$L(X) \leq b$,
where $L(X)$ is a linear expression over $X$ 
and $b$ is a rational constant. In the following, we also write $L(X) \geq b$
for $-L(X) \leq -b$.

{\em Predicates} are inductively defined as follows.
A constraint $C(X)$ over a list of variables $X$ is a predicate over 
$X$. 
If $A(X)$ and $B(X)$ are predicates over $X$, then $(A(X) \land B(X))$
and $(A(X) \lor B(X))$ are predicates over X.  Parentheses may be
omitted, assuming usual associativity and precedence rules of logical
operators.
A {\em conjunctive predicate} is a conjunction of constraints.
For conjunctive predicates we will also write: 
$L(X) = b$ for (($L(X) \leq b$) $\wedge$ ($L(X) \geq b$)) and $a \leq x \leq
b$ for $x \geq a \;\land\; x \leq
b$, where $x \in X$.
%


A {\em valuation} over a list of variables $X$
is a function $v$ that maps each variable $x \in X$ to a value $v(x)
\in {\cal D}_x$.
Given a valuation $v$, we denote with $X^\ast\in {\cal D}_X$ the sequence of values 
$[v(x_1),\ldots,v(x_n)]$. By abuse of language, 
we call valuation also the sequence of values $X^\ast$.
A \emph{satisfying assignment} to a predicate $P$ over $X$ is a
valuation $X^{*}$ such that $P(X^{*})$ holds. If a satisfying assignment to a
predicate $P$ over $X$ exists, we say that $P$ is {\em feasible}.
Abusing notation, we may
denote with $P$ the set of satisfying assignments to the predicate 
$P(X)$. Two predicates $P$ and $Q$ over $X$ are {\em equivalent},
denoted by $P\equiv Q$, if they have the same set of 
satisfying assignments.

A variable $x\in X$ 
is said to be {\em bounded} in $P$ if 
there exist $a$, $b \in {\cal D}_x$
such that $P(X)$ implies $a \leq x \leq b$.
A predicate $P$ is bounded if all its variables are bounded.
%
%
%
%
%

Given a constraint $C(X)$ and a fresh boolean variable ({\em guard}) $y \not\in X$,
the {\em guarded constraint} $y \to C(X)$ (if $y$ then $C(X)$) denotes
the predicate $((y = 0) \lor C(X))$. Similarly, we use $\bar{y} \to
C(X)$ (if not $y$ then $C(X)$) to denote the predicate $((y = 1) \lor
C(X))$.
A {\em guarded predicate} is a conjunction of 
either constraints or guarded constraints.
It is possible to show that, 
if a guarded predicate $P$ is bounded,
then $P$ can be transformed into a (bounded) conjunctive predicate,
see~\cite{ICSEA2012}. 

\subsection{Mixed Integer Linear Programming}
\label{subsection:milp}

A \glsfirst{MILP} problem with
\emph{decision variables} $X$ is a tuple $(\max,$ $J(X),$ $A(X))$
where: $X$ is a list of variables, $J(X)$ (\emph{objective function})
is a linear expression on $X$, and $A(X)$ (\emph{constraints}) is a
conjunctive predicate on $X$.  
A {\em solution} to $(\max, J(X), A(X))$ is a valuation $X^{*}$ 
such that $A(X^{*})$ and $\forall Z \; (A(Z) \;
\rightarrow \; (J(Z) \leq J(X^{*})))$. 
$J(X^{*})$ is the {\em optimal value} of the \gls{MILP} problem. 
A {\em feasibility} problem is a \gls{MILP} problem of the form
$(\max, 0, A(X))$. We write also $A(X)$
for  $(\max, 0, A(X))$. We write $(\min, J(X), A(X))$ for $(\max, -J(X), A(X))$.

In algorithm outlines, \gls{MILP} solver invocations are denoted by function
\fun{feasible}($A(X)$) that returns {\sc True} if $A(X)$ is 
feasible and {\sc False} otherwise,
and function \fun{optimalValue}($\max$, $J(X)$, $A(X)$) 
that returns either the optimal value of the \gls{MILP} problem 
($\max$, $J(X)$, $A(X)$) or $+\infty$ if such \gls{MILP} problem is
unbounded or unfeasible.

\subsection{Labeled Transition Systems}
\label{lts.tex}

A \gls{LTS}
is a tuple
${\cal S} = (S, A, T)$ where 
$S$ is a (possibly infinite) set of states, 
$A$ is a (possibly infinite) set of \emph{actions}, and 
$T$ : $S$ $\times$ $A$ $\times$ $S$ $\rightarrow$ $\B$
is the \emph{transition relation} of ${\cal S}$.
We say that $T$ (and ${\cal S}$) is {\em deterministic} if $T(s, a, s') \land
T(s, a, s'')$ implies $s' = s''$, and {\em nondeterministic}
otherwise. 
Let $s \in S$ and $a \in A$.
%
%
%
%
%
%
We denote with 
$\mbox{\rm Adm}({\cal S}, s)$ the set of actions
admissible in $s$, that is $\mbox{\rm Adm}({\cal S}, s)$ = $\{a \in A
\; | \; \exists s': T(s, a, s') \}$
and with
$\mbox{\rm Img}({\cal S}, s, a)$ the set of next
states from $s$ via $a$, that is $\mbox{\rm Img}({\cal S}, s, a)$ =
$\{s' \in S \; | \; T(s, a, s') \}$.
We call {\em transition} a triple $(s, a, s') \in S$ $\times$ $A$ $\times$ $S$,
and {\em self loop} a transition $(s, a, s)$. A transition $(s, a, s')$ [self
loop $(s, a, s)$] is a
{\em transition [self loop] of ${\cal S}$} iff $T(s, a, s')$ [$T(s, a, s)$].
A {\em run} or \emph{path}
for an \gls{LTS} ${\cal S}$ 
is a sequence 
$\pi$ =
$s_0, a_0, s_1, a_1, s_2, a_2, \ldots$ 
of states $s_t$ and actions $a_t$ 
such that
$\forall t \geq 0$ $T(s_t, a_t, s_{t+1})$.
The length $|\pi|$ of a finite run $\pi$ is the number of actions
in $\pi$. 
We denote with $\pi^{(S)}(t)$ the $(t + 1)$-th state element of
$\pi$, and with $\pi^{(A)}(t)$ the $(t + 1)$-th action element of
$\pi$. That is $\pi^{(S)}(t)$ = $s_t$, and $\pi^{(A)}(t)$ = $a_t$.

Given two \glspl{LTS} ${\cal S}_1$ $=$ $(S$, $A$, $T_1)$ and ${\cal S}_2$ $=$ $(S$,
  $A$, $T_2)$, we say that ${\cal S}_1$ \emph{refines} ${\cal
    S}_2$ (denoted by ${\cal S}_1 \sqsubseteq {\cal S}_2$) iff $T_{1}(s,
  a, s')$ implies $T_{2}(s, a, s')$ for each state $s,s' \in S$ and action $a \in A$.
The refinement relation is a partial order on \glspl{LTS}.

\section{Discrete Time Linear Hybrid Systems} \label{dths.tex}

In this section we introduce our class of 
\glsfirst{DTLHS},
together with the \gls{DTLHS} representing the buck DC-DC converter 
on which our experiments will focus.

 

\begin{definition}[\gls{DTLHS}] 
\label{dths.def}
A
\glsdesc{DTLHS} 
is a tuple ${\cal H} = (X,$ $U,$ $Y,$ $N)$ where:
\begin{itemize}

\item 
  $X$ = $X^{r} \cup X^{d}$ 
  is a finite sequence of real ($X^{r}$) and 
  discrete ($X^{d}$) 
  {\em present state} variables.  
  We denote with $X'$ the sequence of 
  {\em next state} variables obtained 
  by decorating with $'$ all variables in $X$.
  
\item 
  $U$ = $U^{r} \cup U^{d}$ 
  is a finite sequence of 
  \emph{input} variables.

\item 
  $Y$ = $Y^{r} \cup Y^{d}$ 
  is a finite sequence of
  \emph{auxiliary} variables. 
  Auxiliary variables are typically used to
  model \emph{modes} (e.g., from switching elements such as diodes) 
  or ``local'' variables.

\item 
  $N(X, U, Y, X')$ is a conjunctive predicate 
  over $X \cup U \cup Y \cup X'$ defining the 
  {\em transition relation} (\emph{next state}) of the system.
  $N$ is {\em deterministic} if $N(x, u, y_1, x') \land N(x, u, y_2, x'')$
  implies $x' = x''$, and {\em nondeterministic} otherwise.
\end{itemize}
A \gls{DTLHS} is {\em bounded} if predicate $N$ is bounded.
A \gls{DTLHS} is {\em deterministic} if $N$ is deterministic.
\end{definition}

Since any bounded guarded predicate can 
be transformed into a conjunctive predicate (see Sect.~\ref{subsection:predicates}), for the sake of readability we will 
use bounded guarded predicates to describe the transition relation of 
bounded \glspl{DTLHS}. To this aim, we will also clarify which variables are boolean,
and thus may be used as guards in guarded constraints.


\begin{example}
\label{ex:dths}
Let 
$x$ be a continuous variable, $u$ be a boolean variable, and
$N(x, u, x') \equiv [\overline{u} \rightarrow x' = \alpha x] 
\land
[u \rightarrow x' = \beta x] \land -4 \leq x \leq 4$ be a guarded predicate with $\alpha = \frac{1}{2}$ and $\beta = \frac{3}{2}$.
Then ${\cal H}=(\{x\},\{u\},\varnothing,N)$ is a bounded \gls{DTLHS}.
Note that ${\cal H}$ is deterministic. 
Adding nondeterminism to ${\cal H}$ allows us to address
the problem  of (bounded) variations in the \gls{DTLHS} parameters.
For example, variations in the parameter
$\alpha$ can be modelled with a tolerance 
$\rho \in [0, 1]$ for $\alpha$. 
This replaces $N$ with:
$N^{(\rho)} \equiv [\overline{u} \rightarrow x' \leq (1 + \rho) \alpha x]$ 
$\land$
$[\overline{u} \rightarrow x' \geq (1 - \rho) \alpha x]$ 
$\land$
$[u \rightarrow x' = \beta x]$.
We have that ${\cal H}^{(\rho)}=(\{x\},\{u\},\varnothing,N^{(\rho)})$, for $\rho \in (0, 1]$, is a
nondeterministic \gls{DTLHS}. Note that, as expected, ${\cal H}^{(0)} = {\cal H}$.
\end{example}

In the following definition, we give the semantics of 
\glspl{DTLHS} in terms of \glspl{LTS}. 

\begin{definition}[\gls{DTLHS} dynamics]
Let ${\cal H}$ = ($X$, $U$, $Y$, $N$) 
be a \gls{DTLHS}.
The dynamics of ${\cal H}$ 
is defined by the \glsdesc{LTS}
$\mbox{\rm LTS}({\cal H})$ = (${\cal D}_X$, ${\cal D}_U$,
$\tilde{N}$) where:
$\tilde{N} : {\cal D}_X \; \times \; {\cal D}_U \; \times \; {\cal D}_X \rightarrow \B$ 
is a function s.t.  $\tilde{N}(x, u, x') \equiv \exists y \in {\cal D}_Y: N(x, u, y, x')$.
A \emph{state} $x$ for ${\cal H}$ is a state $x$ for 
$\mbox{\rm LTS}({\cal H})$ and a \emph{run} 
(or \emph{path}) for ${\cal H}$ is
a run for $\mbox{\rm LTS}({\cal H})$ (Sect. \ref{lts.tex}).
\end{definition}

\begin{example}
\label{ex:dths-lts}
Let ${\cal H}$ be the \gls{DTLHS} of Ex.~\ref{ex:dths}.
Then a sequence $\pi$ is a run for ${\cal H}$ iff 
state $\pi^{(S)}(i + 1)$ is obtained by multiplying $\pi^{(S)}(i)$ by $\frac{3}{2}$
when $\pi^{(A)}(i) = 1$, and by $\frac{1}{2}$ when $\pi^{(A)}(i) = 0$.
\end{example}


\subsection{Buck DC-DC Converter as a \gls{DTLHS}}
\label{example-buck.tex}
\label{sec:buck}
The buck DC-DC converter (Fig. \ref{buck.eps}) is a mixed-mode analog
circuit converting the DC input voltage ($V_i$ in Fig.
\ref{buck.eps}) to a desired DC output voltage ($v_O$ in Fig.
\ref{buck.eps}).
As an example, buck DC-DC converters are used off-chip to scale down
the typical laptop battery voltage (12-24) to the just few volts
needed by the laptop processor (e.g. \cite{fuzzy-dc-dc-1996}) as well
as on-chip to support
\newacronym{DVFS}{DVFS}{Dynamic Voltage and Frequency Scaling}%
\gls{DVFS}
in multicore processors
(e.g. \cite{gigascale-integration-07,buck-dc-dc-at-intel-pesc2004}).
Because of its widespread use,
%
control schemas for buck DC-DC converters have been widely studied
(e.g. see
\cite{gigascale-integration-07,buck-dc-dc-at-intel-pesc2004,fuzzy-dc-dc-1996,time-optimal-dc-dc-2008}).
The typical software based approach (e.g. see \cite{fuzzy-dc-dc-1996}) is to control
the switch $u$ in Fig. \ref{buck.eps} (typically implemented with a
MOSFET) with a microcontroller.

\begin{figure}[hbt!]
  \begin{center}
    \scalebox{0.45}{\input{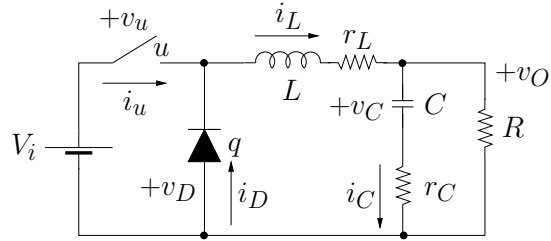}} 
    \caption{Buck DC-DC converter.}
    \label{buck.eps}
  \end{center}
\end{figure}
Designing the software to run on the microcontroller to properly
actuate the switch is the control software design problem for the buck DC-DC
converter in our context.

The circuit in Fig.~\ref{buck.eps} can be modeled as a \gls{DTLHS} ${\cal
  H}$ = ($X$, $U$, $Y$, $N$) in the following way~\cite{buck-tekrep-art-2011}.
  As for
  the sets of variables, we have 
$X$ $=$ $X^{r}$ $=$ 
$[i_L$, $v_O]$,
$U$ $=$ $U^{b}$ $=$ $[u]$,
$Y$ $=$ $Y^{r}\cup Y^{b}$ with $Y^{r}$ $=$ 
$[i_u$, $v_u$, $i_D$, $v_D]$ and $Y^{b}$ $=$ 
$[q]$.  
As for $N$, it is given by the conjunction of the following (guarded) constraints:
\begin{eqnarray}
  {i_L}' & = & (1 + Ta_{1,1})i_L + Ta_{1,2}v_O + Ta_{1,3}v_D  \label{next-il} \label{next-il.eq} \\
  {v_O}' & = & Ta_{2,1}i_L + (1 + Ta_{2,2})v_O + Ta_{2,3}v_D. \label{next-vc}\label{next-vc.eq}
\end{eqnarray}
\begin{center}
  \begin{tabular}{lcr}  
    \begin{minipage}{0.3\textwidth}
      \begin{eqnarray}
        q & \rightarrow & v_D = 0 \\
        q & \rightarrow & i_D \geq 0 \\
        u & \rightarrow & v_u = 0 
      \end{eqnarray}
    \end{minipage} &
    \begin{minipage}{0.3\textwidth}
      \begin{eqnarray}
        i_D & = & i_L - i_u  \\
        \bar{q} & \rightarrow & v_D \leq 0 \\
        \bar{q} & \rightarrow &  v_D = R_{off}i_D 
      \end{eqnarray}
    \end{minipage} &
    \begin{minipage}{0.3\textwidth}
      \begin{eqnarray}
        \bar{u} & \rightarrow & v_u = R_{off} i_u   \label{mosfet-off-eq}\label{buck-lastbutone-eq}\\
        v_D & = & v_u - V_i \label{eq:Vi}\label{Vi.eq}\label{buck-last-eq}
      \end{eqnarray}
    \end{minipage} \\ 
  \end{tabular}
\end{center}

\noindent where the coefficients $a_{i, j}$ depend on the circuit parameters $R$, $r_L$, $r_C$, $L$ and $C$ in the following way:
$a_{1,1} = -\frac{r_L}{L}$,
$a_{1,2} = -\frac{1}{L}$,
$a_{1,3} = -\frac{1}{L}$,
$a_{2,1} = \frac{R}{r_c + R}[-\frac{r_c r_L}{L} + \frac{1}{C}]$,
$a_{2,2} = \frac{-1}{r_c + R}[\frac{r_c R}{L} + \frac{1}{C}]$,
$a_{2,3} = -\frac{1}{L}\frac{r_c R}{r_c + R}$.



\section{Quantized Feedback Control}
\label{lts-ctr.tex}

In this section, we formally define the Quantized Feedback Control
Problem for \glspl{DTLHS} (Sect.~\ref{quantized-feedback-control.tex}). 
To this end, first we give the definition
of Feedback Control Problem for \glspl{LTS} (Sect.~\ref{lts-cts.tex-lts}),
and then for \glspl{DTLHS} (Sect.~\ref{ctr-dths.tex}).
Finally, we show that our definitions are well founded (Sect.~\ref{proof.mgo}).


\subsection{Feedback Control Problem for \glspl{LTS}}
\label{lts-cts.tex-lts}

We begin by extending to possibly infinite \glspl{LTS} 
the definitions in
\cite{Tro98,strong-planning-98} 
for finite \glspl{LTS}.
%
In what follows, let ${\cal S} = (S, A, T)$ be an \gls{LTS}, 
and $I, G \subseteq S$ be, respectively, 
the {\em initial} and {\em goal} regions.

\begin{definition}[\gls{LTS} control problem]
  \label{def:ctroller-lts}\label{def:ctrproblem-lts}
  A \emph{controller} for an \gls{LTS} ${\cal S}$ is a function $K : S \times A
  \to \B$ such that $\forall s \in S$, $\forall a \in A$, if $K(s,
  a)$ then $a\in{\rm Adm}({\cal S},s)$.
  We denote with $\mbox{\rm Dom}(K)$ the set of states for which a
  control action is defined. Formally, $\mbox{\rm Dom}(K)$ $=$ $\{s
  \in S \; | \; $$\exists a: K(s, a)\}.$
  ${\cal S}^{(K)}$ denotes the \emph{closed loop system}, that
  is the \gls{LTS} $(S, A, T^{(K)})$, where $T^{(K)}(s, a, s') = T(s, a, s')
  \wedge K(s, a)$.
  A \emph{control law} for a controller $K$ is a (partial) function
  $k : S \rightarrow A$ s.t. for all $s \in {\rm Dom}(K)$ we have that $K(s, k(s))$ holds.
  By abuse of language we say that a controller is a control law
  if for all $s \in S$, $a, b \in A$ it holds that $(K(x, a) \wedge K(x, b)) \rightarrow (a = b)$.
  An \emph{\gls{LTS} control problem} is a triple $({\cal S},$
  $I,$ $G)$.
\end{definition}

%
\begin{example}
\label{example-ctr.tex}
Let $S=\{-1,0,1\}$ and $A=\{0, 1\}$. 
Let ${\cal S}_0$ be the \gls{LTS} $(S, A, T_0)$, 
where the transition relation $T_0$
consists of the continuous arrows in 
Fig.~\ref{strongWeakSolutions.fig}. 
A function $K$
is a controller for ${\cal S}_0$ iff $(s \neq 0) \to (K(s, 1) = 0)$. As an example,
we have that $K$ defined as $K(s, a) = ((s \neq 0) \to (a \neq 1))$ is a controller but not
a control law, and that $k(s) = 0$ is a control law for $K$ (note that $K(s, a) = (a = 0)$ is a control law).
\end{example}

Def.~\ref{def:ctrproblem-lts} also introduces the formal definition of {\em control law},
as our model of control software,
i.e. of how function \texttt{Control\_Law} in Fig.~\ref{control-loop-figure.tex} must behave.
Namely, while a controller may enable many actions in a given state,
a control law (i.e. the final software implementation) must provide only one action.
Note that the notion of controller is important because it contains
all possible control laws.

In the following we give formal definitions of strong and weak
solutions to a control problem for an \gls{LTS}.

We call a path $\pi$ {\em fullpath} 
if either it is infinite or its last state  $\pi^{(S)}(|\pi|)$ 
has no successors (i.e. $\mbox{\rm Adm}({\cal S}, \pi^{(S)}(|\pi|)) = \varnothing$).
We denote with ${\rm Path}({\cal S}, s, a)$ the set of fullpaths of ${\cal S}$ starting in state
$s$ with action $a$, i.e. the set of fullpaths $\pi$ such that $\pi^{(S)}(0)=s$
and $\pi^{(A)}(0)=a$.

\sloppy

Given a path $\pi$ in ${\cal S}$, we define the measure
$J({\cal S},G,\pi)$ on paths as the distance of $\pi^{(S)}(0)$ to the goal on $\pi$. That is, 
if there exists $n > 0$ s.t.
$\pi^{(S)}(n)\in G$, then $J({\cal S},G,\pi)$ $=$ $\min\{n$ $|$ $n >
0 \land \pi^{(S)}(n)\in G\}$. Otherwise, $J({\cal S},G,\pi) = +\infty$.
We require $n > 0$ since
our systems are nonterminating and each controllable state (including a goal state)
must have a path of positive length to a goal state.
Taking $\sup \varnothing = +\infty$ and $\inf \varnothing = -\infty$, the {\em
worst case distance} (pessimistic view) of a state $s$ from the goal region  $G$
is  $J_{\rm strong}({\cal S},G,s)=\sup \{ J^{(S)}({\cal S},G,s, a)~|~ a \in {\rm
Adm}({\cal S},s)\}$, 
where: $J^{(S)}({\cal S},G,s,a)=\sup \{ J({\cal S},G,\pi)~|~ \pi
\in{\rm Path}({\cal S}, s,a)\}$. 
The {\em best case distance}  (optimistic view) of a
state $s$  from the goal region $G$ is  $J_{\rm weak}({\cal S},G,s)$ $=$ $\sup \{
J^{(W)}({\cal S},G,s, a)$ $|$ $a \in {\rm Adm}({\cal S},s)\}$, 
where: $J^{(W)}({\cal
S},G,s,a)$ $=$ $\inf \{ J({\cal S},G,\pi)$ $|$ $\pi \in{\rm Path}({\cal S}, s,a)\}$.
%
%

\fussy

\begin{definition}[Solution to \gls{LTS} control problem]
\label{def:sol}
Let ${\cal P}$ = $({\cal S}$, $I$, $G)$ 
be an \gls{LTS} control problem and $K$ be  
a controller  for ${\cal S}$ such that 
$I$ $\subseteq$ $\mbox{\rm Dom}(K)$.
$K$ is a {\em strong [weak] solution} to ${\cal P}$ 
if for all $s \in \mbox{\rm Dom}(K)$, 
$J_{\rm strong}({\cal S}^{(K)}, G, s)$ [$J_{\rm weak}({\cal S}^{(K)}, G, s)$]
is finite.
An \emph{optimal strong [weak] solution} to ${\cal P}$
is a strong [weak] solution $K^{*}$ to ${\cal P}$ such that 
for all strong [weak] solutions
$K$ to ${\cal P}$, for all $s \in S$ we have that
$J_{\rm strong}({\cal S}^{(K^{*})}, G, s) 
\leq J_{\rm strong}({\cal S}^{(K)}, G, s)$
[$J_{\rm weak}({\cal S}^{(K^{*})}, G, s) \leq 
J_{\rm weak}({\cal S}^{(K)}, G, s)$].
\end{definition}

Intuitively, a strong solution $K$ takes a \emph{pessimistic} view by
requiring that for each initial state, \emph{all} runs in the closed
loop system ${\cal S}^{(K)}$ reach the goal, no matter nondeterministic outcomes.
A weak solution $K$ takes an \emph{optimistic} view about nondeterminism: 
it just asks that for each action $a$ enabled in a given state $s$, 
there exists at least a path in ${\rm Path}({\cal S}^{(K)}, s,a)$ leading to the goal.
Unless otherwise stated, we say \emph{solution} for
\emph{strong solution}.

Finally, we define the \emph{most general optimal strong [weak] solution} 
to ${\cal P}$ ({\em strong [weak] mgo} in the following) as the unique strong [weak] optimal
solution to ${\cal P}$ enabling as many actions as possible (i.e., the most liberal one).
In Sect.~\ref{proof.mgo} we show that the definition of mgo 
is well posed.

\begin{example}
\label{example-strong.tex}
\sloppy
Let ${\cal S}_0, {\cal S}_1$ be the \glspl{LTS} in 
Fig.~\ref{strongWeakSolutions.fig} (see also Ex.~\ref{example-ctr.tex}).
Let ${\cal P}_0 = ({\cal S}_0, I, G)$ and 
${\cal P}_1 = ({\cal S}_1, I, G)$ be two control problems, 
where 
$I$ 
$=$ $\{-1, 0, 1\}$ and 
$G$ $=$ 
$\{0\}$. 
The controller $K(s, a) \equiv [s \neq 0 
\to a = 0]$
is a strong solution to the control problem ${\cal P}_0$. 
Observe that $K$ is not optimal.
Indeed, the controller $\tilde{K}(s, a)\equiv a=0$ is such that
$J_{\rm strong}({\cal S}^{(\tilde{K})}_0, G, 0)=1 < 2 = J_{\rm strong}({\cal S}^{(K)}_0, G, 0)$. 
The control problem ${\cal P}_1$
has no strong solution. 
As a matter of fact, to drive the system to the goal region $\{0\}$, 
any solution $K$ must enable action $0$ in states $-1$ and $1$: 
in such a case, however, we have that  
$J_{\rm strong}({\cal S}^{(K)}_1, \hat{G}, 1)$ $=$ 
$J_{\rm strong}({\cal S}^{(K)}_1, \hat{G}, -1)$ $=$ $\infty$
because of the self loops 
$(1,0,1)$ 
and $(-1,0,-1)$ of $T_1$.
Finally, note that $K$ is the weak mgo for ${\cal P}_1$ and $\tilde{K}$ is the strong mgo for ${\cal P}_0$.
\fussy
\end{example}

\begin{figure}[!t]
\centering
$
 \xymatrix@C=7mm{
      & 
	  *+=<25pt>[o][F-]{-1} 
			\ar@(dl,dr)@{.>}[]_{1}
			\ar@(ul,ur)@{.>}[]^{0}
			\ar@/^/[r]^{0}
        & *+=<25pt>[o][F=]{0} 
 			\ar@/^/[l]^{1}
			\ar@/^/[r]^{1}
			\ar@(dl,dr)[]_{0}
			\ar@(ul,ur)[]^{1}
	& *+=<25pt>[o][F-]{1} 
			\ar@(dl,dr)@{.>}[]_{1}
			\ar@(ul,ur)@{.>}[]^{0}
 			\ar@/^/[l]^{0}
	&\\
	}
$ 
\caption{\glspl{LTS} ${\cal S}_0$ (continuous arrows) and ${\cal S}_1$ (all arrows). Double circle represents the goal state.}
\label{strongWeakSolutions.fig}
\end{figure}
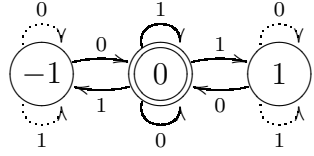

\begin{remark}
\label{stability:remark}
Note that if $K$ is a strong solution to $({\cal S}$, $I$, $G)$ and 
$G \subseteq I$ (as is usually the case in control problems) then
${\cal S}^{(K)}$ is {\em stable} from $I$ to $G$,
that is each run in ${\cal S}^{(K)}$
starting from a state in $I$ leads to a state in $G$. 
In fact, 
from Def.~\ref{def:sol} we have that each state $s \in I$ reaches 
a state $s' \in G$ in a finite number of steps. Moreover,
since $G \subseteq I$, we have that any state $s \in G$ reaches a state
$s' \in G$ in a finite number of steps. 
Thus, any path starting in $I$ 
in the closed loop system ${\cal S}^{(K)}$
\emph{touches} $G$ an infinite number of times ({\em liveness}).
\end{remark}



\subsection{Feedback Control Problem for \glspl{DTLHS}}
\label{ctr-dths.tex}

A control problem for a \gls{DTLHS} ${\cal H}$ is the \gls{LTS}
control problem induced by the dynamics of ${\cal H}$.
For \glspl{DTLHS}, we only consider control problems
where $I$ and $G$ can be represented as 
predicates over present state variables of ${\cal H}$.

\begin{definition}[\gls{DTLHS} control problem]
\label{def:DTLHScp}
Given a \gls{DTLHS} ${\cal H} =(X, U, Y, N)$ and predicates $I$ and $G$ over $X$, the
{\em \gls{DTLHS} (feedback) control problem} $({\cal H}, I, G)$ is the \gls{LTS} control
problem $(LTS({\cal H}), I, G)$. Thus, a controller $K: {\cal D}_X \times {\cal
D}_U \to \B$ is a {\em strong [weak] solution} to $({\cal H}, I, G)$ iff it is a
strong [weak] solution to $(LTS({\cal H}), I, G)$.
\end{definition}
 
For \gls{DTLHS} control problems, usually \emph{robust} controllers 
are desired. That is, controllers that, notwithstanding 
nondeterminism in the plant (e.g. due to parameter variations, see Ex.~\ref{ex:dths}), 
drive the plant state to the goal region. 
For this reason we focus on strong solutions.

Observe that the feedback controller for a \gls{DTLHS} will only measure
present state variables (e.g., output voltage and inductor current
in Sect.~\ref{example-buck.tex}) and will not measure auxiliary
variables (e.g. diode state in Sect.~\ref{example-buck.tex}).

\begin{example}
\label{example-goal}
The typical goal of a controller for the buck DC-DC converter in
Sect.~\ref{example-buck.tex} is keeping 
the output voltage $v_{O}$ \emph{close enough} to a  given reference value 
$V_{\rm ref}$.
This leads to the \gls{DTLHS} control problem 
${\cal P}$ = $({\cal H}$, $I$, $G)$
where 
${\cal H}$ is defined in Sect.~\ref{example-buck.tex},
$I \equiv$
$(|i_L| \leq 2)$ $\wedge$  
$(0 \leq v_O \leq 6.5)$,   
$G \equiv$
$(|v_O - V_{\rm ref}| \leq \theta)$
$\wedge$
$(|i_L| \leq 2)$, 
and $\theta = 0.01$ 
is the desired buck precision.
\end{example}

\subsection{Quantized Feedback Control Problem}
\label{quantized-feedback-control.tex}

\sloppy
Software running on a microcontroller ({\em control software} in the
following)
cannot handle real values. 
For this reason real valued state feedback from plant sensors
undergoes an 
\glsfirst{AD}
conversion before being sent to the
control software.
This process is called \emph{quantization} 
(e.g. see \cite{quantized-ctr-tac05} and citations thereof).
A 
\glsfirst{DA}
conversion is needed to transform
the control software digital output into real values to be sent 
to plant actuators.
%
%
In the following, we formally define quantized solutions to a 
\gls{DTLHS} feedback control problem.

\fussy

\begin{definition}[Quantization function]
\label{def:quantization-function}
A {\em quantization function} $\gamma$
for a real interval $I=[a,b]$ is a non-decreasing   
function $\gamma:[a,b]\to \hat{I}$, 
where $\hat{I}$ is a bounded integer interval $[\gamma(a),\gamma(b)]\subseteq\Z$.
The \emph{quantization step} of $\gamma$, denoted by $\|\gamma\|$, 
is defined as $\sup\{ \; |w-z|
  \; | \; w,z \in I \land \gamma(w)=\gamma(z)\}$. 
\end{definition}


%

For ease of notation, we extend quantizations to integer intervals, by
stipulating that in such a case the quantization function is the identity
function (i.e. $\gamma(x) = x$). Note that, with this convention, the
quantization step on an integer interval is always $0$.



\begin{definition}[Quantization for \glspl{DTLHS}]
\label{def:quantization}
  Let ${\cal H} = (X, U, Y, N)$ be a \gls{DTLHS}, and 
  let $W=X\cup U$. 
  A \emph{quantization} ${\cal Q}$ for $\cal H$
  is a pair $(A, \Gamma)$, where:
\begin{itemize}
\item $A$ is a predicate of form $\wedge_{w \in W} (a_w \leq w \leq b_w)$ with $a_w, b_w \in {\cal D}_w$.
  For each $w\in W$, we define $A_w = \{v \in {\cal D}_w\;|\; a_w \leq v \leq b_w\}$
  as the 
  {\em admissible region} for variable $w$. Moreover, we define $A_V=\prod_{v\in V} A_v$, with $V \subseteq W$, as the 
  admissible region for variables in $V$.
  
  \item 
  $\Gamma$ is a set of maps $\Gamma = \{\gamma_w$ $|$
  $w \in W$ and $\gamma_w$ is a  
  quantization function for $A_w\}$. 
  \end{itemize}
  Let $V = [w_1, \ldots, w_k]$ 
  and $v = [v_1, \ldots, v_k] \in A_{V}$, where $V \subseteq W$. 
  We write $\Gamma(v)$ (or $\hat{v}$) for the tuple $[\gamma_{w_1}(v_1),$ 
  $\ldots,$ $\gamma_{w_k}(v_k)]$ and $\Gamma^{-1}(\hat{v})$ for the set $\{v \in A_V$ $|$ $\Gamma(v) = \hat{v}\}$.
  Finally, the \emph{quantization step} $\|\Gamma\|$ for $\Gamma$ is
  defined as $\sup \{ \; \|\gamma\| \; | \; \gamma \in
  \Gamma \}$.
\end{definition}

For ease of notation, in
the following we will also consider quantizations for primed variables $x' \in
X'$, by stipulating that $\gamma_{x'} \equiv \gamma_x$.


\begin{example}
\label{ex:quantization} 
Let ${\cal H}$  be the \gls{DTLHS} described in Ex.~\ref{ex:dths}.
Let us consider the quantization ${\cal Q}=(A,\Gamma)$,
where $A \equiv$
$-2.5\leq x\leq 2.5\land 0\leq u\leq 1$.
$A$ defines the admissible region $A_x=A_X=[-2.5,2.5]$.
Let $\Gamma=\{\gamma_x,\gamma_u\}$, with $\gamma_x(x)=round(x/2)$ 
(where $round(x) = \lfloor x \rfloor + 
\lfloor 2(x - \lfloor x\rfloor)\rfloor$ 
is the usual rounding function) and
$\gamma_u(u) = u$. Note that $\gamma_x(x) = -1$ for all $x \in [-2.5, -1]$,
$\gamma_x(x) = 0$ for all $x \in (-1, 1)$ and $\gamma_x(x) = 1$ for all $x
\in [1, 2.5]$.
Thus, we have that $\Gamma(A_x)=\{-1,0,1\}$, $\Gamma(A_u)=\{0,1\}$ 
and $\|\Gamma\|$ = 1.
\end{example}

Quantization, i.e. representing reals with integers, unavoidably introduces errors in
reading real-valued plant sensors in the control software. We address this
problem in the following way. First, 
we introduce the definition
of $\varepsilon$-solution. 
Essentially, we require that the
controller drives the plant ``near enough'' (up to a given error $\varepsilon$) to the goal region $G$.

\begin{definition}[$\varepsilon$-relaxation of a set]
\label{def:relaxation}
Let $\varepsilon \geq 0$ be a real number and
$W\subseteq \R^n\times \Z^m$. 
The $\varepsilon$-\emph{relaxation} of $W$ is the set
(\emph{ball} of radius $\varepsilon$)
${\cal B}_{\varepsilon}(W)$
= $\{(z_1, \ldots z_n$, $q_1, \ldots q_m)$
$|$ $\exists (x_1$, $ \ldots$, $ x_n$, $ q_1$, $ \ldots, q_m):
(x_1$, $ \ldots$, $ x_n$, $ q_1$, $ \ldots$, $ q_m) \in W$ and $\forall i \in \{1, \ldots n\} \; 
|z_i - x_i| \leq \varepsilon\}$.
\end{definition}

\begin{definition}[$\varepsilon$-solution to \gls{DTLHS} control problem]
\label{def:epsilon-solution}
Let ${\cal P}=({\cal H}, I, G)$ be a \gls{DTLHS} control problem 
and let $\varepsilon>0$ be a real number.
A {\em strong [weak] $\varepsilon$-solution} to ${\cal P}$ is a strong [weak]
solution to the \gls{LTS} control problem $(LTS({\cal H}), I, {\cal B}_\varepsilon(G))$.

\end{definition} 

\begin{example}
\label{ex:ctr-dths}
Let ${\cal H}$  be the \gls{DTLHS} described in Ex.~\ref{ex:dths}.
We consider the control problem defined by the
initial region $I=[-2.5,2.5]$ and the goal region $G=\{0\}$ 
(represented by the predicate $x = 0$).
The \gls{DTLHS} control problem ${\cal P}$ $=$ $({\cal H}, I, G)$ 
has no solution (because of the Zeno phenomenon), 
but for all $\varepsilon>0$ it has
the $\varepsilon$-solution $K$ such that $\forall x \in I. \; K(x,0)$.
\end{example}

Second, we introduce the definition of \emph{quantized solution} to a  \gls{DTLHS}
control problem for a given quantization ${\cal Q} = (A, \Gamma)$. Essentially,
a quantized solution models the fact that  in an SBCS control decisions are
taken by the control software by just looking at quantized state values. Despite
this, a quantized solution guarantees that each \gls{DTLHS} initial state reaches a
\gls{DTLHS} goal state (up to an error at most $\|\Gamma\|$).

\sloppy

\begin{definition}[\glsdesc{QFC} solution to \gls{DTLHS} control problem]
  \label{def:qfc}
  Let ${\cal H} = (X, U, Y, N)$ be a \gls{DTLHS},
  ${\cal Q}=(A,\Gamma)$ be a quantization for ${\cal H}$
  and ${\cal P} = ({\cal H}, I, G)$ be a \gls{DTLHS} control problem.
  A ${\cal Q}$ 
  \gls{QFC}
  \emph{strong [weak] solution}
  to ${\cal P}$ is a
  strong [weak] $\|\Gamma\|$-solution $K : {\cal D}_X \times {\cal D}_U \to \B$ 
  to ${\cal P}$ such that 
  $K(x, u) = 0$ if $(x, u) \notin A_X \times A_U$, and otherwise
  $K(x, u) = \hat{K}(\Gamma(x), \Gamma(u))$
  where 
  $\hat{K} : \Gamma(A_{X}) \times \Gamma(A_{U})$ $\rightarrow$ $\B$.
%
%
%
\end{definition}

\fussy

Note that a ${\cal Q}$ \gls{QFC} solution to a \gls{DTLHS} control problem does
not work outside the admissible region defined by ${\cal Q}$. This models the
fact that controllers for real-world systems must maintain the plant inside
given bounds (such requirements are part of the safety specifications). In the
following, we will define ${\cal Q}$ \gls{QFC} solutions by only specifying their
behavior inside the admissible region.

\begin{example}
\label{ex:gamma-qfc-sol} 
Let ${\cal P}$ be the \gls{DTLHS} control problem 
defined in Ex.~\ref{ex:ctr-dths} and ${\cal Q}=(A,\Gamma)$
be the quantization defined in Ex.~\ref{ex:quantization}.
Let $\hat{K}$ be defined by
$\hat{K}(\hat{x},\hat{u})\equiv [\hat{x}\not=0 \to \hat{u}=0]$.  
For any $\varepsilon>0$, the quantized controller $K(x, u) = 
\hat{K}(\Gamma(x),\Gamma(u))$ is an $\varepsilon$-solution
to ${\cal P}$,  
and hence it is a ${\cal Q}$ \gls{QFC} solution.
\end{example}

\sloppy

Along the same lines of similar undecidability
proofs~\cite{decidability-hybrid-automata-jcss98,AgrawalSTY06}, it is possible to show that
existence of a ${\cal Q}$ \gls{QFC} solution to a \gls{DTLHS} control problem
({\em \gls{DTLHS} quantized control problem}) is undecidable, as shown
in~\cite{ictac2012}.

\fussy

\begin{theorem} \label{th:undecidability}
The \gls{DTLHS} quantized control problem is undecidable.
\end{theorem}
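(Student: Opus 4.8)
The plan is to prove undecidability by reducing a known undecidable problem — namely the halting problem for two-counter (Minsky) machines — to the \gls{DTLHS} quantized control problem. The overall strategy, following the cited approaches~\cite{decidability-hybrid-automata-jcss98,AgrawalSTY06}, is to encode the configuration and computation steps of an arbitrary two-counter machine $M$ into a \gls{DTLHS} ${\cal H}_M$ together with a quantization ${\cal Q}$ and a control problem $({\cal H}_M, I, G)$, in such a way that a ${\cal Q}$ \gls{QFC} solution exists if and only if $M$ halts (or, dually, fails to halt) on the empty input. Since the halting problem is undecidable and the reduction is computable, no algorithm can decide existence of a ${\cal Q}$ \gls{QFC} solution.

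First I would set up the encoding of a machine configuration. A configuration of $M$ consists of a control state $q \in \{1, \ldots, m\}$ and two nonnegative counter values $c_1, c_2$. The control state is naturally stored in a bounded discrete (or boolean-encoded) state variable, while the counters are the delicate part: because counters are unbounded but each variable in a bounded \gls{DTLHS} ranges over a bounded interval, I would encode the two counters into the fractional parts of one or two real-valued state variables, using a scheme such as $x = \sum_i (c_i + \tfrac12) b^{-n_i}$ for a suitable base $b$, so that the unbounded counter values live inside a bounded real interval. The transition predicate $N$ then implements the counter-machine instructions (increment, decrement, zero-test, jump) as linear relations on these real and discrete variables; the zero-test instruction — which is what makes counter machines powerful — is expressed using the discrete control-state variable together with linear constraints that branch on whether the relevant fractional component equals its ``empty'' value. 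Crucially, each atomic instruction must be realizable as a conjunction of linear constraints and guarded constraints (Sect.~\ref{subsection:predicates}), possibly using boolean auxiliary variables $Y^b$ as guards to select between instruction branches.

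Next I would choose the quantization ${\cal Q} = (A, \Gamma)$ and the control input $U$ so that the \emph{quantized} feedback control structure forces the computation forward deterministically. The key design point is that the quantization step $\|\Gamma\|$ must be fine enough that the quantized controller can still distinguish the configurations it needs to drive the simulation, yet the control law — which sees only $\Gamma(x)$ — is exactly what schedules the next instruction of $M$. I would define the goal region $G$ to correspond to $M$ reaching its halting state, and the initial region $I$ to correspond to the initial configuration (empty counters, start state). Then existence of a strong $\|\Gamma\|$-solution of the quantized form demanded by Def.~\ref{def:qfc} must be argued to coincide with $M$ halting: the controller can drive $I$ into (a $\|\Gamma\|$-ball of) $G$ precisely when the machine's computation reaches the halt state in finitely many steps, because each closed-loop step simulates one machine instruction and the finiteness of the distance measure $J_{\rm strong}$ mirrors termination of $M$.

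\textbf{The main obstacle} I expect is reconciling the \emph{quantized} nature of the controller with the \emph{exactness} required to simulate counter operations faithfully. A quantized controller only observes $\Gamma(x)$, so all infinitely many real states in one quantization cell $\Gamma^{-1}(\hat x)$ must be handled by a single control action; I must therefore ensure the counter encoding and the admissible region are arranged so that the quantization cells align with genuine machine configurations and the $\|\Gamma\|$-relaxation of the goal does not accidentally create spurious solutions (e.g. by letting a non-halting computation land $\varepsilon$-close to $G$). Controlling this error — showing the simulation is both sound and complete under the $\|\Gamma\|$-ball relaxation of Def.~\ref{def:epsilon-solution}, and that nondeterminism in $N$ (if any is introduced for the branching) does not give the weak/strong solution ``shortcuts'' around the intended computation — is the technically demanding heart of the proof, and is where I would invest the most care.
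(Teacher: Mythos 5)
The paper does not actually prove Theorem~\ref{th:undecidability} in the text: it defers entirely to~\cite{ictac2012}, saying only that the argument runs ``along the same lines'' as the counter-machine reductions of~\cite{decidability-hybrid-automata-jcss98,AgrawalSTY06}. Your overall strategy --- a computable reduction from the halting problem for two-counter Minsky machines, with the unbounded counters packed into the fractional part of a bounded real-valued state variable --- is therefore exactly the route the paper points to, so at the level of strategy you are aligned with the intended proof.

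As written, however, your proposal has a genuine gap, and it is precisely the one you yourself flag as ``the main obstacle'' without resolving. You propose that the control law, which observes only $\Gamma(x)$, is ``exactly what schedules the next instruction of $M$.'' But with the counters encoded in the fractional part of a real variable, infinitely many distinct machine configurations necessarily share a single quantization cell, so by Def.~\ref{def:qfc} the quantized controller must issue the same action on all of $\Gamma^{-1}(\hat{x})$; it cannot perform the zero test or select instruction branches, and the simulation driven by the controller breaks down. The standard way to make such a reduction go through --- and the reason quantization does not obstruct it --- is to place the \emph{entire} (deterministic) simulation of $M$ inside the plant's transition relation $N$, using the discrete and auxiliary variables for the finite control state and the instruction dispatch, and to make the input essentially trivial (a single action, or actions with no effect on the dynamics). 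Existence of a ${\cal Q}$ \gls{QFC} solution then collapses to the question of whether the unique run from the initial configuration reaches a suitably isolated $\|\Gamma\|$-ball around $G$, i.e.\ whether $M$ halts, and the controller's inability to distinguish states within a cell is irrelevant because it has nothing to decide. Without this device (or an equivalent one), the equivalence ``a ${\cal Q}$ \gls{QFC} solution exists iff $M$ halts'' that your reduction requires does not hold.
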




\subsection{Proof of Uniqueness of the Most General Optimal Controller}
\label{proof.mgo}

In this section, we prove properties on mgo (see Sect.~\ref{lts-cts.tex-lts}).
This section can be skipped at a first reading.
We begin by giving the formal definition of strong and weak mgo.

\sloppy
\begin{definition}[Most general optimal solution to control problem]
\label{def:mgo}
The  \emph{most general optimal strong [weak] solution} 
to ${\cal P}$ 
is an optimal strong
[weak] solution 
$\tilde{K}$ to ${\cal P}$ such that  
for all other optimal strong [weak] solutions $K$ to ${\cal P}$,
for all $s \in S$,
for all $a \in A$
we have that $K(s, a)$ $\to$ $\tilde{K}(s, a)$. 
\end{definition}
\fussy

\begin{proposition}\label{prop:mgo}
  An \gls{LTS} control problem $({\cal S}, \varnothing, G)$ 
  has always an {\em unique} strong mgo $K^*$.
  Moreover, for all $I\subseteq S$, we have:

\begin{itemize}  
  \item if $I\subseteq {\rm Dom}(K^*)$, then $K^*$ is the unique strong mgo for
  the control problem $({\cal S}, I, G)$;
  
  \item if $I\not\subseteq {\rm Dom}(K^*)$, then the control problem $({\cal S}, I, G)$ has 	
no strong solution.
\end{itemize}  
\end{proposition}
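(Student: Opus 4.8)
The plan is to prove everything by an explicit least-fixpoint (value iteration) construction of the optimal value function together with a maximally liberal candidate controller, and then to reduce both $I$-cases to the case $I=\varnothing$. First I would introduce the controllable-predecessor operator $\mathrm{Pre}(Z) = \{ s \in S \mid \exists a \in \mbox{\rm Adm}({\cal S}, s):\ \mbox{\rm Img}({\cal S}, s, a) \subseteq Z\}$ and iterate it from the goal: $C_0 = \varnothing$, $C_{n+1} = \mathrm{Pre}(G \cup C_n)$, $C^* = \bigcup_{n \geq 0} C_n$. Monotonicity of $\mathrm{Pre}$ gives $C_n \subseteq C_{n+1}$, so $D(s) = \min\{n \mid s \in C_n\}$ (with $\min\varnothing = +\infty$) is well defined and satisfies $D(s)\geq 1$ on $C^*$. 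The candidate mgo is $K^*$ defined by $K^*(s,a)=1$ iff $s \in C^*$, $a \in \mbox{\rm Adm}({\cal S}, s)$ and $\mbox{\rm Img}({\cal S}, s, a) \subseteq G \cup C_{D(s)-1}$; note $\mbox{\rm Dom}(K^*) = C^*$, since membership of $s$ in $C_{D(s)} = \mathrm{Pre}(G\cup C_{D(s)-1})$ supplies at least one such action.

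Second, I would establish the two one-step (Bellman) correspondences pinning $D$ down as the optimal value. (i) For every strong solution $K$ and every $s \in \mbox{\rm Dom}(K)$, an induction on the finite value $m = J_{\rm strong}({\cal S}^{(K)}, G, s)$ shows $s \in C_m$, hence $D(s) \leq J_{\rm strong}({\cal S}^{(K)}, G, s)$; the inductive step picks any enabled action $a$ and observes that every $s' \in \mbox{\rm Img}({\cal S}, s, a)$ either lies in $G$ or has closed-loop value $\leq m-1$, hence $s' \in C_{m-1}$, so $\mbox{\rm Img}({\cal S}, s, a) \subseteq G \cup C_{m-1}$. (ii) A symmetric induction on $D(s)$ shows $J_{\rm strong}({\cal S}^{(K^*)}, G, s) = D(s)$ on $C^*$, so $K^*$ is a strong solution to $({\cal S}, \varnothing, G)$ whose value is everywhere minimal, i.e. an optimal strong solution. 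Combining (i)--(ii): if $K$ is any optimal strong solution then $J_{\rm strong}({\cal S}^{(K)}, G, s) = D(s)$ for all $s$, and whenever $K(s,a)$ holds the one-step analysis forces $\mbox{\rm Img}({\cal S}, s, a) \subseteq G \cup C_{D(s)-1}$, i.e. $K(s,a) \to K^*(s,a)$. Thus $K^*$ dominates every optimal solution and so is a strong mgo; uniqueness is then formal, since two mgos dominate each other and hence coincide.

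Third, I would dispatch the statements about arbitrary $I$ using the observation that $K$ is a strong solution to $({\cal S}, I, G)$ iff it is a strong solution to $({\cal S}, \varnothing, G)$ with $I \subseteq \mbox{\rm Dom}(K)$, since the finiteness condition defining a strong solution does not mention $I$. If $I \subseteq \mbox{\rm Dom}(K^*) = C^*$, then $K^*$ is itself such a solution, and by (i)--(ii) every optimal strong solution of $({\cal S}, I, G)$ attains value $D$ everywhere, is therefore optimal for $({\cal S}, \varnothing, G)$ and is dominated by $K^*$; so $K^*$ is the unique strong mgo for $({\cal S}, I, G)$. If instead $I \not\subseteq C^*$, any strong solution $K$ would satisfy $I \subseteq \mbox{\rm Dom}(K) \subseteq C^*$ by (i), a contradiction, so no strong solution exists.

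The main obstacle is the one-step Bellman correspondence underlying (i) and (ii): it must be carried out carefully against the \emph{first positive hitting time} semantics of $J({\cal S}, G, \pi)$, treating goal successors (which contribute distance $1$, harmless since $D\geq 1$), dead-end successors (which, if outside $G$, force value $+\infty$ and so cannot be enabled by any strong solution), and the conventions $\sup\varnothing = +\infty$. What keeps this elementary is that worst-case distances are natural numbers, so finiteness of $J_{\rm strong}$ places a state in some $C_m$ with finite $m$: no transfinite iteration is needed, and $C^* = \bigcup_{n} C_n$ is exactly the controllable region.
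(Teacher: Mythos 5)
Your proof is correct and follows the same overall strategy as the paper's: a least-fixpoint stratification of the controllable region by worst-case distance to $G$ (your $C_n$ playing the role of the paper's $D_n$, your $D(s)$ the role of the index $n$ with $s\in F_n$), with the mgo enabling exactly the admissible actions whose image lands in the next-lower stratum. Two differences are worth recording. First, your iteration targets $G\cup C_n$, i.e.\ $C_{n+1}={\rm Pre}(G\cup C_n)$, whereas the paper's $F_{n+1}$ requires ${\rm Img}({\cal S},s,a)\subseteq D_n$ with $G$ not folded into the target; your version correctly admits an action whose successors are split between states already in $G$ (possibly terminal, hence never entering any $D_n$) and states controllable in fewer steps --- a case the paper's recurrence, and its optimality step asserting $J_{\rm strong}({\cal S}^{(K)},G,s)\leq n-1$ for \emph{every} successor $s$ including goal successors, silently glosses over. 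Second, you obtain optimality and most-generality from a two-sided Bellman induction ($D(s)\leq J_{\rm strong}({\cal S}^{(K)},G,s)$ for every strong solution $K$, and $J_{\rm strong}({\cal S}^{(K^*)},G,s)=D(s)$ on $C^*$), where the paper argues by a minimal-counterexample contradiction; the arguments are interchangeable, but yours makes the reduction of both $I$-cases to $I=\varnothing$ purely mechanical (the paper leaves those two bullets implicit), and your explicit treatment of dead-end non-goal successors and of the $\sup\varnothing=+\infty$ convention closes the small semantic gaps the paper's write-up skips.
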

\begin{proof}
Let ${\cal S}$ $=$ $(S, A, T)$ be an \gls{LTS}, and let $({\cal S}, I,G)$
be an \gls{LTS} control problem.
We define the sequences of sets $D_n$ and $F_n$ as follows:

\begin{itemize}
\item $D_0  =  \varnothing$
\item $F_1  =  \{ s\in S ~ | ~ \exists a\in A: a \in {\rm Adm}({\cal S}, s) \land {\rm Img}({\cal S}, s, a)\subseteq G\}$
\item $F_{n+1}  =  \{ s\in S\setminus D_n ~ | ~ \exists a\in A: a \in {\rm Adm}({\cal S}, s) \land {\rm Img}({\cal S}, s, a)\subseteq D_n\}$
\item $D_{n+1}  =  D_n\cup F_{n+1}$
\end{itemize}

Intuitively, $D_n$ is the set of states which can be driven inside $G$ in at most
$n$ steps, notwithstanding nondeterminism. $F_n$ is the subset of $D_n$
containing only those states for which at least a path to $G$ of length exactly $n$
exists.

The following properties hold for $D_n$ and $F_n$:
\begin{enumerate}

\item \label{proof.mgo.f_empty.item}  If $F_n=\varnothing$ for some $n\geq 1$,  then
for all $m\geq n$, $F_m=\varnothing$.  In fact, if $F_n=\varnothing$, then
$D_{n}=D_{n - 1}$, and hence $F_{n+1}=F_n=\varnothing$.

\item \label{proof.mgo.f_empty.itembis}  If $D_{n+1}=D_{n}$ for some $n\geq 0$,  then
for all $m\geq n$, $D_{m}=D_{n}$.  This immediately follows from the previous
point~\ref{proof.mgo.f_empty.item}.

%

\item \label{proof.mgo.d_cup_f.item}  $D_n=\bigcup_{1\leq j\leq n} F_j$ for $n \geq 1$
(also for $n \geq 0$ if we take the union of no sets to be $\varnothing$). We
prove this property by induction on $n$. As for the induction base, we have that
$D_1 = F_1$. As for the inductive step,  $D_{n+1}=D_n\cup F_{n+1} =
\bigcup_{1\leq j\leq n} F_j\cup F_{n+1}=\bigcup_{1\leq j\leq n+1} F_j$.

\item \label{proof.mgo.f_disj.item}  $F_i\cap F_j=\varnothing$ for all $i\not=j$. We
have that if $s\in F_{n+1}$ then $s\not\in D_n$. By previous
point~\ref{proof.mgo.d_cup_f.item}, we have that $s\not\in D_n$ implies $s\not\in F_j$
for $1 \leq j \leq n$. Hence, $s\in F_{n+1}$  implies that $s\not\in F_j$ for
all $1\leq j\leq n$. If by absurd a state $s$ exists s.t. $s \in F_i\cap F_j$
for some $i > j$, then $s \in F_i$ would imply $s \notin F_j$.
\end{enumerate}

For all $s\in S$ and $a\in A$, we define the controller  $\tilde{K} : S \times A
\to \B$ as follows: $ \tilde{K}(s,a)  \Leftrightarrow   (\exists n>1: s\in F_n
\wedge a \in {\rm Adm}({\cal S}, s) \land {\rm Img}({\cal S},s,a)\subseteq
D_{n-1}) \lor (s\in F_1 \wedge a \in {\rm Adm}({\cal S}, s) \land {\rm
Img}({\cal S},s,a)\subseteq G)  $.

Note that ${\rm Dom}(\tilde{K}) = \overline{D}=\bigsqcup_{n\in \N} D_n$, i.e.
the domain of $\tilde{K}$ is the least upper bound for sets $D_n$ (we
are not supposing $S$ to be finite, thus there may be a nonempty $D_n$ for any
$n \in \N$).

\sloppy

$\tilde{K}$ is a strong solution to $({\cal S}, \varnothing, G)$. To prove this,
we show that,  if $t\in F_n$, then $J_{\rm strong}({\cal S}^{(\tilde{K})}, G,
t)=n$ (note that $t \in {\rm Dom}(\tilde{K})$ implies $t\in F_n$ for some $n\geq
1$). In fact, if $t \in F_1$ then $J_{\rm strong}({\cal
S}^{(\tilde{K})},G,t)=\sup \{ J^{(S)}({\cal S}^{(\tilde{K})},G,t, a)$ $|$ $a \in
{\rm Adm}({\cal S}^{(\tilde{K})},t)\}=\sup \{ J^{(S)}({\cal
S}^{(\tilde{K})},G,t, a)$ $|$ $a$ is s.t. $\varnothing \neq {\rm Img}({\cal
S},t,a)\subseteq G\}=\sup \{\sup \{ J({\cal S}^{(\tilde{K})},G,\pi)$ $|$ $\pi
\in{\rm Path}({\cal S}^{(\tilde{K})},t,a)\}$ $|$ $a$ is s.t. $\varnothing \neq
{\rm Img}({\cal S},t,a)\subseteq G\}$ $=$ $\sup \{ J({\cal
S}^{(\tilde{K})},G,\pi)$ $|$ $\pi \in\{\pi \in {\rm Path}({\cal
S}^{(\tilde{K})},t,a)$ $|$ $a$ is s.t. $\varnothing \neq {\rm Img}({\cal
S},t,a)\subseteq G\}\}$ $=$ $\sup \{ \min\{n$ $|$ $n > 0 \land \pi^{(S)}(n)\in
G\}$ $|$ $\pi \in\{\pi \in {\rm Path}({\cal S}^{(\tilde{K})},t,a)$ $|$ $a$ is
s.t. $\varnothing \neq {\rm Img}({\cal S},t,a)\subseteq G\}\}$. Since for all
$\pi \in\{\pi \in {\rm Path}({\cal S}^{(\tilde{K})},t,a)$ $|$ $a$ is s.t.
$\varnothing \neq {\rm Img}({\cal S},t,a)\subseteq G\}$ we have that
$\pi^{(S)}(1) \in G$, we finally have that $J_{\rm strong}({\cal
S}^{(\tilde{K})},G,t) = \sup \{1\} = 1$. On the other hand, if $t \in F_n$ then
$J_{\rm strong}({\cal S}^{(\tilde{K})},G,t)=\sup \{ \min\{n$ $|$ $n > 0 \land
\pi^{(S)}(n)\in G\}$ $|$ $\pi \in\{\pi \in {\rm Path}({\cal
S}^{(\tilde{K})},t,a)$ $|$ $a$ is s.t. $\varnothing \neq {\rm Img}({\cal
S},t,a)\subseteq D_{n - 1}\}\} = \sup\{n_1, \ldots, n_j, \ldots\}$. We have
that, for all $j$, $n_j \leq n$. In fact, being $t \in F_n$ and $a$ s.t.
$\varnothing \neq {\rm Img}({\cal S},t,a)\subseteq D_{n - 1}$, we have that
$\pi^{(S)}(1) \in D_{n - 1}$ for all paths $\pi \in {\rm Path}({\cal
S}^{(\tilde{K})},t,a)$. This implies that $\pi^{(S)}(1) \in D_{n - 2} \lor
\pi^{(S)}(1) \in F_{n - 1}$. By property~\ref{proof.mgo.d_cup_f.item} above,
this implies that there exists $1 \leq i \leq n - 1$ s.t. $\pi^{(S)}(1) \in
F_i$. By iterating $n - 1$ times such a reasoning, we obtain that there exists
$1 \leq i \leq n$ s.t. $\pi^{(S)}(i) \in G$, which implies $n_j \leq n$ for all
$j$. Moreover, there exists a path $\pi \in {\rm Path}({\cal
S}^{(\tilde{K})},t,a)$ s.t. $\pi^{(S)}(n) \in G$ and for all $0 < i < n$ we have
that $\pi^{(S)}(i) \notin G$. Suppose by absurd that for all paths $\pi \in {\rm
Path}({\cal S}^{(\tilde{K})},t,a)$ we have that, if for all $0 < i < n$
$\pi^{(S)}(i) \notin G$, then $\pi^{(S)}(n) \notin G$. By using an iterative
reasoning as above, it is possible to show that this contradicts $t$ being in
$F_n$ and $a$ being s.t. $\varnothing \neq {\rm Img}({\cal S},t,a)\subseteq D_{n
- 1}$. Thus, being $n_j \leq n$ for all $j$ and existing a $j$ s.t. $n_j = n$,
we have that $J_{\rm strong}({\cal S}^{(\tilde{K})},G,t)=\sup\{n_1, \ldots, n_j,
\ldots\} = n$.


Note that also the converse holds, i.e. $J_{\rm strong}({\cal S}^{(\tilde{K})},
G, t)=n$ implies $t\in F_n$. This can be proved analogously to the reasoning
above.

To prove that $\tilde{K}$ is optimal, let us suppose that there exists another
solution $K$ and that there exists a nonempty set $Z$ of states,  such that for
all $z\in Z$,    $J_{\rm strong}({\cal S}^{(\tilde{K})}, G, z)>  J_{\rm
strong}({\cal S}^{(K)}, G, z)$. Let $z_0\in Z$ be a state for which   $J_{\rm
strong}({\cal S}^{(K)}, G, z_0)=n$ is minimal in $Z$, and let $a\in A$ be such
that $K(z_0,a)$.

We have that $n = 1$ implies that  ${\rm Img}({\cal S},z_0,a)\subseteq G$.  But
in such a case, $z_0$ would belong to $F_1$, and hence $J_{\rm strong}({\cal
S}^{(\tilde{K})}, G, z_0)=1=J_{\rm strong}({\cal S}^{(K)}, G, z_0)$. 

If $n>1$, for all $s\in {\rm Img}({\cal S},z_0,a)$, we have that $J_{\rm
strong}({\cal S}^{(K)}, G, s)\leq n-1$.  Since $n$ is the minimal distance for
which $J_{\rm strong}({\cal S}^{(\tilde{K})}, G, z)>  J_{\rm strong}({\cal
S}^{(K)}, G, z) = n$, we have that for all $s\in {\rm Img}({\cal S},z_0,a)$, $J_{\rm
strong}({\cal S}^{(\tilde{K})}, G, s)\leq  J_{\rm strong}({\cal S}^{(K)}, G, s)
\leq n - 1$.  This implies that, $J_{\rm strong}({\cal S}^{(\tilde{K})}, G,
z_0)\leq n$, which is absurd. 

\fussy

To prove that $\tilde{K}$ is the most general optimal solution,  we proceed in a
similar way.  Let us suppose that there exists another optimal solution $K$ and
that  there exists a nonempty set $Z$ of states,  such that for all $z\in Z$
there exists an action   $a$ s.t. $K(z,a)$ and $\neg \tilde{K}(z,a)$ holds. Let
$z_0\in Z$ be a state for which   $J_{\rm strong}({\cal S}^{(K)}, G, z_0)=n$ is
minimal in $Z$.

If $n=1$ we have that ${\rm Img}({\cal S},z_0,a)\subseteq G$ and thus  $z_0\in
F_1$ and $\tilde{K}(z_0, a)$, which leads to a contradiction.  

If $n>1$, by minimality of $J_{\rm strong}({\cal S}^{(K)}, G, z_0)$ in $Z$ we
have that, for all $s\in{\rm Img}({\cal S},z_0,a)$, $K(s,u)$ implies
$\tilde{K}(s,u)$. This implies that ${\rm Img}({\cal S},z_0,a)\in D_{n-1}$
and thus $\tilde{K}(z_0, a)$ holds.
\end{proof}


\section{Control Abstraction}
\label{ctr-abs.tex}

\sloppy

A quantization naturally induces an abstraction of a \gls{DTLHS}. 
Motivated by finding \gls{QFC} solutions in the abstract model, 
in this paper we introduce a novel notion of abstraction, 
namely \emph{control abstraction}.
In what follows we introduce the notion of control abstraction. 
In Sect.~\ref{sec:minMaxCtrAbs} we discuss on minimum and maximum control abstractions.
In Sect.~\ref{corol:min-ctr-abs.proof} we give some properties on control abstractions.

\fussy

Control abstraction (Def. \ref{def:ctr-abs}) models how a \gls{DTLHS} 
${\cal H}$ is \emph{seen} from the control software after \gls{AD}
conversions.  Since \gls{QFC} control rests on \gls{AD} conversion we must be
careful not to drive the plant outside the bounds in which \gls{AD}
conversion works correctly. This leads to the definition of
\emph{admissible action} (Def. \ref{def:safe-action}). 
Intuitively, an action is admissible in a state 
if it never drives the system outside of its 
admissible region.

\begin{definition}[Admissible actions]
\label{def:safe-action}
Let ${\cal H} = (X, U, Y, N)$ be a \gls{DTLHS} 
and ${\cal Q}=(A, \Gamma)$ be a quantization 
for ${\cal H}$.
An action $u \in A_U$ is $A$-\emph{admissible} in
$s \in A_X$ 
if for all $s'$, 
$(\exists y \in A_Y: N(s, u, y, s'))$
implies
$s' \in A_X$.
An action $\hat{u} \in \Gamma(A_U)$
is ${\cal Q}$-{\em admissible} in $\hat{s} \in \Gamma(A_X)$
if
for all 
$s \in \Gamma^{-1}(\hat{s})$, 
$u \in \Gamma^{-1}(\hat{u})$, $u$ is $A$-admissible for $s$ in ${\cal H}$.
%
\end{definition}


\begin{example}
\label{ex:safe}
%
%
Let ${\cal H}$ be as in  Ex.~\ref{ex:dths} and ${\cal Q}$ as in
Ex.~\ref{ex:quantization}. We have that action $u=1$ is not $A$-admissible in
the state $s=2$, thus $\hat{u}=1$ is not ${\cal Q}$-admissible in the state
$\hat{s}=1$. Analogously, $\hat{u}=1$ is not ${\cal Q}$-admissible in
$\hat{s}=-1$. It is easy to see that no other
$\hat{u}, \hat{s}$ exist s.t. $\hat{u}$ is not
${\cal Q}$-admissible in $\hat{s}$.
\end{example}


\begin{definition}[Control abstraction]
  \label{def:ctr-abs}
  Let ${\cal H} = (X, U, Y, N)$ be a \gls{DTLHS} and ${\cal Q}=(A, \Gamma)$ 
  be a quantization for ${\cal H}$. 
  We say that the \gls{LTS} $\hat{\cal H} =
  (\Gamma(A_X)$, $\Gamma(A_U)$, $\hat{N})$ 
  is a ${\cal Q}$ \emph{control abstraction} of ${\cal H}$ 
  if its transition relation $\hat{N}$ satisfies 
  the following conditions:

  \begin{enumerate}

  \item
    \label{item:witness}
    Each abstract transition stems from a concrete transition.
    Formally:
    for all 
    $\hat{s}, \hat{s}' \in \Gamma(A_X)$, $\hat{u} \in \Gamma(A_U)$,
    if $\hat{N}(\hat{s}, \hat{u}, \hat{s}')$
    then 
    there exist 
    $s \in \Gamma^{-1}(\hat{s})$,
    $u \in \Gamma^{-1}(\hat{u})$,
    $s' \in \Gamma^{-1}(\hat{s}')$, 
    $y \in A_Y$
    such that 
    $N(s, u, y, s')$.


%

  \item
    \label{item:nonloop}
    Each concrete transition 
    is faithfully represented by an abstract transition, 
    whenever it is not a self loop and its corresponding abstract action is ${\cal Q}$-admissible.
    Formally:
    for all $s,s'\in A_X$, $u\in A_U$ 
    such that $\exists y: N(s, u, y, s')$, 
    if $\Gamma(u)$ is ${\cal Q}$-admissible in $\Gamma(s)$ 
    and $\Gamma(s)\not=\Gamma(s')$ 
    then $\hat{N}(\Gamma(s), \Gamma(u), \Gamma(s'))$.
    %

  \item
    \label{item:loop}
    If there is no upper bound to the length of concrete paths inside the 
    counter-image of an abstract state then there is an abstract self loop.
    Formally:
    for all 
    $\hat{s} \in \Gamma(A_X)$, $\hat{u} \in \Gamma(A_U)$,
    if it exists an infinite run $\pi$ in ${\cal H}$ such that $\forall t\in\N$
    $\pi^{(S)}(t)\in\Gamma^{-1}(\hat{s})$ and $\pi^{(A)}(t)\in\Gamma^{-1}(\hat{u})$
    then
    $\hat{N}(\hat{s}, \hat{u}, \hat{s})$. A self loop $(\hat{s}, \hat{u}, \hat{s})$
    of $\hat{N}$ satisfying the above property is said to be a {\em non-eliminable self
    loop}, and {\em eliminable self
    loop} otherwise.
\end{enumerate}
%
\end{definition}

\begin{example}
\label{ex:ctr-abs}
%
%
\sloppy
Let  ${\cal H}$ be as in Ex.~\ref{ex:dths}  and ${\cal Q}$ be as in 
Ex.~\ref{ex:quantization}. Any ${\cal Q}$ control
abstraction $\hat{\cal H}$ of ${\cal H}$  has the form $(\{-1,0,1\},
\{0,1\},\hat{N})$ where  $\hat{N}$ always contains at
least all continuous arrows  in the automaton depicted in 
Fig.~\ref{strongWeakSolutions.fig} and some dotted arrows. Note that the only
non-eliminable self loops are 
$(0,0,0)$ and  $(0,1,0)$.
\fussy
\end{example}

\sloppy

%

Along the same lines of the proof for Theor.~\ref{th:undecidability},
in \cite{ictac2012} we proved that we cannot algorithmically decide 
if a self loop is eliminable or non-eliminable.

\fussy

\begin{proposition}\label{self.loop.undec.prop}
Given a \gls{DTLHS} ${\cal H}$ and a quantization ${\cal Q}$, it is undecidable to determine if a self loop is non-eliminable.
\end{proposition}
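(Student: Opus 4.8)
The plan is to give a reduction from a known undecidable problem, reusing the Turing/counter-machine encoding already developed to establish Theorem~\ref{th:undecidability} in \cite{ictac2012}. The crucial observation is that, by item~\ref{item:loop} of Definition~\ref{def:ctr-abs}, a self loop $(\hat{s},\hat{u},\hat{s})$ is \emph{non-eliminable} precisely when there is an infinite run $\pi$ of ${\cal H}$ that stays forever inside the single cell $\Gamma^{-1}(\hat{s})$ while using only actions in $\Gamma^{-1}(\hat{u})$. Hence deciding non-eliminability amounts to deciding whether the concrete dynamics admits an \emph{immortal} trajectory confined to one quantization cell, and it is this immortality question that I would prove undecidable. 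Note that, unlike a control problem, this construction needs no initial or goal region: only a \gls{DTLHS}, a quantization, and one designated self loop.

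First I would encode a deterministic Turing machine $M$ as a bounded, deterministic \gls{DTLHS} ${\cal H}_M=(X,U,Y,N)$, following the construction used for the earlier undecidability results (\cite{ictac2012,decidability-hybrid-automata-jcss98}). The entire configuration of $M$ (tape contents, head position, and control state) is folded into finitely many real present-state variables $X$ ranging over a fixed bounded region $R\subseteq\R^{|X|}$, so that one computation step of $M$ is realized as a piecewise-affine map on $R$; the finitely many affine pieces are selected by guard booleans placed in the auxiliary list $Y$, which are existentially quantified away in $\tilde{N}$ and therefore never enter the state quantization. I would let $U$ consist of a single dummy action, so that there is exactly one abstract action $\hat{u}_0$ with $\Gamma^{-1}(\hat{u}_0)=A_U$ and every run uses it automatically. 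Critically, I would arrange $N$ so that a configuration has a successor inside $R$ iff $M$ does not halt there, halting configurations having their image driven outside $A_X$ (equivalently, no admissible successor inside the cell).

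Next I would fix the quantization ${\cal Q}=(A,\Gamma)$ so that $A$ bounds $X$ to $R$ and $\Gamma$ uses a single quantization interval per continuous variable, collapsing all of $R$ into one abstract state $\hat{s}_0$ with $\Gamma^{-1}(\hat{s}_0)=R$. With this setup, an infinite run of ${\cal H}_M$ confined to $\Gamma^{-1}(\hat{s}_0)$ and using actions in $\Gamma^{-1}(\hat{u}_0)$ exists if and only if the piecewise-affine step map has a point whose forward orbit stays in $R$ forever, i.e.\ iff $M$ is \emph{immortal} (admits a configuration from which it never halts). By Hooper's theorem the immortality problem for Turing machines is undecidable, so the self loop $(\hat{s}_0,\hat{u}_0,\hat{s}_0)$ is non-eliminable iff $M$ is immortal, whence a decision procedure for non-eliminability would decide immortality, a contradiction. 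Determinism of ${\cal H}_M$ is what guarantees that confined infinite runs correspond exactly to iterating the step map, so that the two conditions match.

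I expect the main obstacle to be the encoding itself: realizing one Turing step as a genuinely \emph{linear/guarded} transition relation whose image stays inside a single bounded quantization cell while the machine runs, yet leaves that cell exactly upon halting. In particular I must ensure (i) that no spurious infinite runs survive in $R$ other than those following the simulated computation, which is why I insist on a deterministic $N$, and (ii) that the quantization step is coarse enough to merge all of $R$ into one cell without destroying the faithfulness of the simulation, since the dynamics operates on full-precision reals and only the abstraction collapses them. Matching immortality, which quantifies over arbitrary starting configurations, to the ``there exists an infinite confined run'' clause of Definition~\ref{def:ctr-abs}, rather than to halting from a fixed initial configuration, is the conceptual crux that makes the reduction go through.
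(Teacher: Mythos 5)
The paper itself offers no proof of Prop.~\ref{self.loop.undec.prop}: it only remarks that the result is established in \cite{ictac2012} ``along the same lines'' of the reduction behind Theor.~\ref{th:undecidability}, so there is no in-paper argument to match yours against line by line. Your sketch is a sound and essentially standard way to obtain the statement, and it is consistent with the strategy the paper points to (simulate a Turing/counter machine by a bounded \gls{DTLHS} whose dynamics is a guarded piecewise-affine map). The genuinely valuable refinement you make explicit is the choice of source problem: since condition~\ref{item:loop} of Def.~\ref{def:ctr-abs} defines non-eliminability by the \emph{existence of some} infinite run confined to one cell, the right undecidable problem is immortality (Hooper) rather than halting from a fixed initial configuration, and your one-cell quantization with a single dummy action makes the correspondence ``non-eliminable self loop $\Leftrightarrow$ immortal configuration'' exact; determinism of $N$ is indeed what collapses confined runs to orbits of the step map. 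What you leave open is precisely the hard technical core --- realizing one machine step as a bounded conjunctive/guarded predicate whose only infinite confined orbits are the simulated computations (no spurious immortal points among non-configuration reals), and arranging that halting configurations either exit the cell or have no successor. You flag both issues correctly, but your argument is only as complete as that encoding, which you (like the paper) delegate to the constructions in \cite{ictac2012} and the hybrid-automata undecidability literature; if you wanted a self-contained proof you could instead invoke the known undecidability of mortality for piecewise-affine maps on a bounded domain, which packages exactly the spurious-orbit bookkeeping you are worried about.
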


Note that if in Def.~\ref{def:ctr-abs} we drop condition~\ref{item:loop} and the guard $\Gamma(s) \not= \Gamma(s')$ in
condition~\ref{item:nonloop}, then we essentially get the usual definition of
\emph{abstraction} (e.g. see \cite{pred-abs-tecs06} and
citations thereof).  
As a result, any abstraction is also
a control abstraction whereas a control abstraction 
in general is not an abstraction since some self loops 
or some non admissible actions may be missing.

In the following, we will deal with two types of control abstractions, namely
{\em full} and {\em admissible} control abstractions, which are defined as
follows.

\begin{definition}[Admissible and full control abstractions]
\label{def:ctr-abs-full-adm}
Let ${\cal H} = (X, U, Y, N)$ be a \gls{DTLHS} and ${\cal Q}=(A, \Gamma)$  be a
quantization for ${\cal H}$.  A ${\cal Q}$ control abstraction $\hat{\cal H} =
  (\Gamma(A_X),$ $\Gamma(A_U),$ $\hat{N})$ of
${\cal H}$ is an \emph{admissible ${\cal Q}$ control abstraction} iff, for all
$\hat{s} \in \Gamma(A_X),\hat{u} \in \Gamma(A_U)$ s.t. $\hat{u} \in {\rm Adm}(\hat{\cal H}, \hat{s})$:
i) $\hat{u}$ is ${\cal Q}$-admissible in $\hat{s}$;
ii) $\forall s \in
\Gamma^{-1}(\hat{s})$ $\forall u \in \Gamma^{-1}(\hat{u})$ $\exists s' \in {\cal
D}_X$ $\exists y \in {\cal D}_Y$ $:$ $N(s, u, y, s')$, i.e. each concrete state in
$\Gamma^{-1}(\hat{s})$ has a successor for all  concrete actions in
$\Gamma^{-1}(\hat{u})$. 

%
%
%

We say that $\hat{\cal H}$ is a {\em full ${\cal Q}$ control
abstraction} if it 
satisfies properties~\ref{item:witness} and~\ref{item:loop} of Def.~\ref{def:ctr-abs}, plus the
following property (derived from property~\ref{item:nonloop}  of Def.~\ref{def:ctr-abs}): for all
$s,s'\in A_X$, $u\in A_U$  such that $\exists y: N(s, u, y, s')$,  if
$\Gamma(s)\not=\Gamma(s')$  then $\hat{N}(\Gamma(s),$ $\Gamma(u),$ $\Gamma(s'))$. 
%
%
\end{definition}


\begin{example}
\label{ex:ctr-abs-full-adm}
\sloppy
Let  ${\cal H}$ be as in Ex.~\ref{ex:dths}, ${\cal Q}$ be as in 
Ex.~\ref{ex:quantization}.
For all ${\cal Q}$ admissible control
abstractions of ${\cal H}$,  $\hat{N}(1,1,1) = \hat{N}(-1,1,-1) = 0$, since action $1$
is not ${\cal Q}$-admissible either in $-1$ or in $1$ (see Ex.~\ref{ex:safe}).
On the contrary, for all full ${\cal Q}$ control abstractions of ${\cal H}$,  $\hat{N}(1,1,1)
= \hat{N}(-1,1,-1) = 1$. Thus, a control abstraction s.t.
$\hat{N}(1,1,1) \oplus \hat{N}(-1,1,-1)$ (where $\oplus$ is the logical XOR) is
neither full nor admissible. 
\fussy
\end{example}


By  the definition of quantization, a control abstraction is a finite \gls{LTS}. It is
possible to show that two different admissible [full] ${\cal Q}$ control
abstractions only differ in the number of self loops. Moreover, the set of
admissible [full] ${\cal Q}$ control abstraction is a finite lattice with
respect to the \gls{LTS} refinement relation (Sect.~\ref{proof:poset.lattice}).
This implies that such lattices have minimum (and maximum).   Thus, it is easy to
prove that the minimum admissible [full] ${\cal Q}$ control abstraction is the
admissible [full] ${\cal Q}$ control abstraction with non-eliminable self loops
only. Thus, the following proposition is a 
corollary of
Prop.~\ref{self.loop.undec.prop}.

\begin{proposition}\label{minctrabs.undec.prop}
%
%
Given a \gls{DTLHS} ${\cal H}$ and a quantization ${\cal Q}$, it is undecidable to
state if an admissible [full] ${\cal Q}$ control abstraction for ${\cal H}$ is the minimum
admissible [full] ${\cal Q}$ control abstraction for ${\cal H}$.
\end{proposition}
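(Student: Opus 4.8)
The plan is to derive this as a reduction from Prop.~\ref{self.loop.undec.prop}. Recall that, as observed just above, the minimum admissible [full] ${\cal Q}$ control abstraction is the one whose self loops are exactly the non-eliminable ones; hence a given control abstraction $\hat{\cal H}$ is the minimum if and only if it contains \emph{no eliminable self loop}. I would therefore assume, towards a contradiction, a decision procedure $\mathcal{D}$ that, on input a \gls{DTLHS} ${\cal H}$, a quantization ${\cal Q}$ and a control abstraction $\hat{\cal H}$, tells whether $\hat{\cal H}$ is the minimum, and use it to decide non-eliminability of an arbitrary self loop, contradicting Prop.~\ref{self.loop.undec.prop}.

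Two ingredients are decidable and I would use them freely. First, whether a candidate self loop $\ell=(\hat s,\hat u,\hat s)$ stems from a concrete transition (condition~\ref{item:witness} of Def.~\ref{def:ctr-abs}) is a bounded \gls{MILP} feasibility test over $N$ restricted to the cells $\Gamma^{-1}(\hat s)$, $\Gamma^{-1}(\hat u)$; if it fails then no infinite internal run exists, so $\ell$ is eliminable and we answer directly. Second, since the abstract state and action spaces $\Gamma(A_X)$, $\Gamma(A_U)$ are finite, the \emph{maximum} control abstraction $\hat{\cal H}^{\max}$ (all non-loop transitions and all self loops stemming from a concrete transition) is computable by finitely many such feasibility and admissibility tests.

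The core step is an \emph{isolation} construction. Given $({\cal H},{\cal Q})$ and a self loop $\ell=(\hat s,\hat u,\hat s)$ that stems from a concrete transition, I would build a reduced instance $({\cal H}',{\cal Q}')$ in which $\ell$ is the only self loop whose status is in doubt. For the \emph{full} case this is clean: restrict the admissible region $A$ to the single cell $\Gamma^{-1}(\hat s)$ and the single action cell $\Gamma^{-1}(\hat u)$, keeping $N$ unchanged. Then the only abstract state-action pair is $(\hat s,\hat u)$, so $\ell'=(\hat s,\hat u,\hat s)$ is the unique candidate self loop, and by construction an infinite run of ${\cal H}'$ staying in $\Gamma^{-1}(\hat s)$ under $\Gamma^{-1}(\hat u)$ exists iff one does in ${\cal H}$; thus $\ell'$ is non-eliminable (condition~\ref{item:loop}) in ${\cal H}'$ iff $\ell$ is non-eliminable in ${\cal H}$. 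Computing $\hat{\cal H}^{\max}$ of $({\cal H}',{\cal Q}')$ and querying $\mathcal{D}$ then yields the answer: $\hat{\cal H}^{\max}$ is the minimum iff it has no eliminable self loop iff $\ell'$ (its only self loop) is non-eliminable iff $\ell$ is non-eliminable. This contradicts Prop.~\ref{self.loop.undec.prop}.

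The main obstacle is the \emph{admissible} variant, where the isolation must also make $\hat u$ a ${\cal Q}'$-admissible action (Def.~\ref{def:ctr-abs-full-adm}); otherwise the admissible control abstraction would drop $\hat u$ entirely and both the maximum and the minimum would be the empty \gls{LTS}, washing out the dependence on $\ell$. I would fix this by enlarging $A'_X$ with one extra cell $\hat s_0$ acting as a safe sink whose dynamics is the identity on its cell, redirecting to $\hat s_0$ every concrete transition that would leave $\Gamma^{-1}(\hat s)$. This guarantees $\hat u$ is admissible in $\hat s$, forces $(\hat s_0,\hat u,\hat s_0)$ to be a \emph{non}-eliminable self loop (hence harmless, since it then lies in every control abstraction), and leaves $\ell'$ the only eliminable candidate, with its status still tracking that of $\ell$. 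Verifying that this gadget exactly preserves conditions~\ref{item:witness}--\ref{item:loop} and the admissibility clauses of Def.~\ref{def:ctr-abs-full-adm}, together with the corner cases of the preprocessing, is the delicate part; the undecidability then follows for both the admissible and the full minimum-recognition problems.
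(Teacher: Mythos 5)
Your proposal is correct and follows essentially the same route as the paper, which characterizes the minimum admissible [full] ${\cal Q}$ control abstraction as the one containing only non-eliminable self loops and then obtains the statement directly as a corollary of Prop.~\ref{self.loop.undec.prop}. The paper leaves the reduction implicit (relying on the construction behind Prop.~\ref{self.loop.undec.prop} already isolating the relevant self loop), whereas you make it explicit via the single-cell restriction and the sink gadget; that is a legitimate and careful filling-in of the same argument rather than a different approach.
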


\subsection{Maximum and Minimum Control Abstractions}
\label{subsec:maxMinCtrAbs}
\label{sec:minMaxCtrAbs}

\begin{figure}
  \begin{center}
    \scalebox{0.25}{\input{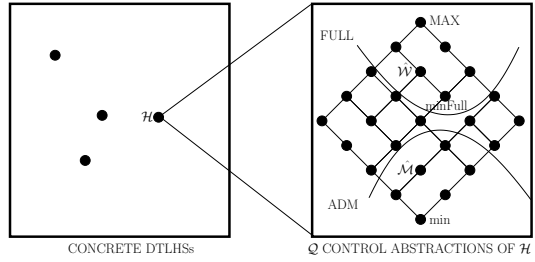}}
    \caption{Lattices on ${\cal Q}$ control abstractions.} 
    \label{lattice.eps}
  \end{center}
\end{figure}

By Theor.~\ref{th:undecidability},
we cannot hope for a constructive  sufficient
and necessary condition for the existence of a ${\cal Q}$ \gls{QFC} solution to a \gls{DTLHS} control problem, for a
given ${\cal Q}$. Accordingly, our approach is able to determine (via a
sufficient condition) if  a ${\cal Q}$ \gls{QFC} solution exists, and otherwise to
state (via a necessary condition) if  a ${\cal Q}$ \gls{QFC} solution cannot exist.
If both conditions are false, then our approach is not able to decide if  a ${\cal Q}$ \gls{QFC} solution
exists or not.
We base our sufficient [necessary] condition on computing a {\em (close to) minimum} admissible [full] ${\cal Q}$
control abstraction. Theor.~\ref{corol:min-ctr-abs} gives the foundations for
such an approach. The proof of Theor.~\ref{corol:min-ctr-abs} follows from the definitions
of admissible and full control abstractions and properties of strong and weak
solutions (Sect.~\ref{proof:structure}).
%
In the following theorem we use the {\em refinement} order relation
(denoted by $\sqsubseteq$) defined in Sect.~\ref{lts.tex}.

\begin{theorem}
  \label{corol:min-ctr-abs}\label{COROL:MIN-CTR-ABS}
  Let $\cal H$ be a \gls{DTLHS}, ${\cal Q}=(A, \Gamma)$ be a quantization for ${\cal H}$, 
  and $({\cal H}$, $I$, $G)$ be a control problem.

  \begin{enumerate}
  \item
    \label{item:minimal}
    If $\hat{\cal H}$ is an admissible ${\cal Q}$ control abstraction and  
    $\hat{K}$ is a strong solution to 
    $(\hat{\cal H},$ $\Gamma(I),$ $\Gamma(G))$ then, for any control law $k$ for $\hat{K}$,
    $K(x, u)$ $=$
    $(k(\Gamma(x)) = \Gamma(u))$ is a ${\cal Q}$ \gls{QFC} strong solution to
    $({\cal H}, I, G)$.

%
  \item \label{item:whyminimal} If $\hat{\cal H}_1 \sqsubseteq \hat{\cal H}_2$ are two admissible $\cal Q$
    control abstractions of ${\cal H}$ and $\hat{K}$ is a strong solution to 
    $(\hat{\cal H}_2, \Gamma(I), \Gamma(G))$, then $\hat{K}$ is a strong
    solution to 
    $(\hat{\cal H}_1, \Gamma(I),
    \Gamma(G))$.

  \item
    \label{item:maximal}
    If $\hat{\cal H}$ is a full ${\cal Q}$ control abstraction and  
    $(\hat{\cal H}$, $\Gamma(I)$,
    $\Gamma(G))$ does not have a weak solution then there exists no
    ${\cal Q}$ \gls{QFC} (weak as well as strong) 
    solution to 
    $({\cal H}, I, G)$. 

%
  \item \label{item:whymaximal} If $\hat{\cal H}_1 \sqsubseteq \hat{\cal H}_2$ are two full $\cal Q$
    control abstractions of ${\cal H}$ and $\hat{K}$ is a weak solution to 
    $(\hat{\cal H}_1, \Gamma(I), \Gamma(G))$, then $\hat{K}$ is a weak
    solution to 
    $(\hat{\cal H}_2, \Gamma(I),
    \Gamma(G))$.
  \end{enumerate}
\end{theorem}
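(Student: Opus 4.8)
The plan is to set up a projection (a simulation) between concrete runs of $LTS({\cal H})$ under a quantized controller and abstract runs of a control abstraction $\hat{\cal H}$ under the corresponding $\hat{K}$, and then transport the distance functions $J_{\rm strong}$ and $J_{\rm weak}$ across this projection. The three clauses of Def.~\ref{def:ctr-abs} are exactly what makes this work: clause~\ref{item:nonloop} (for admissible abstractions, together with clause~(i) of Def.~\ref{def:ctr-abs-full-adm}) lifts every concrete non-stuttering transition $(s,u,s')$ with $\Gamma(s)\neq\Gamma(s')$ to an abstract transition $(\Gamma(s),\Gamma(u),\Gamma(s'))$; clause~\ref{item:loop} controls the stuttering behaviour inside a single quantization cell; and clause~\ref{item:witness} guarantees I never invent abstract transitions without a concrete witness. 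Throughout I use that reaching $\Gamma(G)$ abstractly corresponds to reaching $\Gamma^{-1}(\Gamma(G))\subseteq{\cal B}_{\|\Gamma\|}(G)$ concretely, since two states in the same fibre differ by at most $\|\Gamma\|$ by definition of the quantization step.

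For item~\ref{item:minimal} (soundness of the sufficient condition) I would first check that $K(x,u)=(k(\Gamma(x))=\Gamma(u))$ is a genuine controller with $I\subseteq{\rm Dom}(K)$: since $\hat K$ is a strong solution we have $\Gamma(I)\subseteq{\rm Dom}(\hat K)$, and admissibility ensures the enabled action is $A$-admissible, so $K$ has the required quantized form. The heart is to show every run of ${\cal H}^{(K)}$ from ${\rm Dom}(K)$ reaches ${\cal B}_{\|\Gamma\|}(G)$. I would argue by induction on $d(\hat s):=J_{\rm strong}(\hat{\cal H}^{(\hat K)},\Gamma(G),\hat s)$, which is finite because $\hat K$ is a strong solution. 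The key observation is that a strong solution can enable no self loop at a non-goal abstract state (such a loop forces $J_{\rm strong}=\infty$); hence by the contrapositive of clause~\ref{item:loop}, inside any non-goal cell the concrete run under the constant control-law action cannot stay forever and must leave. When it leaves, the step projects (clause~\ref{item:nonloop}) to an abstract successor whose $d$-value is strictly smaller, so after finitely many cell changes the run enters $\Gamma^{-1}(\Gamma(G))\subseteq{\cal B}_{\|\Gamma\|}(G)$.

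The two monotonicity claims, items~\ref{item:whyminimal} and~\ref{item:whymaximal}, I would treat together and first, since they only use the refinement order and the fact (recalled just before the theorem) that two control abstractions differ only in self loops. Because $\hat{\cal H}_1\sqsubseteq\hat{\cal H}_2$ gives $\hat N_1\subseteq\hat N_2$, the closed loop $\hat{\cal H}_1^{(\hat K)}$ has a subset of the runs of $\hat{\cal H}_2^{(\hat K)}$. Removing runs can only lower a supremum, so $J_{\rm strong}$ does not increase when passing from $\hat{\cal H}_2$ to $\hat{\cal H}_1$, giving item~\ref{item:whyminimal}; symmetrically, a single goal-reaching path witnessing finiteness of $J_{\rm weak}$ in the finer $\hat{\cal H}_1$ survives in the coarser $\hat{\cal H}_2$, giving item~\ref{item:whymaximal}. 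The only point needing care is that enabled actions stay admissible under refinement: at a non-goal state a strongly controlled action must have a non-self-loop successor, which is preserved, and the removed transitions are only self loops.

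For item~\ref{item:maximal} (soundness of the necessary condition) I would argue the contrapositive: from a ${\cal Q}$ \gls{QFC} weak solution $K=\hat K\circ\Gamma$ to $({\cal H},I,G)$ build a weak solution $\hat K$ to $(\hat{\cal H},\Gamma(I),\Gamma(G))$ on the full abstraction. Fixing an abstract state and an enabled action, I pick concrete representatives and a concrete weak-witness path reaching ${\cal B}_{\|\Gamma\|}(G)$; the modified clause~\ref{item:nonloop} of the full abstraction projects every non-stuttering step, producing an abstract path in $\hat{\cal H}^{(\hat K)}$. The main obstacle lives exactly here: I must certify that this projected path actually reaches $\Gamma(G)$, whereas the concrete path only reaches ${\cal B}_{\|\Gamma\|}(G)$, and in general ${\cal B}_{\|\Gamma\|}(G)$ strictly contains $\Gamma^{-1}(\Gamma(G))$. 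Closing this gap is the crux of the whole theorem and where I expect most of the effort: it requires relating the $\|\Gamma\|$-ball goal of the \gls{QFC} solution to the fibre-wise goal $\Gamma(G)$, presumably by exploiting that the terminal cell of the witness path meets $G$ (so its abstract image lies in $\Gamma(G)$) together with the witness clause~\ref{item:witness}. The remaining bookkeeping, namely admissibility of $\hat K$ on the full abstraction and $\Gamma(I)\subseteq{\rm Dom}(\hat K)$, is routine once the goal alignment is settled.
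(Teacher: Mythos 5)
Your plan follows the paper's own proof essentially step for step: item~\ref{item:minimal} by projecting closed-loop concrete runs onto abstract runs (clause~\ref{item:nonloop} of Def.~\ref{def:ctr-abs} for cell-changing steps, clause~\ref{item:loop} together with the single-action property of a control law to exclude infinite stuttering inside a non-goal cell); items~\ref{item:whyminimal} and~\ref{item:whymaximal} from the observation that two admissible [full] control abstractions differ only in self loops (Fact~\ref{lemma:structure} in the paper); and item~\ref{item:maximal} by the contrapositive, defining $\hat K$ existentially from $K$ and projecting a concrete witness path. The differences are presentational: you induct on the abstract distance where the paper argues by contradiction, and in item~\ref{item:whyminimal} you replace the paper's induction on $J_{\rm strong}$ by a direct ``fewer runs, smaller supremum'' argument. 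The latter is sound only after you secure that every action enabled by $\hat K$ remains admissible in the refined abstraction; you cover the non-goal case, but the delicate case is a \emph{goal} state whose enabled action has, in $\hat{\cal H}_2$, only the eliminable self loop as successor --- there the paper invokes condition~(ii) of Def.~\ref{def:ctr-abs-full-adm} to produce a surviving non-self-loop successor, and your write-up should do the same.

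The step you explicitly leave open in item~\ref{item:maximal} --- certifying that the projected witness path ends in $\Gamma(G)$ when the concrete path is only guaranteed to end in ${\cal B}_{\|\Gamma\|}(G)$, which in general strictly contains $\Gamma^{-1}(\Gamma(G))$ --- is a genuine gap, and you are right that it is the crux. Note that the paper's own proof does not bridge it either: it takes the endpoint $x_n\in{\cal B}_{\|\Gamma\|}(G)$ and simply asserts that the projected run ``leads to the goal'' $\hat G$. Your proposed repair (``the terminal cell of the witness path meets $G$, so its image lies in $\Gamma(G)$'') is exactly what fails in general: a point within $\|\Gamma\|$ of $G$ in every real coordinate may well sit in a quantization cell disjoint from $G$. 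Closing this requires something beyond what you sketch --- for instance showing the witness path can be rerouted to end in a cell that intersects $G$, or working with a quantized goal that absorbs the $\|\Gamma\|$-ball --- so do not treat it as routine bookkeeping when you write the proof up. Everything else in your plan matches the paper and goes through.
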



Fig.~\ref{lattice.eps} graphically represents a sketch of the correspondence
between a concrete \gls{DTLHS} ${\cal H}$ and its control abstractions $\hat{\cal H}$
lattices. 
\begin{example}
\label{ex:minmax-ctr-abs}
Let ${\cal P}$ = $({\cal H},$ $I,$ $G)$ be as in Ex. \ref{ex:ctr-dths}
and ${\cal Q}\;=\;(A, \Gamma)$
be as in Ex. \ref{ex:quantization}.
For all admissible ${\cal Q}$ control abstractions $\hat{\cal H}$ (see Ex.~\ref{ex:ctr-abs-full-adm})
not containing the eliminable self loops $(-1, 0, -1)$ and $(1, 0, 1)$,
$\hat{K}(\hat{x},\hat{u})\equiv [\hat{x}\not=0 \to \hat{u}=0]$ (see Ex.~\ref{ex:gamma-qfc-sol}) 
is the strong mgo for $(\hat{\cal H},$ $\Gamma(I),$ $\Gamma(G))$.
Thus,
$K(x, u)=\hat{K}(\Gamma(x),\Gamma(u))$ 
is a ${\cal Q}$ \gls{QFC} solution to ${\cal P}$.
Let us consider the quantization ${\cal Q}'=(A, \Gamma')$, where 
$\Gamma'(w)$=$\left\lfloor w/2\right\rfloor$. 
A full ${\cal Q}'$ control abstraction of ${\cal H}$ is
${\cal L}$ = $(\{-2,-1,0,1\},$ $\{0,1\},$ $\hat{N})$,
where the transition $\hat{N}$ is depicted in Fig.~\ref{dtlhs.nosol.fig}.
(${\cal L}$, $\Gamma'(I)$, $\Gamma'(G)$)
has no weak solution, thus ${\cal P}$ has no ${\cal Q}'$ \gls{QFC} solution.
\end{example}

 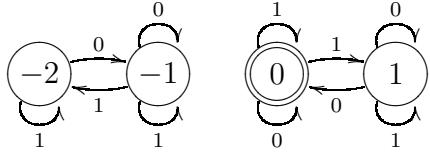
\begin{figure}
 \begin{center}
 $
 \xymatrix@C=7mm{
	&
       *+=<25pt>[o][F-]{-2} 
			\ar@(dl,dr)@[]_{1}
			\ar@/^/[r]^{0}
      &
	  *+=<25pt>[o][F-]{-1} 
			\ar@(dl,dr)@[]_{1}
			\ar@(ul,ur)@[]^{0}
			\ar@/^/[l]^{1}
        & *+=<25pt>[o][F=]{0} 
			\ar@/^/[r]^{1}
			\ar@(dl,dr)[]_{0}
			\ar@(ul,ur)[]^{1}
	& *+=<25pt>[o][F-]{1} 
			\ar@(dl,dr)@[]_{1}
			\ar@(ul,ur)@[]^{0}
 			\ar@/^/[l]^{0}
			&\\
	} 
 $
 \caption{${\cal Q}$ control abstraction without weak solutions.} 
 \label{dtlhs.eps.gt0.fig}
 \label{dtlhs.nosol.fig}
 \end{center}
 \end{figure}


	\subsection{Proof of Control Abstraction Properties}
\label{app:qfc-solution}
\label{proof:structure}
\label{proof:poset.lattice}
\label{corol:min-ctr-abs.proof}

\sloppy

In this section we give proofs about control abstraction properties.
This section can be skipped at a first reading.
In the following, we denote with $\ctrset$ the set of all ${\cal Q}$ control
abstractions of a \gls{DTLHS} ${\cal H}$.

\begin{fact}
  \label{lemma:structure}\label{LEMMA:STRUCTURE}
  Let ${\cal M}_1 = (S, B, T_1)$ and 
  ${\cal M}_2 = (S, B, T_2)$ be two admissible ${\cal Q}$ control
  abstractions of a \gls{DTLHS} ${\cal H}$, with ${\cal Q} = (A, \Gamma)$ quantization for ${\cal H}$. Then $ \forall \hat{x}, \hat{x}' \in S$
  s. t.  $\hat{x} \neq \hat{x}'$, $\forall \hat{a} \in B
  \, [
  T_1(\hat{x}, \hat{a}, \hat{x}')$ 
  $\Leftrightarrow$ $T_2(\hat{x}, \hat{a},
  \hat{x}')] $. The same holds if ${\cal M}_1, {\cal M}_2$ are full ${\cal Q}$
  control abstractions.
\end{fact}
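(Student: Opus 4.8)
The plan is to exploit the asymmetry between the \emph{witness} requirement (condition~\ref{item:witness} of Def.~\ref{def:ctr-abs}) and the \emph{faithful representation} requirement (condition~\ref{item:nonloop}): the former lets me extract a concrete transition from any abstract transition of ${\cal M}_1$, while the latter lets me reconstruct an abstract transition of ${\cal M}_2$ from that concrete transition, provided the transition is not an abstract self loop and its abstract action is ${\cal Q}$-admissible. Since the statement is symmetric in ${\cal M}_1$ and ${\cal M}_2$ and only concerns $\hat{x}\neq\hat{x}'$, it suffices to prove the single implication $T_1(\hat{x},\hat{a},\hat{x}')\Rightarrow T_2(\hat{x},\hat{a},\hat{x}')$ for $\hat{x}\neq\hat{x}'$; the converse follows by swapping the roles of the two abstractions. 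Throughout, I identify the transition relation $\hat{N}$ of Def.~\ref{def:ctr-abs} with $T_1$ or $T_2$ as appropriate.

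Consider first the admissible case, and assume $T_1(\hat{x},\hat{a},\hat{x}')$ with $\hat{x}\neq\hat{x}'$. First I would invoke condition~\ref{item:witness} on ${\cal M}_1$ to obtain concrete witnesses $s\in\Gamma^{-1}(\hat{x})$, $u\in\Gamma^{-1}(\hat{a})$, $s'\in\Gamma^{-1}(\hat{x}')$ and $y\in A_Y$ with $N(s,u,y,s')$; by the definition of $\Gamma^{-1}$ these automatically satisfy $s,s'\in A_X$ and $u\in A_U$, so they are legitimate inputs to condition~\ref{item:nonloop}. Next, since $T_1(\hat{x},\hat{a},\hat{x}')$ holds we have $\hat{a}\in{\rm Adm}({\cal M}_1,\hat{x})$, so clause~(i) of the definition of admissible control abstraction (Def.~\ref{def:ctr-abs-full-adm}) yields that $\hat{a}$ is ${\cal Q}$-admissible in $\hat{x}$, i.e. $\Gamma(u)$ is ${\cal Q}$-admissible in $\Gamma(s)$. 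Finally, because $\Gamma(s)=\hat{x}\neq\hat{x}'=\Gamma(s')$, I would apply condition~\ref{item:nonloop} to ${\cal M}_2$ --- whose three hypotheses (a concrete transition, a ${\cal Q}$-admissible abstract action, and distinct source and target) are now all met --- to conclude $T_2(\hat{x},\hat{a},\hat{x}')$.

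The full case is strictly easier, since the variant of condition~\ref{item:nonloop} satisfied by full control abstractions (stated in Def.~\ref{def:ctr-abs-full-adm}) drops the ${\cal Q}$-admissibility hypothesis entirely. Thus, after extracting the concrete witnesses $s,u,s',y$ from ${\cal M}_1$ via condition~\ref{item:witness} exactly as above, I would immediately apply this variant to ${\cal M}_2$: from $N(s,u,y,s')$ and $\Gamma(s)\neq\Gamma(s')$ it directly gives $\hat{N}(\Gamma(s),\Gamma(u),\Gamma(s'))$, which is $T_2(\hat{x},\hat{a},\hat{x}')$, with no admissibility check required.

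The only genuinely non-mechanical step is the middle one in the admissible case, namely deriving ${\cal Q}$-admissibility of $\hat{a}$ in $\hat{x}$; this is where the definition of an \emph{admissible} control abstraction (as opposed to an arbitrary one) is essential, and it is precisely what makes condition~\ref{item:nonloop} applicable. Everything else --- the extraction of witnesses and the bookkeeping that they lie in the admissible regions $A_X,A_U,A_Y$ --- is routine. Note that the restriction $\hat{x}\neq\hat{x}'$ is used crucially to meet the $\Gamma(s)\neq\Gamma(s')$ guard of condition~\ref{item:nonloop}; on self loops the two abstractions may legitimately differ (eliminable versus non-eliminable), which is exactly why the statement excludes that case.
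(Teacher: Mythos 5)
Your proof is correct and follows essentially the same route as the paper's: extract a concrete witness from the abstract transition via condition~\ref{item:witness}, establish ${\cal Q}$-admissibility of the action from clause~(i) of Def.~\ref{def:ctr-abs-full-adm} (unneeded in the full case), and then push the concrete transition back up into the other abstraction via condition~\ref{item:nonloop}, using $\hat{x}\neq\hat{x}'$ to meet its guard and symmetry for the converse. Your write-up is in fact slightly more careful than the paper's, which compresses the admissible and full cases into one sentence.
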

\begin{proof}
Let $\hat{x} \neq \hat{x}' \in S, \hat{a} \in B$ be  such that $T_1(\hat{x},
\hat{a}, \hat{x}')$ holds. If ${\cal M}_1$ is an admissible ${\cal Q}$ control
  abstraction, this implies, by 
Def.~\ref{def:ctr-abs-full-adm}, that $\hat{a}$ is $A$-admissible in $\hat{x}$. From point~\ref{item:witness} of
Def.~\ref{def:ctr-abs} (for the admissible control abstraction case) or
Def.~\ref{def:ctr-abs-full-adm} of full control abstraction (for the full
control abstraction case), and from $T_1(\hat{x}, \hat{a}, \hat{x}')$ follows that $
\exists x \in \Gamma^{-1}(\hat{x}) \exists x' \in \Gamma^{-1}(\hat{x}'):
\exists a \in \Gamma^{-1}(\hat{a}) \exists y: N(x, a, y, x')$. By
point~\ref{item:nonloop} of Def.~\ref{def:ctr-abs} this implies that
$T_2(\hat{x}, \hat{a}, \hat{x}')$ holds.

The same reasoning may be applied to prove the other implication.
\end{proof}
%
%

\begin{fact}
  \label{prop:poset.lattice}\label{PROP:REFINEMENT.PO}
  Given a \gls{DTLHS} ${\cal H}$ and a quantization ${\cal Q}$,
  the set $(\ctrset, \sqsubseteq)$ of ${\cal Q}$ control abstractions of ${\cal H}$
  is a {\em lattice}. Moreover, the set of full ${\cal Q}$ control abstractions of ${\cal H}$
  is a {\em lattice}.
\end{fact}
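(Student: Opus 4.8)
The plan is to identify each ${\cal Q}$ control abstraction with its transition relation and then reduce the claim to the elementary fact that a closed interval of a powerset lattice is itself a lattice. Since every ${\cal Q}$ control abstraction of ${\cal H}$ has the same state set $\Gamma(A_X)$ and the same action set $\Gamma(A_U)$ (Def.~\ref{def:ctr-abs}), an element $\hat{\cal H} = (\Gamma(A_X),\Gamma(A_U),\hat{N})$ of $\ctrset$ is completely determined by $\hat{N}$, which is a subset of the fixed finite set $E = \Gamma(A_X) \times \Gamma(A_U) \times \Gamma(A_X)$. Under this identification the refinement order $\sqsubseteq$ (Sect.~\ref{lts.tex}) is exactly set inclusion $\subseteq$ on subsets of $E$, so $(\ctrset,\sqsubseteq)$ sits inside the Boolean lattice $(2^{E},\subseteq)$.

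First I would extract, from the three clauses of Def.~\ref{def:ctr-abs}, a fixed lower bound and a fixed upper bound that depend only on ${\cal H}$ and ${\cal Q}$, not on $\hat{N}$. Let $\overline{N}\subseteq E$ be the set of \emph{witnessed} triples, i.e. those $(\hat{s},\hat{u},\hat{s}')$ for which there exist $s\in\Gamma^{-1}(\hat{s})$, $u\in\Gamma^{-1}(\hat{u})$, $s'\in\Gamma^{-1}(\hat{s}')$ and $y\in A_Y$ with $N(s,u,y,s')$; condition~\ref{item:witness} says precisely that every control abstraction satisfies $\hat{N}\subseteq\overline{N}$. Let $\underline{N}\subseteq E$ be the union of the non-self-loop triples forced by condition~\ref{item:nonloop} and the non-eliminable self loops forced by condition~\ref{item:loop}; conditions~\ref{item:nonloop} and~\ref{item:loop} say precisely that every control abstraction satisfies $\underline{N}\subseteq\hat{N}$. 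The crucial point is that none of the three clauses couples the presence of one triple to that of another: the antecedents of conditions~\ref{item:nonloop} and~\ref{item:loop} (existence of a suitable concrete transition, respectively of an infinite concrete run) are statements about ${\cal H}$ and ${\cal Q}$ alone, and condition~\ref{item:witness} is checked triple by triple. Hence $\hat{N}$ is the transition relation of a control abstraction if and only if $\underline{N}\subseteq\hat{N}\subseteq\overline{N}$, so $\ctrset$ corresponds, under $\hat{\cal H}\leftrightarrow\hat{N}$, exactly to the interval $[\underline{N},\overline{N}]$ of $2^{E}$. One also checks $\underline{N}\subseteq\overline{N}$, since every forced triple is the image of a concrete transition and is therefore witnessed.

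Given this characterization the conclusion is immediate. An interval $[\underline{N},\overline{N}]$ of the distributive lattice $(2^{E},\subseteq,\cup,\cap)$ is closed under $\cup$ and $\cap$: if $\underline{N}\subseteq\hat{N}_1,\hat{N}_2\subseteq\overline{N}$ then $\underline{N}\subseteq\hat{N}_1\cap\hat{N}_2$ and $\hat{N}_1\cup\hat{N}_2\subseteq\overline{N}$, and these set operations realize the meet and join relative to the subposet. Therefore $(\ctrset,\sqsubseteq)$ is a finite (distributive) lattice, with $\hat{\cal H}_1\wedge\hat{\cal H}_2$ given by $\hat{N}_1\cap\hat{N}_2$ and $\hat{\cal H}_1\vee\hat{\cal H}_2$ by $\hat{N}_1\cup\hat{N}_2$, minimum $\underline{N}$ and maximum $\overline{N}$ (the ``min'' and ``MAX'' of Fig.~\ref{lattice.eps}).

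For the full control abstractions the argument is identical, keeping the same upper bound $\overline{N}$ (condition~\ref{item:witness} is unchanged) and enlarging the lower bound to $\underline{N}_{\mathrm{full}}$, obtained by replacing the forcing clause of condition~\ref{item:nonloop} with the stronger clause of Def.~\ref{def:ctr-abs-full-adm} that drops the ${\cal Q}$-admissibility requirement on $\Gamma(u)$; this is again a fixed set with $\underline{N}\subseteq\underline{N}_{\mathrm{full}}\subseteq\overline{N}$, so the full control abstractions form the interval $[\underline{N}_{\mathrm{full}},\overline{N}]$ and hence a lattice. I expect the only delicate step to be the decoupling argument of the second paragraph, namely verifying that the antecedents of conditions~\ref{item:nonloop} and~\ref{item:loop} are independent of the chosen $\hat{N}$, since this is exactly what turns the feasible set into a clean interval rather than a more intricate sublattice; everything afterwards is the standard observation that intervals of a powerset are sublattices.
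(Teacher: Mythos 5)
Your proof is correct and takes essentially the same route as the paper: the paper likewise observes that conditions~\ref{item:nonloop} and~\ref{item:loop} of Def.~\ref{def:ctr-abs} force a fixed set of mandatory transitions while condition~\ref{item:witness} caps $\hat{N}$ at the witnessed triples, and concludes that $(\ctrset,\sqsubseteq)$ is isomorphic to the powerset lattice over the optional transitions (eliminable self loops and transitions with non-admissible abstract actions). Your phrasing of this as the interval $[\underline{N},\overline{N}]$ of $2^{E}$ is just an equivalent, slightly more explicit packaging of the same decoupling argument, including for the full case.
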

\begin{proof}
By conditions~\ref{item:nonloop} and~\ref{item:loop} of Def.~\ref{def:ctr-abs} 
all control abstractions do contain all admissible actions 
that have a concrete witness and all non-eliminable self-loops.

As a consequence, if $S$ is the set of eliminable self-loops 
and $U$ is the set of non admissible actions, 
then $(\ctrset, \sqsubseteq)$ is isomorphic to the complete 
lattice $(2^{S\times U},\subseteq)$.

Analogously, the set of full ${\cal Q}$ control abstractions of ${\cal H}$ is isomorphic to the complete 
lattice $(2^{S},\subseteq)$.
%
\end{proof}

\begin{proof}[Theorem~\ref{corol:min-ctr-abs}]
The idea underlying the proof 
is that two
different admissible (as well as full) control abstractions, with the same quantization, 
have the same loop free structure, i.e. the same arcs except from self loops, 
as proved by Prop.~\ref{lemma:structure}.
For ease of notation, given a state $x$ (resp. an action $u$) 
we will often denote the corresponding abstract state $\Gamma(x)$ 
(resp. action $\Gamma(u)$) with $\hat{x}$ (resp. $\hat{u}$). Analogously, we
will often write $\hat{I}$ (resp. $\hat{G}$) for $\Gamma(I)$ (resp. $\Gamma(G)$).
In the following, ${\cal P} = ({\cal H}, I, G)$, $\hat{\cal P} = (\hat{\cal H},
\Gamma(I), \Gamma(G))$, and $\hat{\cal H} = (\Gamma(A_X), \Gamma(A_U), \hat{N})$.


\paragraph*{Proof of point~\ref{item:minimal}}
Applying the definition of solution to a \gls{DTLHS} control problem
(Def.~\ref{def:epsilon-solution}),  we have to show that if $\hat{K}$ is a
strong solution to the \gls{LTS} control problem  $(\hat{\cal H}, \hat{I},
\hat{G})$, then $K$ defined by  $K(x, u)$ = $(k(\hat{x}) = \hat{u})$  is a
strong solution to the \gls{LTS} control problem  (LTS(${\cal H}$), $I$, ${\cal
B}_{\|\Gamma\|}(G))$, being $k$ a control law for $\hat{K}$.

 
Note that, since $\hat{\cal H}$ is an admissible control abstraction, it
contains admissible actions only. This implies that all actions enabled by
$\hat{K}$ in $\hat{x}$ are ${\cal Q}$-admissible in $\hat{x}$. Hence, we have
that all actions enabled by $K$ in $x$ are $A$-admissible in $x$. Together
with point~\ref{item:nonloop} of Def.~\ref{def:ctr-abs}, this implies that, for
any transition $(x, u, x')$ of ${\rm LTS}({\cal H})^{(K)}$ such that
$\hat{x}\not=\hat{x}'$, $(\hat{x}, \hat{u}, \hat{x}')$ is a (abstract)
transition of $\hat{\cal H}^{(\hat{K})}$.

First of all, we prove that $I \subseteq \mbox{\rm Dom}(K)$.
Given a state $x\in I$, we have that $\hat{x}\in\hat{I}$.
Since $\hat{K}$ is a strong solution to $\hat{\cal P}$, 
we have that $\hat{I} \subseteq \mbox{\rm Dom}(\hat{K})$, thus $\hat{x}\in{\rm Dom}(\hat{K})$. 
Hence, there exists $\hat{u}\in\Gamma(A_U)$
such that $\hat{K}(\hat{x}, \hat{u})$ holds, which implies that $k(\hat{x})$ is
defined. 
By definition of $K$, we have that for all $u\in\Gamma^{-1}(k(\hat{x}))$
and for all $x\in\Gamma^{-1}(\hat{x})$ $K(x,u)$ holds, 
which means that $x\in{\rm Dom}(K)$.


Now, we prove that for all $x \in {\rm Dom}(K)$, 
$J_{\rm strong}(\mbox{\rm LTS}({\cal H})^{(K)}, {\cal B}_{\|\Gamma\|}(G), x)$ is finite.
Let us suppose by absurd that $J_{\rm strong}(\mbox{\rm LTS}({\cal H})^{(K)},
{\cal B}_{\|\Gamma\|}(G), x) = \infty$. This implies that one of the two
following holds: 

\begin{enumerate} 

	\item \label{finite.fullpath.item} there exists a finite fullpath $\pi =
x_0 u_0 x_1 u_1 \ldots x_n u_n$ in ${\rm LTS}({\cal H})^{(K)}$ such that $x_0 =
x$, ${\rm Adm}({\rm LTS}({\cal H})^{(K)}, x_n) = \varnothing$ and, for all
$i\in[n]$, $x_i\not\in {\cal B}_{\|\Gamma\|}(G)$;

	\item \label{infinite.fullpath.item} there exists an infinite fullpath
$\pi = x_0 u_0 x_1 u_1 \ldots x_n u_n \ldots$ in ${\rm LTS}({\cal H})^{(K)}$
such that $x_0 = x$ and, for all $i\in\N$, $x_i\not\in {\cal
B}_{\|\Gamma\|}(G)$. 

\end{enumerate} 


Let us deal with the finite fullpath case first
(point~\ref{finite.fullpath.item} above). Let $\hat{\pi} = \hat{x}_0
\hat{u}_0\ldots \hat{u}_{n-1} \hat{x}_n$, and let $\rho$ be defined from
$\hat{\pi}$ by collapsing all consecutive equal (abstract) states into one
(abstract) state. Formally, $|\rho| = \max_{i \in [n]} \alpha(i)$ and $\rho(i) =
\hat{\pi}^{(S)}(\alpha(i)) = \Gamma(\pi^{(S)}(\alpha(i)))$, where the function 
$\alpha:\N\rightarrow\N$ is recursively defined as follows:



\begin{itemize}  

\item let $Z_z = \{j \;|\; z < j \leq n \land \Gamma(x_j)\not=\Gamma(x_{z})\}$

\item $\alpha(0) = 0$

\item $\alpha(i + 1) = \left\{
\begin{array}{cc}
\alpha(i) & \mbox{if } Z_{\alpha(i)} = \varnothing\\
\min Z_{\alpha(i)} & \mbox{otherwise}
\end{array}
\right.$

\end{itemize}


By the fact (proved above) that if $(x, u, x')$ is a transition of ${\rm
LTS}({\cal H})^{(K)}$ with $\hat{x}\not=\hat{x}'$, then  $(\hat{x}, \hat{u},
\hat{x}')$ is a transition of $\hat{\cal H}^{(\hat{K})}$, we have that  $\rho$
is a run of $\hat{\cal H}^{(\hat{K})}$. Let $m = |\rho| = \max_{i \in [n]}
\alpha(i)$. Since $\hat{K}$ is a strong solution to $\hat{P}$ and $\hat{x} \in {\rm
Dom}(\hat{K})$, we have that $\hat{x}_m\in {\rm Dom}(\hat{K})$. This implies
that there exists $\hat{u} \in \Gamma(A_U)$ s.t. $\hat{K}(\hat{x}_m, \hat{u})$
and $k(\hat{x}_m) = \hat{u}$ ,
thus that there exists $\hat{u} \in {\rm Adm}(\hat{\cal H}^{(\hat{K})},
\hat{x}_m)$. Thus by 
Def.~\ref{def:ctr-abs-full-adm} (and since $x_n \in \Gamma^{-1}(\hat{x}_m)$) we
have that ${\rm Adm}({\rm LTS}({\cal H})^{(K)}, x_n) \supseteq
\Gamma^{-1}(\hat{u}) \neq \varnothing$, which implies that $\pi$ cannot be a
finite fullpath.

As for the infinite fullpath case (point~\ref{infinite.fullpath.item} above), we
observe that in $\pi$ we cannot have an infinite sequence $x_m u_m x_{m+1}
u_{m+1} \ldots$ such that for all $j\geq m$,  $\Gamma(x_j)=\Gamma(x_m)$ and
$\Gamma(u_j)=\Gamma(u_m)$.  In fact, suppose by absurd that this is true, and
let $\tilde{m}$ be the least $m$ for which this happens. Then $(\hat{x}_m,
\hat{u}_m, \hat{x}_m)$ is a non-eliminable self loop. Since $x_j \notin {\cal
B}_{\|\Gamma\|}(G)$ for all $j\geq m$, and thus $\hat{x}_j \notin \hat{G}$ for
all $j\geq m$, we also have that $J_{\rm strong}(\hat{\cal H}^{(\hat{K})},
\hat{G}, \hat{x}_m)=\infty$. By applying the same reasoning used for the finite
fullpath case, we have that there is a path in $\hat{\cal H}^{(\hat{K})}$
leading from $\hat{x}$ to $\hat{x}_m$, which implies that $J_{\rm
strong}(\hat{\cal H}^{(\hat{K})}, \hat{G}, \hat{x})=\infty$. Finally, this
contradicts the fact that $\hat{K}$ is a strong solution to ${\hat P}$ and
$\hat{x} \in {\rm Dom}(\hat{K})$.
Since the control law $k$ for $\hat{K}$ (and thus $K$, which is defined on $k$) 
only enables one action $\hat{u}$ for each abstract state,
we may conclude that we cannot have an infinite sequence $x_m u_m x_{m+1}
u_{m+1} \ldots$ such that for all $j\geq m$,  $\Gamma(x_j)=\Gamma(x_m)$.  


Thanks to this fact, from a given infinite fullpath $\pi = x_0 u_0 x_1 u_1
\ldots x_n u_n \ldots$ of ${\rm LTS}({\cal H})^{(\alpha)}$ with $x_0 = x$, we can 
extract an infinite abstract fullpath $\rho$ s.t. $\rho(i) =
\Gamma(\pi^{(S)}(\alpha(i)))$, where the function  $\alpha:\N\rightarrow\N$ is recursively
defined as follows:

\begin{itemize}  

\item $\alpha(0) = 0$

\item $\alpha(i+1) = \min\{j \;|\; \alpha(i) < j \land
\Gamma(x_j)\not=\Gamma(x_{\alpha(i)})\}$. 

\end{itemize}


By the fact (proved above) that if $(x, u, x')$ is a transition of ${\rm
LTS}({\cal H})^{(K)}$ with $\hat{x}\not=\hat{x}'$, then  $(\hat{x}, \hat{u},
\hat{x}')$ is a transition of $\hat{\cal H}^{(\hat{K})}$, we have that  $\rho$
is a run of $\hat{\cal H}^{(\hat{K})}$. Moreover, since for all
$i\in\N\;x_i\not\in {\cal B}_{\|\Gamma\|}(G)$, then we have that for all
$i\in\N$  $\hat{x}_{\alpha(i)}\not\in\hat{G}$. This contradicts the fact that 
$\hat{K}$ is a strong solution to $\hat{P}$ and $\hat{x} \in {\rm
Dom}(\hat{K})$. 


\paragraph*{Proof of point~\ref{item:whyminimal}}

Let $\hat{\cal H}_1 = (\Gamma(A_X),$ $\Gamma(A_U),$ $T_1)$ and
$\hat{\cal H}_2 = (\Gamma(A_X),$ $\Gamma(A_U),$ $T_2)$ be two
admissible ${\cal Q}$ control abstractions of ${\cal H}$, with $\hat{\cal H}_1
\sqsubseteq\hat{\cal H}_2$. If $\hat{\cal H}_1 =\hat{\cal H}_2$ the thesis is
proved, thus let us suppose that $\hat{\cal H}_1 \neq\hat{\cal H}_2$. By
Fact~\ref{lemma:structure}, the only difference between $\hat{\cal H}_1$ and
$\hat{\cal H}_2$ may be in a finite number of (eliminable) self loops which are
in $\hat{\cal H}_2$ only. That is, there exists a transitions set $B =
\{(\hat{x}_1, \hat{u}_1, \hat{x}_1), \ldots, (\hat{x}_m, \hat{u}_m,
\hat{x}_m)\}$ s.t. for all $(\hat{x}_i, \hat{u}_i, \hat{x}_i) \in B$ we have
that $T_1(\hat{x}_i, \hat{u}_i, \hat{x}_i) = 0 \land T_2(\hat{x}_i, \hat{u}_i,
\hat{x}_i) = 1$, and for all $(\hat{x}, \hat{u}, \hat{x}') \in \Gamma(A_X)
\times \Gamma(A_U) \times \Gamma(A_X)$ we have that if $(\hat{x}, \hat{u},
\hat{x}') \notin B$ then $T_1(\hat{x}, \hat{u}, \hat{x}) = T_2(\hat{x},
\hat{u}, \hat{x})$. Let $\hat{K}$ be the strong mgo to the \gls{LTS} control
problem $(\hat{\cal H}_2, \hat{I}, \hat{G})$ and let $(\hat{x}_i, \hat{u}_i,
\hat{x}_i)\in B$. 


Note that if $\hat{x}_i\notin\hat{G}$ and $\hat{K}(\hat{x}_i,\hat{u}_i)$  then
$J_{\rm strong}(\hat{\cal H}_2^{(\hat{K})},\hat{G},\hat{x}_i)=\infty$ since
there exists a $\pi \in {\rm Path}(\hat{\cal H}_2^{(\hat{K})}, \hat{x}_i,
\hat{u}_i)$ s.t. $\pi^{(S)}(t) = \hat{x}_i$ and $\pi^{(A)}(t) = \hat{u}_i$ for
all $t \in \N$. As a consequence, if $\hat{x}_i\notin \hat{G}$  then
$\hat{K}(\hat{x}_i,\hat{u}_i)$ does not hold. Moreover, suppose that
$\hat{x}_i\in\hat{G}$. Since $(\hat{x}_i, \hat{u}_i, \hat{x}_i)$ is an
eliminable self loop of $\hat{\cal H}_2$ and $\hat{\cal H}_2$ is an
admissible ${\cal Q}$ control abstraction, there exists a state
$\hat{x}'\not=\hat{x}_i$ such that $T_2(\hat{x}_i,\hat{u}_i,\hat{x}')$.  

We are now ready to prove the thesis. Since we already know that $\hat{I}
\subseteq {\rm Dom}(\hat{K})$, we only have to prove that i) $\hat{K}$ is a
controller for $\hat{\cal H}_1$ and that ii) $J_{\rm strong}(\hat{\cal
H}_1^{(\hat{K})},\hat{G},\hat{x}) < \infty$ for all $\hat{x} \in {\rm
Dom}(\hat{K})$. 

As for the first point, we have to show that $\hat{K}(\hat{x}, \hat{u})$ implies
$\hat{u} \in {\rm Adm}(\hat{\cal H}_1, \hat{x})$
(Def.~\ref{def:ctrproblem-lts}). Suppose by absurd that $\hat{u} \notin {\rm
Adm}(\hat{\cal H}_1, \hat{x})$ for some $\hat{x}, \hat{u}$.  Since
$\hat{K}(\hat{x}, \hat{u})$ implies $\hat{u} \in {\rm Adm}(\hat{\cal H}_2,
\hat{x})$, we have that $(\hat{x}, \hat{u}, \hat{x}) \in B$. If
$\hat{x}\notin\hat{G}$ then $\hat{K}(\hat{x}, \hat{u}) = 0$, which is false by
hypothesis. If $\hat{x}\in\hat{G}$, then there exists a state
$\hat{x}'\not=\hat{x}$ such that $T_2(\hat{x},\hat{u},\hat{x}')$.  Thus,
$T_1(\hat{x},\hat{u},\hat{x}')$ holds by Fact~\ref{lemma:structure} and we have
$\hat{u} \in {\rm Adm}(\hat{\cal H}_1, \hat{x})$, which is absurd.

As for the second one, it is sufficient to prove that $J_{\rm strong}(\hat{\cal
H}_1^{(\hat{K})},\hat{G},\hat{x}) = J_{\rm strong}(\hat{\cal
H}_2^{(\hat{K})},\hat{G},\hat{x})$. This can be proved by induction on the value
of $J_{\rm strong}(\hat{\cal H}_2^{(\hat{K})},\hat{G},\hat{x})$. 

Suppose $J_{\rm strong}(\hat{\cal H}_2^{(\hat{K})},\hat{G},\hat{x}) = 1$. Then,
$\varnothing \neq {\rm Img}(\hat{\cal H}_2^{(\hat{K})},\hat{x},\hat{u})
\subseteq \hat{G}$ for all $\hat{u}$ s.t. $\hat{K}(\hat{x}, \hat{u})$. If for
all $\hat{u}$ s.t. $\hat{K}(\hat{x}, \hat{u})$ there exists a state $\hat{x}'
\neq \hat{x}$ s.t. $\hat{x}' \in {\rm Img}(\hat{\cal
H}_2^{(\hat{K})},\hat{x},\hat{u})$, then we have that $\hat{x}' \in {\rm
Img}(\hat{\cal H}_1^{(\hat{K})},\hat{x},\hat{u})$ by Fact~\ref{lemma:structure},
and since $\varnothing \neq {\rm Img}(\hat{\cal
H}_1^{(\hat{K})},\hat{x},\hat{u}) \subseteq {\rm Img}(\hat{\cal
H}_2^{(\hat{K})},\hat{x},\hat{u}) \subseteq \hat{G}$ we have that $J_{\rm
strong}(\hat{\cal H}_1^{(\hat{K})},\hat{G},\hat{x}) = 1 = J_{\rm
strong}(\hat{\cal H}_2^{(\hat{K})},\hat{G},\hat{x})$. Otherwise, let $\hat{u}$
be s.t. $\hat{K}(\hat{x}, \hat{u})$ and $T_2(\hat{x}, \hat{u}, \hat{x}') \to
\hat{x}' = \hat{x}$. Note that this implies $\hat{x}\in\hat{G}$. If $(\hat{x},
\hat{u}, \hat{x}) \notin B$, then $T_1(\hat{x}, \hat{u}, \hat{x})$ thus $J_{\rm
strong}(\hat{\cal H}_1^{(\hat{K})},\hat{G},\hat{x}) = 1 = J_{\rm
strong}(\hat{\cal H}_2^{(\hat{K})},\hat{G},\hat{x})$. The other case, i.e.
$(\hat{x}, \hat{u}, \hat{x}) \in B$, is impossible since, by the reasoning above and
being $\hat{x}\in\hat{G}$, it would imply that
there exists a state $\hat{x}'\not=\hat{x}$ such that
$T_2(\hat{x},\hat{u},\hat{x}')$.


Suppose now that for all $\hat{x}$ s.t. $J_{\rm strong}(\hat{\cal
H}_2^{(\hat{K})},\hat{G},\hat{x}) = n$, $J_{\rm strong}(\hat{\cal
H}_1^{(\hat{K})},\hat{G},\hat{x}) = J_{\rm strong}(\hat{\cal
H}_2^{(\hat{K})},\hat{G},\hat{x})$. Let $\hat{x} \in {\rm Dom}(\hat{K})$ be s.t.
$J_{\rm strong}(\hat{\cal H}_2^{(\hat{K})},\hat{G},\hat{x}) = n + 1$. If
$(\hat{x}, \hat{u}, \hat{x}) \notin B$ for any $\hat{u}$, then ${\rm
Img}(\hat{\cal H}_2^{(\hat{K})}, \hat{x}, \hat{u}) = {\rm Img}(\hat{\cal
H}_1^{(\hat{K})}, \hat{x}, \hat{u})$ for all $\hat{u}$, thus $J_{\rm
strong}(\hat{\cal H}_1^{(\hat{K})},\hat{G},\hat{x}) = J_{\rm strong}(\hat{\cal
H}_2^{(\hat{K})},\hat{G},\hat{x})$ by induction hypothesis. Otherwise, let
$(\hat{x}, \hat{u}, \hat{x}) \in B$ for some $\hat{u}$. By the reasoning above,
if $\hat{x}\notin\hat{G}$ then $\hat{K}(\hat{x},\hat{u}) = 0$, and again $J_{\rm
strong}(\hat{\cal H}_1^{(\hat{K})},\hat{G},\hat{x}) = J_{\rm strong}(\hat{\cal
H}_2^{(\hat{K})},\hat{G},\hat{x})$ by induction hypothesis. If
$\hat{x}\in\hat{G}$, then there exists a state $\hat{x}'\not=\hat{x}$ such that
$T_2(\hat{x},\hat{u},\hat{x}')$ (and $T_1(\hat{x},\hat{u},\hat{x}')$).  Since
$J_{\rm strong}(\hat{\cal H}_2^{(\hat{K})},\hat{G},\hat{x}) = n + 1$, we must
have $J_{\rm strong}(\hat{\cal H}_2^{(\hat{K})},\hat{G},\hat{x}') \leq n$,
thus again $J_{\rm strong}(\hat{\cal H}_1^{(\hat{K})},\hat{G},\hat{x}) = J_{\rm
strong}(\hat{\cal H}_2^{(\hat{K})},\hat{G},\hat{x})$ by inductive hypothesis. 

Finally, note that in general $\hat{K}$ is not optimal for $({\cal H}_1, \hat{I},
\hat{G})$. As a counterexample, consider the control abstractions $\hat{\cal
H}_2 = (\{0,$ $1,$ $2\},$ $\{0,$ $1\},$ $\{(0,$ $0,$ $2),$ $(0,$ $0,$ $0),$
$(0,$ $1,$ $1),$ $(1,$ $1,$ $2),$ $(2, 0,$ $2)\})$ and $\hat{\cal H}_1 = (\{0,$
$1,$ $2\},$ $\{0,$ $1\},$ $\{(0,$ $0,$ $2),$ $(0,$ $1,$ $1), (1,$ $1,$ $2),$
$(2,$ $0,$ $2)\})$, with $\hat{I} = \{0, 1, 2\}$ and $\hat{G} = \{2\}$. We have
that the strong mgo for $\hat{\cal H}_2$ is $\hat{K}_2 = \{(0, 1), (1, 1), (2,
0)\}$, whilst the strong mgo for $\hat{\cal H}_1$ is $\hat{K}_1 = \{(0, 0), (1,
1), (2, 0)\}$, with $J_{\rm strong}(\hat{\cal H}_1^{(\hat{K}_1)},\hat{G},0) = 1$
and $J_{\rm strong}(\hat{\cal H}_1^{(\hat{K}_2)},\hat{G},0) = J_{\rm
strong}(\hat{\cal H}_2^{(\hat{K}_2)},\hat{G},0) = 2$.

\paragraph*{Proof of point~\ref{item:maximal}}

Applying the definition of \gls{DTLHS} control problem
(Def.~\ref{def:epsilon-solution}),  we will show that if 
$K$ is a weak solution to the \gls{LTS} control problem $(\mathrm{LTS}({\cal H})$,
$I$, ${\cal B}_{\|\Gamma\|}(G))$,  and $\hat{\cal H}$ is any full ${\cal Q}$
control abstraction of ${\cal H}$ then there exists a weak solution $\hat{K}$ to the control problem $(\hat{\cal
H}, \hat{I}, \hat{G})$.

Let us define, for $\hat{x} \in \Gamma(A_X)$ and $\hat{u} \in \Gamma(A_U)$,
$\hat{K}(\hat{x},\hat{u})$ $=$  $\exists x \in \Gamma^{-1}(\hat{x})$ $\exists u
\in \Gamma^{-1}(\hat{u}): K(x, u)$. We show that $\hat{K}$ is a weak solution
to any full ${\cal Q}$ control abstraction of ${\cal H}$.

Let $\hat{\cal H}$ be a full ${\cal Q}$ control abstraction of ${\cal H}$. First
of all, we show that $\hat{K}$ is a controller for $\hat{\cal H}$
(Def.~\ref{def:ctrproblem-lts}), i.e. that $\hat{K}(\hat{x},\hat{u})$ implies
$\hat{u} \in {\rm Adm}(\hat{\cal H}, \hat{x})$. Suppose
$\hat{K}(\hat{x},\hat{u})$ holds: this implies that there exist $x \in
\Gamma^{-1}(\hat{x}), u \in \Gamma^{-1}(\hat{u})$ s.t. $K(x, u)$ and $u \in {\rm Adm}({\cal
H}, x)$. If there exists $x' \in A_X$ s.t. $x' \in {\rm Img}({\cal H}, x, u)$
and $\hat{x}' \neq \hat{x}$, then, being $\hat{\cal H}$ a full ${\cal Q}$
control abstraction of ${\cal H}$, we have that $(\hat{x}, \hat{u}, \hat{x}')$
is a transition of $\hat{\cal H}$, thus $\hat{u} \in {\rm Adm}(\hat{\cal H},
\hat{x})$. Otherwise, one of the following must hold:

\begin{itemize}

	\item ${\rm Img}({\cal H}, x, u) = \varnothing$, which is impossible
	since $K(x, u)$;


	\item for all $x' \in A_X$ s.t. $x' \in {\rm Img}({\cal H}, x, u)$, we
have that either $x' \notin A_X$ or $\hat{x}' = \hat{x}$. Being $K$ a weak
controller for $\cal H$ defined only on $A_X \times A_U$ (i.e., $K(x, u)$
implies $x \in A_X$ and $u \in A_U$), and given that $K(x, u)$ holds, we must have that
there exists $x' \in A_X$ s.t. $x' \in {\rm Img}({\cal H}, x, u)$ and $\hat{x}'
= \hat{x}$. If $x = x'$, then there exists an infinite path inside
$\Gamma^{-1}(\hat{x})$ with actions in $\Gamma^{-1}(\hat{u})$, i.e. $(\hat{x},
\hat{u}, \hat{x})$ is a non-eliminable self loop. This implies that
$\hat{N}(\hat{x}, \hat{u}, \hat{x})$ holds, thus $\hat{u} \in {\rm Adm}(\hat{\cal H},
\hat{x})$. Otherwise, i.e. if $x \neq x'$, then we whole reasoning may be
applied to $x'$. Then, either we arrive to a state $t \notin
\Gamma^{-1}(\hat{x})$ starting from a state in $\Gamma^{-1}(\hat{x})$, and
$\hat{N}(\hat{x}, \hat{u}, \hat{t})$ implies $\hat{u} \in {\rm Adm}(\hat{\cal
H}, \hat{x})$, or we have an infinite path inside $\Gamma^{-1}(\hat{x})$ via
$\Gamma^{-1}(\hat{u})$ , thus $(\hat{x}, \hat{u}, \hat{x})$ is a non-eliminable
self loop and $\hat{N}(\hat{x}, \hat{u}, \hat{x})$ implies $\hat{u} \in {\rm
Adm}(\hat{\cal H}, \hat{x})$.


\end{itemize}

We now have to prove that $\hat{K}$ is a weak solution to $\hat{\cal H}$, where
$\hat{\cal H}$ is a full ${\cal Q}$ control abstraction of ${\cal H}$. First of all, we
show that $\hat{I}\subseteq{\rm Dom}(\hat{K})$. Given $\hat{x}\in\hat{I}$, we
have that there exists $x\in\Gamma^{-1}(\hat{x})$  such that $x\in I$. Since $K$
is a weak solution to ${\cal P}$, there exists  $u\in A_U$ s.t. $K(x,u)$, thus
by definition of ${\hat K}$, $\hat{K}(\hat{x}, \hat{u})$  holds, and hence
$\hat{x}\in{\rm Dom}({\hat K})$.


Now, we show that for all $\hat{x}\in{\rm Dom}(\hat{K})$,  $J_{\rm
weak}(\hat{\cal H}^{(\hat{K})}, \hat{G}, \hat{x})$ is finite.  By definition of
$\hat{K}$, and since $K$ is a weak solution to ${\cal P}$, there exists a finite
path  $\pi=x_0 u_0 x_1 u_1 \ldots u_{n-1} x_n$ such that $x_0 \in
\Gamma^{-1}(\hat{x})$, $x_i \in A_X$ for all $0 \leq i \leq n$ and $x_n\in {\cal
B}_{\|\Gamma\|}(G)$.

Let $\hat{\pi} = \hat{x}_0 \hat{u}_0\ldots \hat{u}_{n-1} \hat{x}_n$, and let
$\rho$ be defined from $\hat{\pi}$ by collapsing all consecutive equal
(abstract) states into one state. Formally, $|\rho| = \max_{i \in [n]} \alpha(i)$ and
$\rho(i) = \hat{\pi}^{(S)}(\alpha(i)) = \Gamma(\pi^{(S)}(\alpha(i)))$, where the function 
$\alpha:\N\rightarrow\N$ is recursively defined as follows:

\begin{itemize}  

\item let $Z_z = \{j \;|\; z < j \leq n \land \Gamma(x_j)\not=\Gamma(x_{z})\}$

\item $\alpha(0) = 0$

\item $\alpha(i + 1) = \left\{
\begin{array}{cc}
\alpha(i) & \mbox{if } Z_{\alpha(i)} = \varnothing\\
\min Z_{\alpha(i)} & \mbox{otherwise}
\end{array}
\right.$

\end{itemize}


In a full ${\cal Q}$ control abstraction $\hat{\cal H}$, if $(x, u, x')$ is 
transition of ${\rm LTS}({\cal H})$ and $\hat{x}\not=\hat{x}'$, then
$\hat{N}(\hat{x}, \hat{u}, \hat{x}')$. Then we have that $\rho$ is a
finite path in $\hat{\cal H}^{(\hat{K})}$ that leads from $\hat{x}_0 = \hat{x}$ to
the goal.  As a consequence, $\hat{K}$ is a weak solution to $\hat{\cal P}$.


\paragraph*{Proof of point~\ref{item:whymaximal}}

Analogously to the proof of point~\ref{item:whyminimal}, let $\hat{\cal H}_1 =
(\Gamma(A_X),$ $\Gamma(A_U),$ $T_1)$ and $\hat{\cal H}_2 =
(\Gamma(A_X),$ $\Gamma(A_U),$ $T_2)$ be two full ${\cal Q}$
control abstractions of ${\cal H}$, with $\hat{\cal H}_1 \sqsubseteq\hat{\cal
H}_2$. If $\hat{\cal H}_1 =\hat{\cal H}_2$ the thesis is proved, thus let us
suppose that $\hat{\cal H}_1 \neq\hat{\cal H}_2$. By Fact~\ref{lemma:structure},
the only difference between $\hat{\cal H}_1$ and $\hat{\cal H}_2$ may be in a
finite number of eliminable self loops which are in $\hat{\cal H}_2$ only. Let
$B = \{(\hat{x}_1, \hat{u}_1, \hat{x}_1), \ldots, (\hat{x}_m, \hat{u}_m,
\hat{x}_m)\}$ be the set of such self loops. Let $\hat{K}$ be the weak mgo to
the \gls{LTS} control problem $(\hat{\cal H}_1, \hat{I}, \hat{G})$ and let
$(\hat{x}_i, \hat{u}_i, \hat{x}_i)\in B$. 


Since we already know that $\hat{I} \subseteq {\rm Dom}(\hat{K})$, we only have
to prove that i) $\hat{K}$ is a controller for $\hat{\cal H}_2$ and that ii)
$J_{\rm weak}(\hat{\cal H}_2^{(\hat{K})},\hat{G},\hat{x}) < \infty$ for all
$\hat{x} \in {\rm Dom}(\hat{K})$. 

As for the first point, we have to show that $\hat{K}(\hat{x}, \hat{u})$ implies
$\hat{u} \in {\rm Adm}(\hat{\cal H}_2, \hat{x})$
(Def.~\ref{def:ctrproblem-lts}). Since $\hat{K}(\hat{x}, \hat{u})$ implies
$\hat{u} \in {\rm Adm}(\hat{\cal H}_1, \hat{x})$, and since $\hat{u} \in {\rm
Adm}(\hat{\cal H}_1, \hat{x})$ implies $\hat{u} \in {\rm Adm}(\hat{\cal H}_2,
\hat{x})$, this point is proved.

As for the second one, it is sufficient to prove that $J_{\rm weak}(\hat{\cal
H}_2^{(\hat{K})},\hat{G},\hat{x}) \leq J_{\rm weak}(\hat{\cal
H}_1^{(\hat{K})},\hat{G},\hat{x})$. This can be proved by induction on the value
of $J_{\rm weak}(\hat{\cal H}_1^{(\hat{K})},\hat{G},\hat{x})$. 

Suppose $J_{\rm weak}(\hat{\cal H}_1^{(\hat{K})},\hat{G},\hat{x}) = 1$. Then,
${\rm Img}(\hat{\cal H}_1^{(\hat{K})},\hat{x},\hat{u}) \cap \hat{G} \neq
\varnothing$ for all $\hat{u}$ s.t. $\hat{K}(\hat{x}, \hat{u})$. Since
$\hat{\cal H}_2$ only adds self loops to $\hat{\cal H}_1$, we have that ${\rm
Img}(\hat{\cal H}_2^{(\hat{K})},\hat{x},\hat{u}) \cap \hat{G} \neq \varnothing$
for all $\hat{u}$ s.t. $\hat{K}(\hat{x}, \hat{u})$, thus $J_{\rm weak}(\hat{\cal
H}_2^{(\hat{K})},\hat{G},\hat{x}) = 1 = J_{\rm weak}(\hat{\cal
H}_1^{(\hat{K})},\hat{G},\hat{x})$.


Suppose now that for all $\hat{x}$ s.t. $J_{\rm weak}(\hat{\cal
H}_1^{(\hat{K})},\hat{G},\hat{x}) = n$, $J_{\rm weak}(\hat{\cal
H}_2^{(\hat{K})},\hat{G},\hat{x}) \leq J_{\rm weak}(\hat{\cal
H}_1^{(\hat{K})}$, $\hat{G},$ $\hat{x})$. Let $\hat{x}$ be s.t. $J_{\rm
weak}(\hat{\cal H}_1^{(\hat{K})},\hat{G},\hat{x}) = n + 1$. If $(\hat{x},
\hat{u}, \hat{x}) \notin B$ for any $\hat{u}$, then ${\rm Img}(\hat{\cal
H}_1^{(\hat{K})}, \hat{x}, \hat{u}) = {\rm Img}(\hat{\cal H}_2^{(\hat{K})},
\hat{x}, \hat{u})$ for all $\hat{u}$, thus $J_{\rm weak}(\hat{\cal
H}_2^{(\hat{K})},\hat{G},\hat{x}) \leq J_{\rm weak}(\hat{\cal
H}_1^{(\hat{K})},\hat{G},\hat{x})$ by induction hypothesis. Otherwise, let
$(\hat{x}, \hat{u}, \hat{x}) \in B$ for some $\hat{u}$.  If
$\hat{x}\notin\hat{G}$ we simply have that $J_{\rm weak}(\hat{\cal
H}_2^{(\hat{K})},\hat{G},\hat{x}) \leq J_{\rm weak}(\hat{\cal
H}_1^{(\hat{K})},\hat{G},\hat{x})$ by induction hypothesis. Otherwise, if
$\hat{x}\in\hat{G}$, let $\hat{K}_1$ be s.t. $\hat{K}_1(\hat{x}, \hat{u}) = 0$
and $\hat{K}_1(\hat{s}, \hat{a}) = \hat{K}(\hat{s}, \hat{a})$ for $(\hat{s},
\hat{a}) \neq (\hat{x}, \hat{u})$. Then, $J_{\rm weak}(\hat{\cal
H}_2^{(\hat{K})},\hat{G},\hat{x}) = \max\{1, J_{\rm weak}(\hat{\cal
H}_1^{(\hat{K}_1)}$, $\hat{G},$ $\hat{x})\}$ $\leq J_{\rm weak}(\hat{\cal
H}_1^{(\hat{K})},\hat{G},\hat{x}')$, thus the thesis is proved.
%
%
%
%
%
\end{proof}

%
%
%
 

\section{Quantized Controller Synthesis}
\label{ctr-syn-algorithm.tex}

In this section, we present the quantized controller synthesis  algorithm
(function \fun{qCtrSyn} in  Alg.~\ref{qfc-syn-outline}). Function \fun{qCtrSyn}
takes as input a \gls{DTLHS} control problem ${\cal P} = ({\cal H}, I, G)$ and a
quantization ${\cal Q}$. Then, resting on Theor.~\ref{corol:min-ctr-abs},
\fun{qCtrSyn} computes  an admissible ${\cal Q}$ control  abstraction $\hat{\cal
M}$ in order to find a ${\cal Q}$ \gls{QFC} strong solution to ${\cal P}$, 
and a full ${\cal Q}$ control abstraction $\hat{\cal W}$
to determine if   such a solution does not exist. 

Sects.~\ref{proof:selfLoop},~\ref{lemma:minCtrAbs.proof},~\ref{sec:qks-correctness-proof},
and~\ref{admissibility.check.subsec} show theoretical and implementation
details that can be skipped at a first reading.


\sloppy

Namely, as for the sufficient condition, we compute the strong mgo $\hat{K}$ for
the \gls{LTS} control problem $(\hat{\cal M}, \Gamma(I), \Gamma(G))$. If $\hat{K}$
exists, then a ${\cal Q}$ \gls{QFC} strong solution to ${\cal P}$ may be built from
$\hat{K}$. Note that, if $\hat{K}$  does not exist, a strong solution may exist for
some other admissible ${\cal Q}$ control abstraction $\hat{\cal H}$. However, by
point~\ref{item:whyminimal} of Theor.~\ref{corol:min-ctr-abs}, $\hat{\cal H}$
must be lower than $\hat{\cal M}$ in the hierarchy lattice (see
Fig.~\ref{lattice.eps}). 
This suggests to compute $\hat{\cal M}$ as the minimum
(admissible) ${\cal Q}$ control abstraction of ${\cal H}$.
Since by Prop.~\ref{minctrabs.undec.prop} we are not able to compute the
minimum ${\cal Q}$ control abstraction, we compute $\hat{\cal M}$ as a {\em close to}
minimum admissible ${\cal Q}$ control abstraction, i.e. an admissible ${\cal
Q}$ control abstraction containing as few eliminable self loops as possible (see
Ex.~\ref{example-strong.tex}).

\fussy

As for the necessary condition, we compute the weak mgo $\hat{K}$ for the \gls{LTS}
control problem $(\hat{\cal W}, \Gamma(I), \Gamma(G))$. If $\hat{K}$ does not
exists, then a ${\cal Q}$ \gls{QFC} (weak as well as strong) solution to ${\cal P}$
cannot exist.  Note that, if $\hat{K}$ exists, a weak solution may not exist for some
other full ${\cal Q}$ control abstraction $\hat{\cal H}$. However, by
point~\ref{item:whymaximal} of Theor.~\ref{corol:min-ctr-abs}, $\hat{\cal H}$
must be lower than $\hat{\cal W}$ in the hierarchy lattice (see
Fig.~\ref{lattice.eps}). Hence, again by Prop.~\ref{minctrabs.undec.prop}, we
compute $\hat{\cal W}$ as the close to minimum full ${\cal Q}$ control
abstraction.



\begin{algorithm}
  \caption[\gls{QFC} Synthesis]{\gls{QFC} synthesis} 
  \label{qfc-syn-outline}
  \begin{algorithmic}[1]
    \REQUIRE
    \gls{DTLHS} control problem $({\cal H}, I, G)$, quantization ${\cal Q}=(A, \Gamma)$
    \ENSURE {\fun{qCtrSyn}$({\cal H}$, ${\cal Q}$, $I$, $G)$}
    \STATE $\hat{I}\gets\Gamma(I)$, $\hat{G}\gets\Gamma(G)$\label{item:IGquant} 
    \STATE $\hat{\cal M}$ $\gets$\fun{minCtrAbs}$({\cal H}$, ${\cal Q})$
    		\label{item:minCtrAbs}
    \STATE $(b,\hat{D}, \hat{K})$ $\gets$ 
    \fun{strongCtr}$(\hat{\cal M}$, $\hat{I}$, $\hat{G})$
    			\label{item:StrongCtr}
    \STATE {\bf if} $b$ 
    {\bf then return }$(${\sc Sol}, $\hat{D}, \hat{K})$
    			\label{item:qfcFound} 
    \STATE $\hat{\cal W}$ $\gets$ 
    \fun{minFullCtrAbs} $({\cal H}, {\cal Q})$
    			\label{item:minFullCtrAbs}
    \STATE {\bf if} 
    \fun{existsWeakCtr}$(\hat{\cal W}, \hat{I}, \hat{G})$ 
    {\bf then return}~$(${\sc Unk}, $\hat{D}, \hat{K})$
    			\label{item:fail}
    \STATE {\bf else return} 
    $(${\sc NoSol}, $\hat{D}, \hat{K})$
    			\label{item:noSolution}
  \end{algorithmic}
\end{algorithm}  

\subsection{\gls{QFC} Synthesis Algorithm}
\label{sec:QFCsynthesis}
\label{sec:qfc-syn-outline}

\sloppy

Our \gls{QFC} synthesis algorithm (function \fun{qCtrSyn}  
outlined in Alg.~\ref{qfc-syn-outline})  
takes as input a \gls{DTLHS} 
${\cal H}$ = ($X$, $U$, $Y$, $N$),
a quantization ${\cal Q} = (A, \Gamma$), and 
two predicates $I$ and $G$ over $X$, such that
(${\cal H}$, $I$, $G$) is a \gls{DTLHS} control problem.
Function \fun{qCtrSyn} returns a tuple
($\mu$, $\hat{D}$, $\hat{K}$), where:
$\mu \in \{\mbox{\sc Sol}, \mbox{\sc NoSol}, \mbox{\sc Unk}\}$, 
$\hat{D}$ = $\mbox{\rm Dom}(\hat{K})$ and
$\hat{K}$ is such that the controller 
$K$, defined by $K(x,u)=\hat{K}(\Gamma(x), \Gamma(u))$ 
is a ${\cal Q}$ \gls{QFC} (strong) solution to the control problem
$({\cal H}, \Gamma^{-1}(\hat{D}), G)$.

\fussy

We represent boolean functions (e.g. the transition relation
of $\hat{\cal H}$) and sets (by using their
characteristic functions) using 
\newacronym{OBDD}{OBDD}{Ordered Binary Decision Diagram}%
\glspl{OBDD}~\cite{Bry86}. 
For the sake of clarity, however, we will present
our algorithms 
using a set theoretic notation for sets and predicates
over sets.

Alg.~\ref{qfc-syn-outline} starts (line~\ref{item:IGquant}) by computing 
a quantization $\hat{I}$ of the initial region $I$ 
and a quantization $\hat{G}$ of the goal region $G$ 
(further details are given in Sect.~\ref{dths-ctr-abs.tex}).

Function \fun{minCtrAbs} in line~\ref{item:minCtrAbs} computes the
close to minimum ${\cal Q}$ control abstraction $\hat{\cal M}$ of ${\cal H}$ 
(see Sect.~\ref{sec:minFullCtrAbs} for further details about \fun{minFullCtrAbs}).

Line~\ref{item:StrongCtr} determines if a strong mgo to the \gls{LTS} control problem 
$\hat{\cal P} = (\hat{\cal M}, \hat{I}, \hat{G})$ exists by  
calling function \fun{strongCtr} \rimandotekrep{sec:lts-alg} that implements a variant of the
algorithm in \cite{strong-planning-98}.
Given $\hat{\cal M}, \hat{I}, \hat{G}$, function \fun{strongCtr} returns a triple 
$(b, \hat{D}, \hat{K})$ 
such that 
$\hat{K}$ is the strong mgo to 
$(\hat{\cal M}, \varnothing, \hat{G})$ and $\hat{D} = \mbox{\rm Dom}(\hat{K})$.
If 
$b$ is {\sc True} then $\hat{K}$ is a strong mgo for $\hat{\cal P}$ (i.e. $\hat{I} \subseteq \hat{D}$),
and 
\fun{qCtrSyn} returns the tuple
$(\mbox{\sc Sol}, \hat{D}, \hat{K})$ (line~\ref{item:qfcFound}). 
By 
Theor.~\ref{corol:min-ctr-abs} (point~\ref{item:minimal}),  
$K(x, u)$ = $\hat{K}(\Gamma(x), \Gamma(u))$ is a ${\cal Q}$ \gls{QFC} solution to the
\gls{DTLHS} control problem $({\cal H}, I, G$). Otherwise, 
in lines~\ref{item:minFullCtrAbs}--\ref{item:noSolution}
\fun{qCtrSyn} tries to establish if such a solution may exist or not.

\sloppy

Function \fun{minFullCtrAbs} in line~\ref{item:minFullCtrAbs} computes the
close to minimum full ${\cal Q}$ control abstraction $\hat{\cal W}$ of ${\cal H}$ 
(see Sect.~\ref{sec:minFullCtrAbs} for further details about \fun{minFullCtrAbs}).
Line~\ref{item:fail} checks if the
weak mgo to 
$\hat{\cal P}' = (\hat{\cal W}, \hat{I}, \hat{G})$ exists  
by calling function \fun{existsWeakCtr} \rimandotekrep{sec:lts-alg}, which is based on the algorithm
in~\cite{Tro98}.

\fussy

If function \fun{existsWeakCtr} returns {\sc False}, 
then a weak mgo 
to 
$\hat{\cal P}'$ 
does not exist, and since the weak mgo is unique 
no weak solution exists to $\hat{\cal P}'$. 
By 
Theor.~\ref{corol:min-ctr-abs} (point~\ref{item:maximal}),
no ${\cal Q}$ \gls{QFC} solution exists for the
\gls{DTLHS} control problem $({\cal H}, I, G)$ and accordingly \fun{qCtrSyn}
returns {\sc NoSol} (line~\ref{item:noSolution}). Otherwise no
conclusion can be drawn and accordingly {\sc Unk} is returned (line~\ref{item:fail}). 
%
In any case, the 
strong mgo $\hat{K}$ for $\hat{\cal P}$ for the
(close to) minimum control abstraction is returned, together with its
controlled region $\hat{D}$.
\subsection{Synthesis Algorithm Correctness}
\label{sec:qks-correctness}
The above considerations imply correctness
of function \fun{qCtrSyn} (and thus of our approach), as stated by the following theorem.

\begin{theorem}
  \label{theor:synth-correct}
  Let $\cal H$ be a \gls{DTLHS}, 
  ${\cal Q}=(A, \Gamma)$  be a quantization, and  
  (${\cal H}$, $I$, $G$) be a \gls{DTLHS} control problem. 
  Then \fun{qCtrSyn}(${\cal H}$, ${\cal Q}$, $I$, $G$)
  returns a triple
  ($\mu$, $\hat{D}$, $\hat{K}$) such that:
$\mu \in \{\mbox{\sc Sol},$ $\mbox{\sc NoSol},$ $\mbox{\sc Unk}\}$,
$\hat{D}$ = $\mbox{\rm Dom}(\hat{K})$ and,
for all control laws $k$ for $\hat{K}$, $K(x,u)=(k(\Gamma(x)) = \Gamma(u))$ 
is a ${\cal Q}$ \gls{QFC} solution to the control problem
$({\cal H}, \Gamma^{-1}(\hat{D}), G)$.
Furthermore, the following holds:
i) if $\mu$ = $\mbox{\sc Sol}$
then 
$I \subseteq \Gamma^{-1}(\hat{D})$ and 
$K$ is a 
${\cal Q}$ \gls{QFC} solution to the control problem
$({\cal H}, I, G)$;
%
ii) if $\mu$ = $\mbox{\sc NoSol}$ then
there is no 
${\cal Q}$ \gls{QFC} solution to the control problem
$({\cal H}, I, G)$.
%
\end{theorem}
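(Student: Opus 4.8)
The plan is to treat this as an assembly result: essentially all the mathematical content already lives in Theorem~\ref{corol:min-ctr-abs}, and what remains is to thread its four points together with the input/output specifications of the subroutines \fun{strongCtr} and \fun{existsWeakCtr}. First I would dispatch the structural claims. The membership $\mu \in \{\mbox{\sc Sol}, \mbox{\sc NoSol}, \mbox{\sc Unk}\}$ and the identity $\hat{D} = \mbox{\rm Dom}(\hat{K})$ are read directly off Alg.~\ref{qfc-syn-outline} together with the specification of \fun{strongCtr} at line~\ref{item:StrongCtr}, which returns the strong mgo $\hat{K}$ to $(\hat{\cal M}, \varnothing, \hat{G})$ and sets $\hat{D} = \mbox{\rm Dom}(\hat{K})$.

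Next I would establish the unconditional claim that, for every control law $k$ for $\hat{K}$, the controller $K(x,u) = (k(\Gamma(x)) = \Gamma(u))$ solves $({\cal H}, \Gamma^{-1}(\hat{D}), G)$. Since $\hat{K}$ is the strong mgo to $(\hat{\cal M}, \varnothing, \hat{G})$, it is in particular a strong solution to the \gls{LTS} control problem $(\hat{\cal M}, \hat{D}, \hat{G})$, using $\mbox{\rm Dom}(\hat{K}) = \hat{D}$. Because $\hat{D} \subseteq \Gamma(A_X)$ we have $\Gamma(\Gamma^{-1}(\hat{D})) = \hat{D}$, so this is exactly a strong solution to $(\hat{\cal M}, \Gamma(\Gamma^{-1}(\hat{D})), \Gamma(G))$. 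As $\hat{\cal M}$ is an admissible ${\cal Q}$ control abstraction (output of \fun{minCtrAbs}), point~\ref{item:minimal} of Theorem~\ref{corol:min-ctr-abs} applies and yields precisely that $K$ is a ${\cal Q}$ \gls{QFC} strong solution to $({\cal H}, \Gamma^{-1}(\hat{D}), G)$. Part~(i) then follows quickly: if $\mu = \mbox{\sc Sol}$, the test at line~\ref{item:qfcFound} guarantees $\hat{I} = \Gamma(I) \subseteq \hat{D}$, whence $I \subseteq \Gamma^{-1}(\Gamma(I)) \subseteq \Gamma^{-1}(\hat{D})$; since being a \gls{QFC} solution for a given initial region is, by Def.~\ref{def:sol} and Def.~\ref{def:qfc}, a fortiori being a solution for any subregion (the conditions constrain only $\mbox{\rm Dom}(K)$), $K$ solves $({\cal H}, I, G)$.

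For part~(ii) I would invoke the necessary-condition half of Theorem~\ref{corol:min-ctr-abs}. If $\mu = \mbox{\sc NoSol}$, then \fun{existsWeakCtr}$(\hat{\cal W}, \hat{I}, \hat{G})$ returned {\sc False} at line~\ref{item:noSolution}; by its specification the weak mgo to $(\hat{\cal W}, \hat{I}, \hat{G})$ does not exist, and since the weak mgo is unique this means $(\hat{\cal W}, \Gamma(I), \Gamma(G))$ has no weak solution at all. Because $\hat{\cal W}$ is a full ${\cal Q}$ control abstraction (output of \fun{minFullCtrAbs}), point~\ref{item:maximal} of Theorem~\ref{corol:min-ctr-abs} then rules out any ${\cal Q}$ \gls{QFC} (weak as well as strong) solution to $({\cal H}, I, G)$.

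I expect the delicate part to be bookkeeping rather than depth. One must keep straight that soundness uses only points~\ref{item:minimal} and~\ref{item:maximal}, which hold for \emph{any} admissible respectively full control abstraction; the fact that \fun{minCtrAbs} and \fun{minFullCtrAbs} return merely close-to-minimum abstractions (by Prop.~\ref{minctrabs.undec.prop} the exact minimum is not computable) therefore never threatens correctness. It affects only how often the algorithm is forced into the {\sc Unk} branch, a matter governed by points~\ref{item:whyminimal} and~\ref{item:whymaximal} and outside the scope of this theorem. The one external ingredient I must cite rather than reprove is uniqueness of the weak mgo (the analogue for weak solutions of Prop.~\ref{prop:mgo}), which underlies the inference ``no weak mgo $\Rightarrow$ no weak solution'' used in part~(ii).
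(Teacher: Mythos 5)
Your proposal is correct and follows essentially the same route as the paper's own proof: invoke Prop.~\ref{lemma:CtrAbs} (correctness of \fun{minCtrAbs}/\fun{minFullCtrAbs}) to get that $\hat{\cal M}$ is admissible and $\hat{\cal W}$ is full, then apply points~\ref{item:minimal} and~\ref{item:maximal} of Theorem~\ref{corol:min-ctr-abs} for parts (i) and (ii), using uniqueness of the weak mgo to pass from ``no weak mgo'' to ``no weak solution.'' If anything, your write-up is more careful than the paper's, which silently omits the unconditional claim about $({\cal H}, \Gamma^{-1}(\hat{D}), G)$ that you correctly derive by applying point~\ref{item:minimal} to $(\hat{\cal M}, \hat{D}, \hat{G})$.
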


\sloppy

\begin{remark}
  \label{rem:timeoptimality}
\cite{MazoTabuada11} describes a method for the automatic control software synthesis for continuous time linear systems.
Function \fun{strongCtr}, as well as the approach in \cite{MazoTabuada11}, returns $\hat{K}$ as a (worst case) \emph{time optimal} controller, i.e. in each state $\hat{K}$ enables the actions leading to a goal state in the least number of transitions.
This stems from the fact that in both cases (\fun{strongCtr} and \cite{MazoTabuada11}) the \gls{OBDD} representation for the controller is computed using the approach in \cite{strong-planning-98} where symbolic control synthesis algorithms for finite state \glspl{LTS} have been studied in a universal planning setting.
\end{remark}

\begin{remark}
\label{rem:emsoft12}
Instead of computing the controller (function \fun{strongCtr}) with \cite{strong-planning-98}, it is possible to trade the size of the synthesized controller with time optimality while preserving closed loop performances. Such an issue has been investigated in \cite{emsoft12}.
\end{remark}

\fussy

\begin{remark}
\label{rem:near-optimal}
%
Note however that $\hat{K}$ may not be time optimal for
the real plant. In fact, self loops elimination shrinks all concrete sequences 
of the form $x_n$, $ u_n$, $ \ldots$, $ x_m$ in every path of
LTS$({\cal H})$ into a single abstract transition $(\hat{x}_n,$ $\hat{u}_n,$ $\hat{x}_m)$
of $\hat{\cal M}$ whenever  $\hat{x}_n=\ldots
=\hat{x}_{m-1}$ and $\hat{u}_n=\ldots
=\hat{u}_{m-1}$. Thus, the length of paths
in the plant model and those in the control abstraction used for the synthesis may not coincide.
Moreover, nondeterminism added by quantization might lead to prefer an action
$\hat{u}_1$ to an action $\hat{u}_2$ for an abstract state $\hat{x}$, whilst
actions in $\hat{u}_2$ might be better for some real states inside $\hat{x}$.
Finally, since we are not able to compute the minimum control abstraction, we
may discard
a possibly optimal action $\hat{u}$ on a state $\hat{x}$ if the following holds: $(\hat{x}, \hat{u}, \hat{x})$ is an eliminable self loop, but function \fun{minCtrAbs} decides that it is non-eliminable.
For these
reasons we refer to our controller as a \emph{near time optimal} controller.
\end{remark}





\subsection{Quantization}
\label{dths-ctr-abs.tex}

In the following let $\cal H$ = ($X$, $U$, $Y$, $N$) be a \gls{DTLHS}, 
${\cal Q}=(A, \Gamma)$ be a quantization for ${\cal H}$,
and (${\cal H}$, $I$, $G$) be a \gls{DTLHS} control problem.

\sloppy

In our approach we consider $\Gamma$ only in problems of type
$P(W) \equiv (\max, J(W), L(W) \land (\Gamma(W) =
\hat{v}))$, where $W$ is either $X, X'$ or $U$, $J(W)$ is a linear expression,
$L(W)$ a conjunctive predicate and $(\Gamma(W) = \hat{v})\equiv\bigwedge_{i\in
[|W|]} (\gamma_{w_i}(w_i) = \hat{v}_{i})$, with $w_i \in W$. In order to be able
to solve $P(W)$ via a \gls{MILP} solver, 
%
we restrict ourselves to quantization functions
$\gamma_{w_i}$ for which equality tests can be
represented by using conjunctive predicates.  
Namely, for $w \in X \cup U$, we employ the uniform quantization $\gamma_w : A_w
\rightarrow [0, \Delta_w - 1]$, defined for a given $\Delta_w$ as follows. Let $\delta_w =
(\sup A_w - \inf A_w)/\Delta_w$. We have that $\gamma_w(w)=\hat{z}$ if and only if the conjunctive predicate
$P_{\gamma_w}(w, \hat{z}) \equiv \inf A_w + \delta_w\hat{z} \leq w \leq
\inf A_w + \delta_w(\hat{z}+1)$ holds.

\fussy


We may now explain how $\hat{I}, \hat{G}$ are effectively computed in
line~\ref{item:IGquant} of Alg.~\ref{qfc-syn-outline}.
Since the initial region $I$ is represented as a conjunctive predicate, 
its quantization $\hat{I}$ 
is computed by solving $|\Gamma(A_X)|$ feasibility
problems. More precisely, $\hat{I}$ = $\{\hat{x} \; | \;$ 
\fun{feasible}($I(X)\;\land \;\Gamma(X)=\hat{x}$)$\}$. 
Similarly, the quantization $\hat{G}$ of the goal region $G$ 
is $\hat{G}=\{\hat{x}\; |\; \mbox{\fun{feasible}}(G(X)\land \Gamma(X)=\hat{x})\}$.

%
%
%
%
%
%
%

\begin{algorithm}
  \caption[Synthesis: Building control abstractions]{Building control abstractions}
  \label{ctr-abs.alg}\label{symb.alg.rat}
  \begin{algorithmic}[1]
    \REQUIRE
    \gls{DTLHS} ${\cal H} = (X, U, Y, N)$, quantization ${\cal Q}=(A, \Gamma)$.
    \ENSURE {\fun{minCtrAbs}
    $({\cal H}$, ${\cal Q})$}
			\label{name.alg.step}
    \STATE $\hat{N} \gets \varnothing$
			\label{init.alg.step}
    \FORALL {$\hat{x} \in \Gamma(A_{X})$} 
\label{forall_s.alg.step}
     \FORALL {$\hat{u} \in \Gamma(A_{U})$} \label{forall_u.alg.step} 
      \STATE \colorbox{light-gray}{{\bf if} $\neg$ \fun{${\cal Q}$-admissible}$({\cal H}, {\cal Q}, \hat{x}, \hat{u})$ {\bf then continue}} 
		\label{check_s_u.alg.step} 
      \STATE 
      {{\bf if} \fun{selfLoop}(${\cal H}$,${\cal Q}$, $\hat{x}$,$\hat{u}$) 
{\bf then} $\hat{N} \gets \hat{N} \cup \{(\hat{x}, \hat{u}, 
\hat{x})\}$}  
      \label{self_loop.alg.step}
	  \STATE ${\cal O}$ $\gets$ \fun{overImg}$({\cal H}, {\cal Q}, \hat{x}, \hat{u})$
	  \label{overimg.alg.step}	
      \FORALL {$\hat{x}' \in \Gamma({\cal O})$}
        \label{forall_s_prime.alg.step}
	\IF {
	{$\hat{x} \neq \hat{x}'\land$}$\mbox{\fun{existsTrans}}({\cal H}, {\cal Q}, \hat{x},\hat{u},\hat{x}')$}
\label{check_s_s_prime.alg.step}
        \STATE $\hat{N}\!\gets\!\hat{N} \cup \{(\hat{x}, \hat{u}, \hat{x}')\}$\label{update_t.alg.step}
        \ENDIF
      \ENDFOR
     \ENDFOR
    \ENDFOR
    \STATE {\bf return} $\hat{N}$\label{return.alg.step}
  \end{algorithmic}
\end{algorithm}  
\subsection{Computing Minimum Control Abstractions}
\label{minmax-ctrabs.tex}

In this section, we present in Alg.~\ref{ctr-abs.alg} function 
\fun{minCtrAbs}, which effectively computes a close to minimum 
${\cal Q}$ control abstraction $\hat{\cal M} = 
(\Gamma(A_X), \Gamma(A_U), \hat{N})$ for a given ${\cal H}$. 

Starting from the empty transition relation (line \ref{init.alg.step})
function \fun{minCtrAbs} checks for every triple
$(\hat{x},\hat{u},\hat{x}') \in \Gamma(A_X)\times\Gamma(A_U)\times\Gamma(A_X)$ 
if the transition $(\hat{x},\hat{u},\hat{x}')$ 
belongs to 
$\hat{\cal M}$
and accordingly adds it 
to $\hat{N}$ or not.


\sloppy

For any pair $(\hat{x},\hat{u})$ in $\Gamma(A_X)\times\Gamma(A_U)$ 
line~\ref{check_s_u.alg.step} 
checks if $\hat{u}$ is ${\cal Q}$-admissible in $\hat{x}$.
This check is carried out by determining if
the 
predicate
$P(X, U, Y, X', \hat{x}, \hat{u}) \equiv N(X, U, Y, X')\land \Gamma(X)=\hat{x} \land \Gamma(U)=\hat{u}     
\land X'\not\in A_X$ is not feasible. 
If $\hat{u}$ is not ${\cal Q}$-admissible in $\hat{x}$ (i.e., if $P(X, U, Y, X', \hat{x}, \hat{u})$ is feasible),
no transition of the form $(\hat{x},\hat{u},\hat{x}')$
is added to $\hat{N}$.
Note that $P(X, U, Y, X', \hat{x}, \hat{u})$ is not a conjunctive predicate,
however it is possible to check its feasibility by properly calling function
\fun{feasible} $2|X|$ times (Sect.~\ref{admissibility.check.subsec}).


\fussy


If $\hat{u}$ is ${\cal Q}$-admissible in $\hat{x}$, line~\ref{self_loop.alg.step} checks if the self loop
$(\hat{x},\hat{u},\hat{x})$ has to be added to $\hat{N}$.
To this aim, we employ a function \fun{selfLoop} (see Sect.~\ref{self_loop.subsubsec}) 
which takes a (state, action) pair $(\hat{x},\hat{u})$ 
and returns {\sc False} if 
the self loop $(\hat{x},\hat{u},\hat{x})$ is eliminable. 

\sloppy

Function \fun{overImg} (line~\ref{overimg.alg.step}) computes a  
rectangular region ${\cal O}$, that is a \emph{quite tight} 
overapproximation of the set of one step reachable states 
from $\hat{x}$ via $\hat{u}$.
${\cal O} $ is obtained by computing for each state variable $x_i$ 
the minimum and maximum possible values for the corresponding next state
variable. Namely,
${\cal O} = 
\prod_{i = 1, \ldots, |X|}[\gamma_{x_i}(m_i), \gamma_{x_i}(M_i)]
$
where 
$m_i = \fun{optimalValue}(\min,$ $x'_i,$ $N(X, U, Y, X') \land A(X')\land
\Gamma(X)=\hat{x} \land \Gamma(U)=\hat{u})$ 
and 
$M_i = \fun{optimalValue}(\max,$ $x'_i,$ $N(X, U, Y, X') \land A(X')\land
\Gamma(X)=\hat{x} \land \Gamma(U)=\hat{u})$.

\fussy

Finally, for each abstract state $\hat{x}'\in\Gamma({\cal O})$
line~\ref{check_s_s_prime.alg.step} checks if 
there exists a concrete transition realizing the abstract transition
($\hat{x}$, $\hat{u}$, $\hat{x}'$) when $\hat{x}$ $\not=$ $\hat{x}'$.
To this end, function \fun{existsTrans} solves the \gls{MILP} problem
$N(X, U, Y, X')\land \Gamma(X)=\hat{x} \land \Gamma(U)=\hat{u}\land \Gamma(X')=\hat{x}'$.

\begin{remark}
\label{explicit-loops-remark}
%
%
From the nested loops in lines 
\ref{forall_s.alg.step}, 
\ref{forall_u.alg.step},
\ref{forall_s_prime.alg.step} 
we have that \fun{minCtrAbs} worst case runtime 
is $O(|\Gamma(A_X)|^{2}|\Gamma(A_U)|)$. However,
thanks to the heuristic implemented in function \fun{overImg}, 
\fun{minCtrAbs} typical runtime is about
$O(|\Gamma(A_X)||\Gamma(A_U)|)$
as confirmed by our experimental results
(see Sect.~\ref{expres.tex}, Fig.~\ref{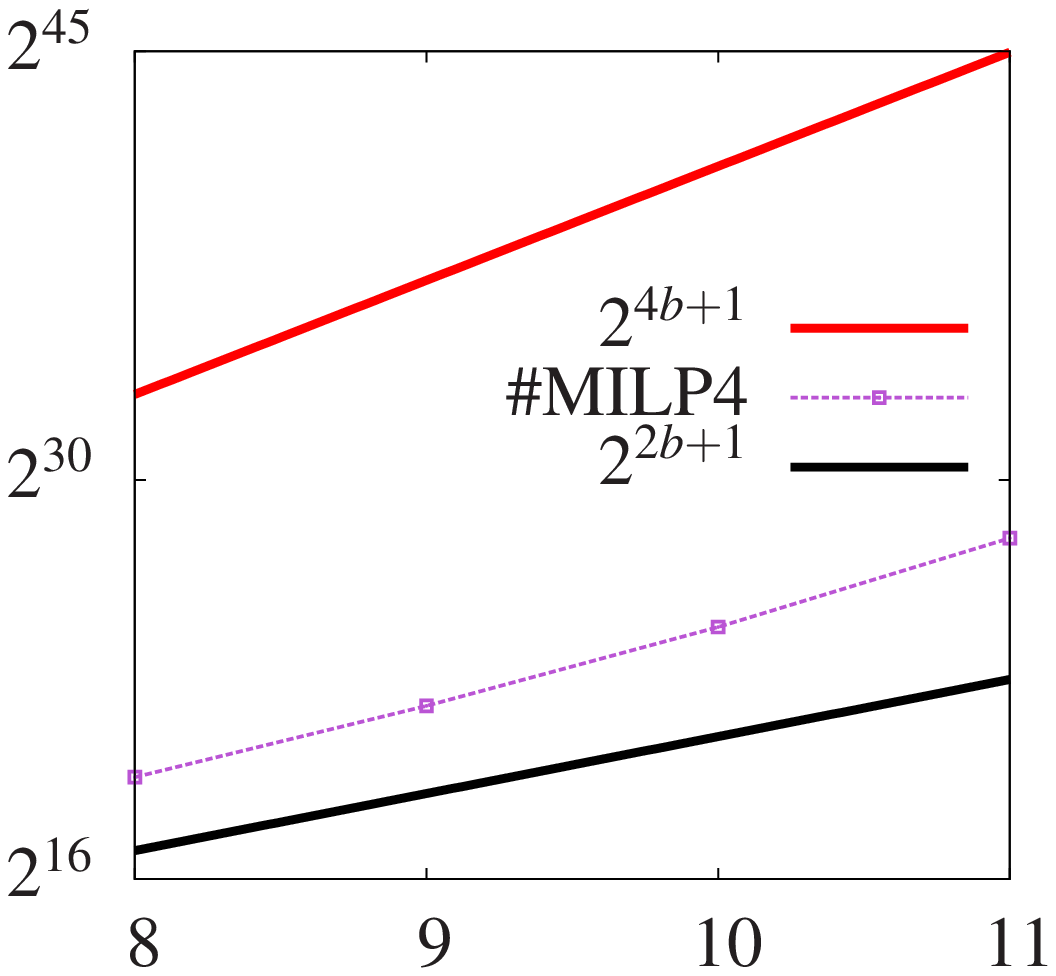}). The same holds for
function \fun{minFullCtrAbs} (see Sect.~\ref{sec:minFullCtrAbs}).
%
%
\end{remark}

\begin{remark}
\label{symbolic-modes-remark}
Function \fun{minCtrAbs} is explicit in the (abstract) states and actions 
of $\hat{\cal H}$ and symbolic with respect to 
the auxiliary variables (\emph{modes}) in the 
transition relation $N$ of ${\cal H}$.
As a result our approach will work well with systems with just a few
state variables and many modes, our target here.
\end{remark}

\subsubsection{Computing Minimum Full Control Abstraction}
\label{sec:minFullCtrAbs}

Function \fun{minCtrAbs} can be easily modified in order to compute
the close to minimum full ${\cal Q}$ control abstraction, thus obtaining function 
\fun{minFullCtrAbs} called in Alg.~\ref{qfc-syn-outline},
line~\ref{item:minFullCtrAbs}.
Function \fun{minFullCtrAbs}  
is obtained by removing  
the highlighted code (on grey background) from Alg.~\ref{ctr-abs.alg}, namely
the admissibility check in line~\ref{check_s_u.alg.step}.
%
%
%

\subsection{Self Loop Elimination}
\label{self_loop.subsubsec}

\sloppy

In order to exactly get the minimum control abstraction, 
function \fun{selfLoop} should return {\sc True} 
iff the given self loop is
non-eliminable.
This is undecidable by Prop.~\ref{self.loop.undec.prop}.
Function \fun{selfLoop}, outlined in Alg.~\ref{alg:selfLoop}, 
checks a sufficient {\em gradient based} condition for self loop elimination that in practice
turns out to be very effective (see Tabs.~\ref{expres-table.tex:1} and~\ref{expres-table.tex:2} in Sect.~\ref{expres.tex}). That is, 
function \fun{selfLoop} returns {\sc False} when a self loop is
eliminable (or there is not a concrete witness for it). On the other hand, if function \fun{selfLoop} returns {\sc True},
then the self loop under consideration may be non-eliminable as well as
eliminable. In a conservative way, we assume self loops for which function
\fun{selfLoop} returns {\sc True} to be non-eliminable (i.e. they are added to
$\hat{\cal M}$, see line~\ref{self_loop.alg.step} of
Alg.~\ref{ctr-abs.alg}).

\fussy

Function \fun{selfLoop} in Alg.~\ref{alg:selfLoop},
which correctness is proved in Sect.~\ref{proof:selfLoop},
works as follows. First of all it checks if there is a 
concrete witness for the self loop under consideration.
If it is not the case, \fun{selfLoop} returns {\sc False}
(line~\ref{selfLoop:exTrans}). 
Otherwise, for each real variable $x_i$, it tries to establish if
$x_i$ is either always increasing (line~\ref{line:positive})
or always decreasing (line~\ref{line:negative}) inside $\Gamma^{-1}(\hat{x})$ by performing 
actions in $\Gamma^{-1}(\hat{u})$. If this is the case, we have that, being 
$\Gamma^{-1}(\hat{x})$ a compact set, no Zeno-phenomena may arise,
thus executing actions in $\Gamma^{-1}(\hat{u})$ it is
guaranteed that ${\cal H}$ will eventually leave the region
$\Gamma^{-1}(\hat{x})$. Otherwise, {\sc True} is returned in
line~\ref{line:selfloopgivesup}.

\algsetup{indent=0.4em}

\begin{algorithm}
  \caption[selfLoop]{Self loop elimination}
  \label{alg:selfLoop}
  \begin{algorithmic}[1]
    \REQUIRE \gls{DTLHS} ${\cal H} = (X, U, Y, N)$, quantization ${\cal Q}=(A,
    \Gamma)$, abstract state $\hat{x}$, abstract action $\hat{u}$.
    \ENSURE {\fun{selfLoop}$({\cal H}, {\cal Q}, \hat{x}$, $\hat{u})$}
    \STATE {\bf if} $\neg$\fun{existsTrans}$(\hat{x},\hat{u},\hat{x})$ {\bf then return} {\sc False}
	\label{selfLoop:exTrans}
 	\FOR{$x_i$ {\bf in} $X^r$}
    \STATE $w_i \gets$ \fun{optimalValue}($\min, x'_i-x_i, N(X, U, Y, X')
             \land \Gamma(X) = \hat{x} \land \Gamma(U) = \hat{u} \land \Gamma(X') = \hat{x}$)\label{selfLoop:milp1}
    \STATE {\bf if} $w_i>0$ {\bf then return} {\sc False}\label{line:positive}
    \STATE $W_i \gets$ \fun{optimalValue}($\max, x'_i-x_i, N(X, U, Y, X')\land
             \Gamma(X) = \hat{x} \land \Gamma(U) = \hat{u} \land \Gamma(X') = \hat{x}$)\label{selfLoop:milp2}
    \STATE {\bf if} $W_i<0$ {\bf then return} {\sc False}\label{line:negative}             
    \ENDFOR\label{selfLoop:loopend}
    \STATE {\bf return} {\sc True}\label{line:selfloopgivesup}
  \end{algorithmic}
\end{algorithm}  

\algsetup{indent=1em} 

%


\subsection{Proof of Function \fun{selfLoop} Correctness}
\label{proof:selfLoop}
In this section we prove correctness of Alg.~\ref{alg:selfLoop}.
This section can be skipped at a first reading.

\begin{proposition}
\label{prop:selfLoop}
\sloppy

Let ${\cal H} = (X, U, Y, N)$ be a \gls{DTLHS},  
${\cal Q} = (A,\Gamma)$ be a quantization for ${\cal H}$, 
$\hat{x}\in\Gamma(A_X)$, and $\hat{u}\in\Gamma(A_U)$.
If the abstract self loop $(\hat{x}, \hat{u}, \hat{x})$ has a concrete witness
and \fun{selfLoop}$({\cal H}, {\cal Q}, \hat{x}, \hat{u})$ returns {\sc False}, then 
$(\hat{x}, \hat{u}, \hat{x})$ is an eliminable self loop.
%
\end{proposition}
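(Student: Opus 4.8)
The plan is to use the concrete‑witness hypothesis to pin down \emph{why} \fun{selfLoop} returned {\sc False}, and then run a monotonicity‑versus‑boundedness (``no Zeno'') argument. Since the self loop $(\hat{x},\hat{u},\hat{x})$ has a concrete witness, \fun{existsTrans}$(\hat{x},\hat{u},\hat{x})$ holds, so line~\ref{selfLoop:exTrans} of Alg.~\ref{alg:selfLoop} does not fire. Hence the {\sc False} must be produced at line~\ref{line:positive} or line~\ref{line:negative}, i.e.\ there is a real variable $x_i\in X^r$ with either $w_i>0$ or $W_i<0$, where $w_i$ and $W_i$ are the optima computed in lines~\ref{selfLoop:milp1} and~\ref{selfLoop:milp2}. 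The very same witness makes the underlying \gls{MILP} feasible, and since the quantization region $A_X$ is bounded (Def.~\ref{def:quantization}, so that the linear expression $x_i'-x_i$ is bounded on the feasible set), both $w_i$ and $W_i$ are finite reals; no $+\infty$ (unfeasible/unbounded) value can arise here.

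Next I would unfold the definition of eliminability (condition~\ref{item:loop} of Def.~\ref{def:ctr-abs}): $(\hat{x},\hat{u},\hat{x})$ is eliminable precisely when there is \emph{no} infinite run $\pi$ of $\mathrm{LTS}({\cal H})$ with $\pi^{(S)}(t)\in\Gamma^{-1}(\hat{x})$ and $\pi^{(A)}(t)\in\Gamma^{-1}(\hat{u})$ for all $t\in\N$. So I argue by contradiction: suppose such a run $\pi=s_0,a_0,s_1,a_1,\dots$ exists. For each $t$, since $\pi$ is a run of $\mathrm{LTS}({\cal H})$ there is $y_t\in{\cal D}_Y$ with $N(s_t,a_t,y_t,s_{t+1})$; moreover $\Gamma(s_t)=\hat{x}$, $\Gamma(a_t)=\hat{u}$, and (because $s_{t+1}\in\Gamma^{-1}(\hat{x})$) also $\Gamma(s_{t+1})=\hat{x}$. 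Thus each $(s_t,a_t,y_t,s_{t+1})$ is a feasible point of exactly the \gls{MILP} whose optima are $w_i$ and $W_i$, so $w_i\le (s_{t+1})_i-(s_t)_i\le W_i$ for every $t$.

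The crux is then the telescoping estimate. If $w_i>0$, summing gives $(s_t)_i\ge (s_0)_i + t\,w_i\to+\infty$; but $s_t\in\Gamma^{-1}(\hat{x})\subseteq A_X$ forces $(s_t)_i\le b_{x_i}$, a contradiction for large $t$. The case $W_i<0$ is symmetric, giving $(s_t)_i\le (s_0)_i + t\,W_i\to-\infty$ below the lower bound $a_{x_i}$. Either way, no such infinite run exists, so the self loop is eliminable.

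The step that needs the most care is aligning the hypothesised run with the \gls{MILP}: I must check that the witnessing auxiliary value $y_t$ ranges over the same domain ${\cal D}_Y$ that the \gls{MILP} leaves unconstrained (matching $\tilde N(s_t,a_t,s_{t+1})\equiv\exists y\in{\cal D}_Y\,N(s_t,a_t,y,s_{t+1})$), and that the three quantization equalities $\Gamma(s_t)=\Gamma(s_{t+1})=\hat{x}$ and $\Gamma(a_t)=\hat{u}$ hold exactly because the run stays inside $\Gamma^{-1}(\hat{x})$ with actions in $\Gamma^{-1}(\hat{u})$. Once this alignment is in place, the finiteness of $w_i,W_i$ and the boundedness of each coordinate on $A_X$ close the argument without further subtlety.
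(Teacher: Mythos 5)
Your proposal is correct and follows essentially the same route as the paper's proof: rule out the early return via the concrete witness, extract a coordinate $j$ with $w_j>0$ or $W_j<0$, and derive a contradiction with a hypothetical infinite run confined to $\Gamma^{-1}(\hat{x})$ by telescoping the per-step increment. The only cosmetic difference is that the paper bounds the drift by the quantization step $\|\gamma_{x_j}\|$ (the diameter of the cell) while you bound it by the admissible interval $[a_{x_j},b_{x_j}]$; both yield the same contradiction.
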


\begin{proof}

\sloppy

Suppose by absurd that the abstract self loop $(\hat{x}, \hat{u}, \hat{x})$ has
a concrete witness, \fun{selfLoop}$({\cal H}, {\cal Q}, \hat{x}, \hat{u})$ returns {\sc False}, and
$(\hat{x}, \hat{u}, \hat{x})$ is a non-eliminable self loop. Then there exists
an infinite run $\pi = x_0 u_0 x_1 u_1\ldots$ such that for all $t\in\N$ $x_t\in
\Gamma^{-1}(\hat{x})$ and $u_t\in \Gamma^{-1}(\hat{u})$.

\fussy

For $i \in [|X^r|]$, let $w_i \leq W_i$ be the values computed in
lines~\ref{selfLoop:milp1} and~\ref{selfLoop:milp2} of Alg.~\ref{alg:selfLoop},
i.e. $w_i$ = \fun{optimalValue}%
($\min$, $x'_i-x_i$, $N(X, U, Y, X')\land
\Gamma(X) = \hat{x} \land \Gamma(U) = \hat{u} \land \Gamma(X') = \hat{x}$) and 
$W_i$ = \fun{optimalValue}%
($\max$, $x'_i-x_i$, $N(X, U, Y, X') \land \Gamma(X) =
\hat{x} \land \Gamma(U) = \hat{u} \land \Gamma(X') = \hat{x}$).

Since \fun{selfLoop}$({\cal H}, {\cal Q}, \hat{x}, \hat{u})$ returns {\sc False},  there exists at
least an index $j\in [|X^r|]$ such that  $w_j>0$ or $W_j<0$ (see
lines~\ref{line:positive} and~\ref{line:negative} of Alg.~\ref{alg:selfLoop}
resp.). Let us consider the former case (note that $w_j > 0$ implies $W_j > 0$).

For all $k\in\N$, we have that $|(x_k)_j-(x_0)_j| = (x_k)_j-(x_0)_j \geq k w_j$.
If we take $\tilde{k}>\frac{\|\gamma_{x_j}\|}{w_j}$, we have that 
$|(x_{\tilde{k}})_j-(x_0)_j|>\|\gamma_{x_j}\|$ and hence $x_{\tilde{k}}$ cannot belong 
to $\Gamma^{-1}(\hat{x})$.

Analogously, if $w_j \leq W_j < 0$ then we have that $|(x_k)_j-(x_0)_j| =
(x_0)_j-(x_k)_j \geq k w_j$. If we take
$\tilde{k}>\frac{\|\gamma_{x_j}\|}{w_j}$, we have that 
$|(x_{\tilde{k}})_j-(x_0)_j|>\|\gamma_{x_j}\|$ and hence $x_{\tilde{k}}$ cannot
belong  to $\Gamma^{-1}(\hat{x})$.

In both cases we have a contradiction, thus the thesis is proved.
\end{proof}

\subsection{Proof of Functions \fun{minCtrAbs} and \fun{minFullCtrAbs} Correctness}
\label{lemma:minCtrAbs.proof}
In this section we prove correctness of functions \fun{minCtrAbs} (Alg.~\ref{ctr-abs.alg}) 
and \fun{minFullCtrAbs} used in Alg.~\ref{qfc-syn-outline}.
This section can be skipped at a first reading.

\begin{proposition}
\label{lemma:CtrAbs}
Let ${\cal H} = (X, U, Y, N)$ be a DTLHS and 
${\cal Q} = (A,\Gamma)$ be a quantization for ${\cal H}$.

If $\hat{N}$ is the transition relation computed by \fun{minCtrAbs}(${\cal H}$, ${\cal Q}$) 
then $\hat{\cal H} = (\Gamma(A_X)$, $\Gamma(A_U)$, $\hat{N})$ is an admissible 
${\cal Q}$ control abstraction of ${\cal H}$. 

If $\hat{N}$ is the transition relation computed by \fun{minFullCtrAbs}(${\cal H}$, ${\cal Q}$) 
then $\hat{\cal H} = (\Gamma(A_X)$, $\Gamma(A_U)$, $\hat{N})$ is a full ${\cal Q}$
control abstraction of ${\cal H}$. 
\end{proposition}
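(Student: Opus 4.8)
The plan is to show directly that the LTS $\hat{\cal H} = (\Gamma(A_X), \Gamma(A_U), \hat{N})$ returned by \fun{minCtrAbs} meets each of the three defining clauses of a ${\cal Q}$ control abstraction (Def.~\ref{def:ctr-abs}, conditions~\ref{item:witness}--\ref{item:loop}) together with conditions i) and ii) of admissibility (Def.~\ref{def:ctr-abs-full-adm}), matching every write to $\hat{N}$ in Alg.~\ref{ctr-abs.alg} against the corresponding clause. For \fun{minFullCtrAbs} I would rerun the same bookkeeping, observing that deleting the admissibility guard at line~\ref{check_s_u.alg.step} only enlarges the set of processed pairs $(\hat{x},\hat{u})$, so the witness and faithfulness arguments carry over verbatim while conditions i)--ii) are simply dropped and condition~\ref{item:nonloop} is replaced by its unguarded (full) variant.

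First I would establish condition~\ref{item:witness} (every abstract transition stems from a concrete one). The relation $\hat{N}$ is written to in only two places: self loops at line~\ref{self_loop.alg.step}, added only when \fun{selfLoop} returns {\sc True}, and \fun{selfLoop} itself returns {\sc False} unless \fun{existsTrans}$(\hat{x},\hat{u},\hat{x})$ holds (line~\ref{selfLoop:exTrans} of Alg.~\ref{alg:selfLoop}); and non-loop transitions at line~\ref{update_t.alg.step}, guarded by \fun{existsTrans}$(\hat{x},\hat{u},\hat{x}')$ at line~\ref{check_s_s_prime.alg.step}. Since \fun{existsTrans} tests feasibility of $N(X,U,Y,X')\land\Gamma(X)=\hat{x}\land\Gamma(U)=\hat{u}\land\Gamma(X')=\hat{x}'$, a positive answer supplies exactly the required concrete witness, so condition~\ref{item:witness} holds for both algorithms.

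Next I would treat condition~\ref{item:nonloop}. Fix $s,s'\in A_X$, $u\in A_U$ with $N(s,u,y,s')$, $\Gamma(u)$ ${\cal Q}$-admissible in $\Gamma(s)$, and $\Gamma(s)\neq\Gamma(s')$. The ${\cal Q}$-admissibility ensures the guard at line~\ref{check_s_u.alg.step} does not skip $(\hat{x},\hat{u})=(\Gamma(s),\Gamma(u))$. Because $s'\in A_X$ it satisfies the constraint $A(X')$ imposed inside \fun{overImg}, so $s'_i\in[m_i,M_i]$ for every $i$ and, as each $\gamma_{x_i}$ is non-decreasing, $\hat{x}'=\Gamma(s')\in\Gamma({\cal O})$. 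Hence $\hat{x}'$ is visited by the inner loop at line~\ref{forall_s_prime.alg.step}, the test at line~\ref{check_s_s_prime.alg.step} succeeds (witnessed by $s,u,y,s'$), and the transition is added. The decisive ingredient is that \fun{overImg} is a genuine overapproximation of the one-step image, which is precisely what guarantees no admissible transition is lost; for \fun{minFullCtrAbs} the identical computation, using $s'\in A_X$ directly, yields the unguarded condition needed for a full abstraction.

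Finally I would handle condition~\ref{item:loop} and the admissibility clauses, which I expect to be the main obstacle. If $(\hat{s},\hat{u},\hat{s})$ is non-eliminable there is an infinite concrete run inside $\Gamma^{-1}(\hat{s})$ with actions in $\Gamma^{-1}(\hat{u})$; any single step of it is a concrete witness, so by the contrapositive of Prop.~\ref{prop:selfLoop} \fun{selfLoop} returns {\sc True} and the loop is added at line~\ref{self_loop.alg.step} (for \fun{minCtrAbs} this requires the action to pass the admissibility guard, which for an admissible abstraction is exactly what condition i) stipulates). Condition i) is then immediate, since $\hat{u}\in{\rm Adm}(\hat{\cal H},\hat{s})$ forces some $(\hat{s},\hat{u},\cdot)$ to have been written, which can only occur after line~\ref{check_s_u.alg.step} certified $\hat{u}$ as ${\cal Q}$-admissible; condition ii) follows from ${\cal Q}$-admissibility (all admissible-mode successors of states in $\Gamma^{-1}(\hat{s})$ stay in $A_X$) together with totality of $N$ on the admissible region, which supplies the successor. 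The genuinely delicate point I would stress is that \fun{selfLoop} implements only a sufficient gradient test, so $\hat{\cal M}$ is \emph{close to}, not exactly, minimal; correctness is nevertheless safe because any loop \fun{selfLoop} conservatively retains is still backed by a witness (condition~\ref{item:witness}), and any loop it discards is provably eliminable by Prop.~\ref{prop:selfLoop}, so condition~\ref{item:loop} is never violated.
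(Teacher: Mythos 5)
Your proof follows essentially the same route as the paper's: each write to $\hat{N}$ is matched against the corresponding clause of Def.~\ref{def:ctr-abs}, with \fun{existsTrans} supplying the concrete witnesses for condition~\ref{item:witness}, exhaustive enumeration of pairs plus the overapproximation property of \fun{overImg} giving condition~\ref{item:nonloop}, and the contrapositive of Prop.~\ref{prop:selfLoop} giving condition~\ref{item:loop}. You are in fact somewhat more explicit than the paper's own proof (which verifies only conditions~\ref{item:witness}--\ref{item:loop} and never returns to clauses i)--ii) of Def.~\ref{def:ctr-abs-full-adm}); the one loose point is your appeal to ``totality of $N$ on the admissible region'' for clause ii), which is not an assumption but is actually certified by the quantifier-elimination precheck inside \fun{${\cal Q}$-admissible} described in Sect.~\ref{admissibility.check.subsec}.
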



\begin{proof}
\sloppy
Here we prove only the part regarding function \fun{minCtrAbs}, since the other
part may be proved analogously. We first show that the control abstraction 
$\hat{\cal H} = (\Gamma(A_X), \Gamma(A_U), \hat{N})$  satisfies
conditions~\ref{item:witness}--\ref{item:loop}
of Def.~\ref{def:ctr-abs}.

\begin{enumerate}
\item

Each transition $(\hat{x}, \hat{u}, \hat{x}')$ is added to $\hat{N}$ 
in line~\ref{self_loop.alg.step} or in line~\ref{update_t.alg.step} 
of Alg.~\ref{symb.alg.rat}. In both cases, it has been checked by function 
\fun{existsTrans} 
that $\exists x \in \Gamma^{-1}(\hat{x})$,
    $u \in \Gamma^{-1}(\hat{u})$,
    $x' \in \Gamma^{-1}(\hat{x}')$, 
    $y \in A_Y$
    such that 
    $N(x, u, y, x')$
(in the latter case the check is inside function \fun{selfLoop}).
 
\item
Let $x,s'\in A_X$ and $u\in A_U$ be such that $\exists y: N(x, u, y, x')$ and
$\Gamma(x) \neq \Gamma(x')$.
Since \fun{minCtrAbs} examines all tuples in $\Gamma(A_X)\times\Gamma(A_U)\times\Gamma(A_X)$, it will eventually examine
the tuple $(\hat{x},\hat{u},\hat{x}')$ s.t. $\hat{x}=\Gamma(x)$, $\hat{u}=\Gamma(u)$, and $\hat{x}'=\Gamma(x')$.
If $\hat{u}$ is not ${\cal Q}$-admissible in $\hat{x}$ no transition is added to
$\hat{N}$ because of the check in line~\ref{check_s_u.alg.step}.
Otherwise, since $\exists y: N(x, u, y, x')$ holds, \fun{existsTrans}($\hat{x}$, $\hat{u}$, $\hat{x}'$)
returns {\sc True}
and the transition $(\hat{x}, \hat{u}, \hat{x}')$ is added to $\hat{N}$ in line~\ref{update_t.alg.step} 
of Alg.~\ref{symb.alg.rat}.

\item Note that condition~\ref{item:loop} of Def.~\ref{def:ctr-abs} may be
rephrased as follows: if $(\hat{x}, \hat{u}, \hat{x})$ is a non-eliminable self
loop, then $\hat{N}(\hat{x}, \hat{u}, \hat{x})$ must hold. That is, if
$\hat{N}(\hat{x}, \hat{u}, \hat{x}) = 0$ then either there is not a concrete witness for
the self loop $(\hat{x}, \hat{u}, \hat{x})$, or $(\hat{x}, \hat{u}, \hat{x})$ is
an eliminable self loop. This is exactly the case for which function
\fun{selfLoop}(${\cal H}, {\cal Q}, \hat{x}, \hat{u}$)  returns {\sc False} (resp. by
line~\ref{selfLoop:exTrans} of Alg.~\ref{alg:selfLoop} and by
Prop.~\ref{prop:selfLoop}). Since a self loop $(\hat{x}, \hat{u}, \hat{x})$ is
not added to $\hat{N}$ only if \fun{selfLoop}(${\cal H}, {\cal Q}, \hat{x}, \hat{u}$)  returns
{\sc False} in line~\ref{self_loop.alg.step} of Alg.~\ref{symb.alg.rat}, and
since function \fun{selfLoop}(${\cal H}, {\cal Q}, \hat{x}, \hat{u}$) is
eventually invoked for all $\hat{x}\in\Gamma(A_X)$ and 
$\hat{u}\in\Gamma(A_U)$, the thesis is proved.
%
\end{enumerate}
\fussy
\end{proof}

\subsection{Proof of Synthesis Algorithm Correctness}
\label{sec:qks-correctness-proof}

In this section we prove Theor.~\ref{theor:synth-correct}.
This section can be skipped at a first reading.

\begin{proof}[Theorem~\ref{theor:synth-correct}]
If function \fun{qCtrSyn} returns ({\sc Sol}, $\hat{D}$, $\hat{K}$), then
function \fun{minCtrAbs} has found an admissible  ${\cal Q}$ control abstraction
$\hat{\cal M}$ of ${\cal H}$ (see Prop.~\ref{lemma:CtrAbs}) and function
\fun{strongCtr}  has found the strong mgo $\hat{K}$ to the control problem 
($\hat{\cal M}$, $\Gamma(I)$, $\Gamma(G)$). By Theor.~\ref{corol:min-ctr-abs}
(point~\ref{item:minimal}) the controller $K$, defined by
$K(x,u)=(k(\Gamma(x)) = \Gamma(u))$ with $k$ control law for $\hat{K}$, is a  ${\cal Q}$ \gls{QFC} strong solution to
the control problem $({\cal H},I,G)$.

If function \fun{qCtrSyn} returns ({\sc NoSol}, $\hat{D}$, $\hat{K}$), there is
no weak solution to the control problem ($\hat{\cal W}$, $\Gamma(I)$,
$\Gamma(G)$),  where $\hat{\cal W}$ is the close to minimum full control
abstraction of ${\cal H}$ computed  by function \fun{minFullCtrAbs}
(Prop.~\ref{lemma:CtrAbs}). Therefore, by Theor.~\ref{corol:min-ctr-abs}
(point~\ref{item:maximal}) there is no  ${\cal Q}$ \gls{QFC} solution to the control
problem $({\cal H},I,G)$.
\end{proof}

\subsection{Details on Actions Admissibility Check}\label{admissibility.check.subsec}

\sloppy

In this section we show how we can check for action admissibility.
This section can be skipped at a first reading.

In Sect.~\ref{minmax-ctrabs.tex}, for any pair $(\hat{x},\hat{u})$ in
$\Gamma(A_X)\times\Gamma(A_U)$  line~\ref{check_s_u.alg.step} of
Alg.~\ref{ctr-abs.alg} checks if $\hat{u}$ is ${\cal Q}$-admissible in
$\hat{x}$. This check is carried out by determining if the predicate $P(X, U, Y,
X', \hat{x}, \hat{u}) \equiv N(X, U, Y, X')\land \Gamma(X)=\hat{x} \land
\Gamma(U)=\hat{u}      \land X'\not\in A_X$ is not feasible. 

\fussy

Note that $X'\not\in A_X$ is not a conjunctive predicate, thus
feasibility of predicate $P(X$, $U$, $Y$, $X'$, $\hat{x}$, $\hat{u})$ cannot be
directly checked via function \fun{feasible}. We implement such a
check by calling $2|X|$ times function \fun{feasible} in the following
way. For each $x' \in X'$, let $P_{x'}^-(X, U, Y, X', \hat{x},
\hat{u}) \equiv N(X, U, Y, X')\land \Gamma(X)=\hat{x} \land
\Gamma(U)=\hat{u} \land x' \leq \inf A_x$ and $P_{x'}^+(X, U, Y, X',
\hat{x}, \hat{u}) \equiv N(X, U, Y, X')\land \Gamma(X)=\hat{x} \land
\Gamma(U)=\hat{u} \land x' \geq \sup A_x$. For each $x' \in X'$, we
call function \fun{feasible} on $P_{x'}^+$ and $P_{x'}^-$ separately. If all
such $2|X|$ calls return {\sc False}, then $P$ is not feasible,
otherwise $P$ is feasible.

Note that by Def.~\ref{def:ctr-abs} we should also check that $\forall x \in
\Gamma^{-1}(\hat{x})$ $\forall u \in \Gamma^{-1}(\hat{u})$ $\exists x' \in {\cal
D}_X$ $\exists y \in {\cal D}_Y$ $:$ $N(x, u, y, x')$. This cannot be checked
via function \fun{feasible}. We therefore perform such a check by using a tool
for quantifier elimination, namely Mjollnir~\cite{quantifier-elimination-cav10}.
More in detail, we call Mjollnir only once, as a precomputation of
Alg.~\ref{ctr-abs.alg}, on the formula $\Phi(\hat{x}, \hat{u}) \equiv
\exists x \in {\cal D}_X$ $\exists u \in {\cal D}_U$ $\Gamma(X)=\hat{x} \land
\Gamma(U) = \hat{u} \land \neg[\exists x' \in {\cal D}_X$ $\exists y \in {\cal
D}_Y$ $:$ $N(x, u, y, x')]$. The output of Mjollnir is a formula
$\tilde{\Phi}(\hat{x}, \hat{u})$ s.t. $\tilde{\Phi}(\hat{x}, \hat{u}) \equiv
\Phi(\hat{x}, \hat{u})$ and $\tilde{\Phi}(\hat{x}, \hat{u})$ does not contain
quantifiers (i.e., the only variables in $\tilde{\Phi}(\hat{x}, \hat{u})$ are
$\hat{x}$ and $\hat{u}$). $\tilde{\Phi}(\hat{x}, \hat{u})$ is true if $\hat{u}$
is not safe in $\hat{x}$. Since $\tilde{\Phi}(\hat{x}, \hat{u})$ only depends on
bounded discrete variables, we may turn it into an OBDD $\hat{L}$. This is the
last step of the precomputation. Then, we use $\hat{L}$ as follows. Each time
that function \fun{${\cal Q}$-admissible} (line~\ref{check_s_u.alg.step} of
Alg.~\ref{ctr-abs.alg}) is invoked, it first checks if $(\hat{x}, \hat{u})
\in \hat{L}$. If this holds, then function \fun{${\cal Q}$-admissible} directly
returns {\sc False}. Otherwise, the above described check (involving at most $2|X|$
calls to function \fun{feasible}) is performed.



\section{Control Software Generation}
\label{sec:controlSoftware}
\label{obdd2c.tex}

\sloppy

In this section we describe how 
we synthesize the actual control software
(C functions \texttt{Control\_Law} 
and
\texttt{Controllable\_Region} in Sect.~\ref{intro.tex})
and show how we compute its \gls{WCET}. More details are given
in~\cite{icsea2011}.

\fussy

First, we note that given an \gls{OBDD} $B$, we can easily generate
a C function implementation $\mbox{\rm \fun{obdd2c}}(B)$
for the boolean function (defined by) $B$ 
by implementing in C the semantics of \gls{OBDD} $B$.
We do this
by replacing each \gls{OBDD} node with an \texttt{if-then-else} block
and each \gls{OBDD} edge with a \texttt{goto} instruction. 
Let $(\mu,$ $\hat{D},$ $\hat{K})$ be 
the  output of function \fun{qCtrSyn} in Alg.~\ref{qfc-syn-outline}.
We synthesize function \texttt{Controllable\_Region} by computing 
$\mbox{\rm \fun{obdd2c}}(\hat{D})$.
As for function \texttt{Control\_Law},
let $r$ (resp.~$n$) be the number of bits used to
represent plant actions (resp.~states).
We compute~\cite{Tro98} a boolean function $F$ : $\B^n$ $\rightarrow$ $\B^r$
that,
for each quantized state $\hat{x}$ in the controllable region
$\hat{D}$,
returns a quantized action $\hat{u}$ such that $\hat{K}(\hat{x},\hat{u})$ holds. 
Let $F_i$ : $\B^n$ $\rightarrow$ $\B$ be the boolean function computing the $i$-th bit
of $F$. That is, $F(\hat{x})$ = [$F_1(\hat{x}), \ldots, F_r(\hat{x})$].
We take function \texttt{Control\_Law} to be
(the C implementation of) [$\mbox{\rm \fun{obdd2c}}(F_1), \ldots, \mbox{\rm \fun{obdd2c}}(F_r)$]. 


\subsection{Control Software \gls{WCET}}
\label{wcet.tex}

\sloppy

We can easily compute the \gls{WCET}
for our control software.
In fact all \glspl{OBDD} we are considering have at most $n$
variables. Accordingly, the execution of the resulting C code
will go through at most $n$ instruction blocks consisting essentially
of an \texttt{if-then-else} and a \texttt{goto} statement.
Let $T_B$ be the time needed to compute one such a block on the 
microcontroller hosting the control software.
Then we have that 
the \gls{WCET} of \texttt{Controllable\_Region} [\texttt{Control\_Law}]
is less than or equal to $n \cdot T_B$ [$r \cdot n \cdot T_B$].
Thus, neglecting I/O times,
each iteration of the control loop (see Fig. \ref{control-loop-figure.tex})
takes time (control software \gls{WCET}) at most $(r + 1) \cdot n \cdot T_B$.
Note that a more strict upper bound for the \gls{WCET} may be obtained by taking into
account \glspl{OBDD} heights (which are by construction at most $n$). 
The control loop (Fig. \ref{control-loop-figure.tex})
poses the hard real time requirement that the  
control software \gls{WCET} be less than or equal to the
sampling time $T$.
This is the case when \gls{WCET} $\leq$ $T$ holds.
Such an equation allows us to know,
before hand, the realizability of the foreseen control schema.

\fussy








\section{Experimental Results}\label{case_study_synthesis.tex}\label{expres.tex}

We implemented our \gls{QFC} synthesis algorithm 
in C programming language, using
GLPK 
to solve \gls{MILP} problems and the CUDD 
package for \gls{OBDD} based computations.
We name the resulting tool \glsfirst{QKS} (publicly available at \cite{QKS}).



Our methods focus on centralized control software synthesis problems.
Therefore we focus our experimental results on such cases.
Distributed control problems (such as TCAS \cite{PlatzerC09}),
widely studied in a verification setting,
are outside our scopes.

In this section we present our experiments that aim at
evaluating effectiveness of:  
the control abstraction 
generation, 
the synthesis of \gls{OBDD} representation of control law,  
and the control software 
size, performance, and guaranteed operational ranges (i.e. controllable region).
In Sects.~\ref{expres.tex.setting},~\ref{qks-perf.subsec}, and~\ref{sec:controller-performances}
we present results for the {\em buck DC-DC converter} case study.
In Sects.~\ref{expres.tex.setting.invpend},~\ref{qks-perf.subsec.invpend}, and~\ref{sec:controller-performances.invpend}
we shortly outline results for the {\em inverted pendulum} case study.
Note that control software reaction time (\gls{WCET}) is known a priori
from Sect.~\ref{wcet.tex} and its robustness to parameter
variations in the controlled system 
as well as 
enforcement of safety bounds on state variables are an input
to our synthesis algorithm 
(see Ex.~\ref{ex:dths} and Sect.~\ref{expres.tex.setting}).





\begin{table}
  \centering
  \small
  \caption{Buck DC-DC converter (Sect.~\ref{dths.tex}): control abstraction \& controller synthesis results. Part I.\label{expres-table.tex:1}}{%
  \begin{tabular}{cccccccc}
    \toprule
    & \multicolumn{5}{c}{Control Abstraction} & \multicolumn{2}{c}{Controller Synthesis}\\
    \cmidrule(r){2-6}\cmidrule(r){7-8} $b$ & CPU & MEM & Arcs & MaxLoops & LoopFrac & CPU & $|K|$\\
    \midrule
8 & 1.95e+03 & 4.41e+07 & 6.87e+05 & 2.55e+04 & 0.00333 & 2.10e-01 & 1.39e+02\\
9 & 9.55e+03 & 5.67e+07 & 3.91e+06 & 1.87e+04 & 0.00440 & 2.64e+01 & 3.24e+03\\
10 & 1.42e+05 & 8.47e+07 & 2.61e+07 & 2.09e+04 & 0.00781 & 7.36e+01 & 1.05e+04\\
11 & 8.76e+05 & 1.11e+08 & 2.15e+08 & 2.26e+04 & 0.01435 & 2.94e+02 & 2.88e+04\\
    \bottomrule
  \end{tabular}}
\end{table}

\begin{table}
  \centering
  \caption{Buck DC-DC converter (Sect.~\ref{dths.tex}): control abstraction \& controller synthesis results. Part II.\label{expres-table.tex:2}}{%
  \begin{tabular}{cccc}
    \toprule
    & \multicolumn{3}{c}{Total}\\
    \cmidrule(r){2-4} $b$ & CPU & MEM & $\mu$\\
    \midrule
8 & 1.96e+03 & 4.46e+07 & {\sc Unk}\\
9 & 9.58e+03 & 7.19e+07 & {\sc Sol}\\
10 & 1.42e+05 & 1.06e+08 & {\sc Sol}\\
11 & 8.76e+05 & 2.47e+08 & {\sc Sol}\\
    \bottomrule
  \end{tabular}}
\end{table}

\subsection{Buck DC-DC Converter: Experimental Settings}
\label{expres.tex.setting}

In this section (and in Sects.~\ref{qks-perf.subsec},~\ref{sec:controller-performances}) 
we present experimental results obtained by using \gls{QKS} on a version of the buck
DC-DC converter described in Sect.~\ref{example-buck.tex}.
Further case studies (namely, the inverted pendulum and the multi-input buck DC-DC converter) 
can be found in~\cite{cdc12} and~\cite{emsoft12}.
We denote with ${\cal H} = (X, U, \tilde{Y}, \tilde{N})$ the \gls{DTLHS} modeling such a
converter, where $X, U$ are as in Sect.~\ref{example-buck.tex}.
We set
the parameters of ${\cal H}$
as
follows:
 $T = 10^{-6}$ secs, 
 $L = 2 \cdot 10^{-4}$ H,
 $r_L = 0.1$ ${\rm \Omega}$,
 $r_C = 0.1$ ${\rm \Omega}$, 
 $R = 5\pm25\%$ ${\rm \Omega}$, 
 $R_{off} = 10^4$ ${\rm \Omega}$, 
 $C = 5 \cdot 10^{-5}$ F, 
 $V_i = 15\pm25\%$ V. Thus, we
 require our controller to be robust
to foreseen variations (25\%) in the load ($R$) and 
in the power supply ($V_i$). To this aim, $\tilde{N}$ is obtained by extending
$N$ of Sect.~\ref{example-buck.tex} as follows. As for variations in the power
supply $V_i$, they are modeled analogously to Ex.~\ref{ex:dths}. As for
variations in the load $R$, much more work is needed~\cite{buck-tekrep-art-2011} since ${\cal H}$ dynamics is not linear in $R$.
For the sake of brevity, we simply point out that modeling variations in the load $R$ 
requires 11 auxiliary boolean variables to be added 
to $Y$, thus obtaining $\tilde{Y}$, and 15 (guarded) constraints to be added to
$\tilde{N}$~\cite{buck-tekrep-art-2011}.
%


%
%

For converters, \emph{safety} (as well as physical) considerations
set requirements on admissible values for state variables (admissible regions). 
We set $A_{i_L} = [-4, 4]$ and $A_{v_O} = [-1, 7]$.
We define $A = A_{i_L} \times A_{v_O} \times A_u$.
As for auxiliary variables, we use the following safety bounds:
$A_{i_u} = A_{i_D} = [-10^3, 10^3]$ and  $A_{v_u} = A_{v_D}
= [-10^7, 10^7]$. As a result, we add 12 further constraints to $\tilde{N}$ stating
that $\bigwedge_{w \in \{i_L, v_O, i_u, i_D, v_u, v_D\}} w \in
A_w$, thus obtaining a bounded \gls{DTLHS}~\cite{buck-tekrep-art-2011}. 

Finally, the initial region
$I$ 
and goal region $G$ are as in Ex. \ref{example-goal}, thus
the \gls{DTLHS} control problem we consider is 
$P$ = (${\cal H}$, $I$, $G$). 
%
Note that no (formally proved) robust control software 
is available for buck DC-DC converters. 
%
%
%
%
%
%



We use a uniform quantization dividing the domain
of each state variable ($i_L, v_O$) into $2^b$ equal
intervals, where $b$ is the number of bits used by \gls{AD} conversion, thus w.r.t.
Sect.~\ref{dths-ctr-abs.tex} we have that $\Delta_{i_L} = \Delta_{v_O} = 2^b$. 
The resulting quantization is ${\cal Q}_b = (A, \Gamma_b)$,
with $\|\Gamma_b\| = 
2^{3-b}$. Since we have two
quantized variables ($i_L, v_O$) each one with $b$ bits,  the number of states
in the control abstraction is exactly $2^{2b}$.  

For each value of interest for $b$, we run \gls{QKS}, and thus
Alg.~\ref{qfc-syn-outline}, on the control problem $({\cal H}, I, G)$ with
quantization ${\cal Q}_b$. In the following, we will call $\hat{\cal M}_{b}$ 
the close to minimum (admissible) ${\cal Q}_b$ control abstraction for ${\cal H}$, $\hat{\cal
H}_{b}$  the maximum (full) ${\cal Q}_b$ control abstraction for ${\cal H}$ (which
we compute for statistical reasons also when Alg.~\ref{qfc-syn-outline} returns {\sc Sol}),
$\hat{K}_{b}$ the strong mgo for  $\hat{\cal P}_{b} = (\hat{\cal M}_{b}$,
$\varnothing$, $\Gamma_b(G))$, $\hat{D}_{b} = \mbox{\rm Dom}(\hat{K}_b)$ the
controllable region of $\hat{K}_{b}$, and $K_b(s, u) = \hat{K}_b(\Gamma_b(s),
\Gamma_b(u))$ the ${\cal Q}_b$  \gls{QFC} solution  to ${\cal
P}_{b} = ({\cal H}$, $\Gamma_b^{-1}(\hat{D}_{b})$, $G)$.  
All our experiments have been
carried out on a 3.0 GHz Intel hyperthreaded Quad Core Linux PC with 8 GB of
RAM.







\subsection{Buck DC-DC Converter: \gls{QKS} Performance}\label{qks-perf.subsec}

\begin{table}
  \centering
  \caption{Buck DC-DC converter: number of \glspl{MILP} and time to solve them (secs). Part I.\label{expres-table.tras.milps.tex:1}}{%
  \begin{tabular}{*{7}{c}}
    \toprule
    & \multicolumn{3}{c}{$b=8$} & \multicolumn{3}{c}{$b=9$}\\
    \cmidrule(r){2-4}\cmidrule(r){5-7}
    \gls{MILP} & Num & Avg & Time & Num & Avg & Time\\
    \midrule
1 & 6.6e+04 & 7.0e-05 & 4.6e+00 & 2.6e+05 & 7.0e-05 & 1.8e+01\\
2 & 4.0e+05 & 1.5e-03 & 3.3e+02 & 1.6e+06 & 1.4e-03 & 1.1e+03\\
3 & 2.3e+05 & 9.1e-04 & 2.1e+02 & 9.2e+05 & 9.2e-04 & 8.4e+02\\
4 & 7.8e+05 & 9.9e-04 & 7.7e+02 & 4.4e+06 & 1.0e-03 & 4.5e+03\\
5 & 4.3e+05 & 2.8e-04 & 1.2e+02 & 1.7e+06 & 2.8e-04 & 4.9e+02\\
    \bottomrule
  \end{tabular}}
\end{table}

\begin{table}
  \centering
  \caption{Buck DC-DC converter: number of \glspl{MILP} and time to solve them (secs). Part II.\label{expres-table.tras.milps.tex:2}}{%
  \begin{tabular}{*{7}{c}}
    \toprule
    & \multicolumn{3}{c}{$b=10$} & \multicolumn{3}{c}{$b=11$}\\
    \cmidrule(r){2-4}\cmidrule(r){5-7}
    \gls{MILP} & Num & Avg & Time & Num & Avg & Time \\
    \midrule
1 & 1.0e+06 & 2.7e-04 & 2.8e+02 & 4.2e+06 & 2.3e-04 & 9.7e+02\\
2 & 6.4e+06 & 3.8e-03 & 1.3e+04 & 2.5e+07 & 3.3e-03 & 4.6e+04\\
3 & 3.7e+06 & 3.0e-03 & 1.1e+04 & 1.5e+07 & 2.6e-03 & 3.8e+04\\
4 & 3.0e+07 & 2.6e-03 & 7.8e+04 & 2.6e+08 & 2.2e-03 & 5.7e+05\\
5 & 6.8e+06 & 1.8e-03 & 1.3e+04 & 2.7e+07 & 1.6e-03 & 4.2e+04\\
    \bottomrule
  \end{tabular}}
\end{table}

In this section we will show the performance (in terms of computation time and
memory) of algorithms discussed in Sect.~\ref{ctr-syn-algorithm.tex}. 

Tabs.~\ref{expres-table.tex:1}, \ref{expres-table.tex:2}, \ref{expres-table.tras.milps.tex:1} and~\ref{expres-table.tras.milps.tex:2}
show our
experimental results for \gls{QKS}  (and thus for Alg.~\ref{qfc-syn-outline}).
Columns in Tab.~\ref{expres-table.tex:1} have the following meaning. 
Column $b$ shows the number of \gls{AD} bits. 
Columns labeled {\em Control Abstraction} show
performance for Alg.~\ref{ctr-abs.alg} (computation of $\hat{\cal M}_b$)
and they show running time (column {\em CPU}, in secs), 
memory usage ({\em MEM}, in bytes), 
the number of transitions in $\hat{\cal M}_b$ ({\em Arcs}),
the number of self loops in $\hat{\cal H}_b$ ({\em MaxLoops}), 
and the fraction of self loops that are kept in $\hat{\cal M}_b$
w.r.t. the number of self loops in $\hat{\cal H}_b$ ({\em LoopFrac}).
%
Columns labeled {\em Controller Synthesis}
show the computation time (column {\em CPU}, in secs) for
the generation of $\hat{K}_b$,
and the size of its \gls{OBDD} representation ($|K|$, number of nodes).
The latter is also 
the size (number of lines) of ${\hat{K}_b}$ C code 
synthesized implementation. 
%
Columns in Tab.~\ref{expres-table.tex:2} have the following meaning.
Column $b$ shows the number of \gls{AD} bits. 
Columns labeled {\em Total} show the total
computation time (column {\em CPU}, in secs) 
and the memory ({\em MEM}, in bytes) 
for the whole process (i.e., control abstraction plus controller source code
generation), as well as the final outcome $\mu\in\{${\sc Sol}, {\sc NoSol}, {\sc
Unk}$\}$ of Alg.~\ref{qfc-syn-outline}. 

From Tabs.~\ref{expres-table.tex:1} and~\ref{expres-table.tex:2} we see that computing control
abstractions (i.e. Alg.~\ref{ctr-abs.alg}) 
is the most expensive operation in 
\gls{QKS} 
and that
thanks to function \fun{SelfLoop} $\hat{\cal M}_b$ contains no more than
2\% of the loops in $\hat{\cal H}_b$.





\subsubsection{\gls{MILP} problems Analysis}
\label{sec:milp-analysis}

\begin{figure*}
  \centering
  \begin{tabular}{ccc}
  \begin{minipage}{0.3\textwidth}
  \includegraphics[width=0.9\textwidth]{bits-time.eps}
  \caption{Buck DC-DC Converter: Number of MILP4 calls.}
  \label{bits-time.eps}
  \end{minipage}
  &
  \begin{minipage}{0.3\textwidth}
  \includegraphics[width=1\textwidth]{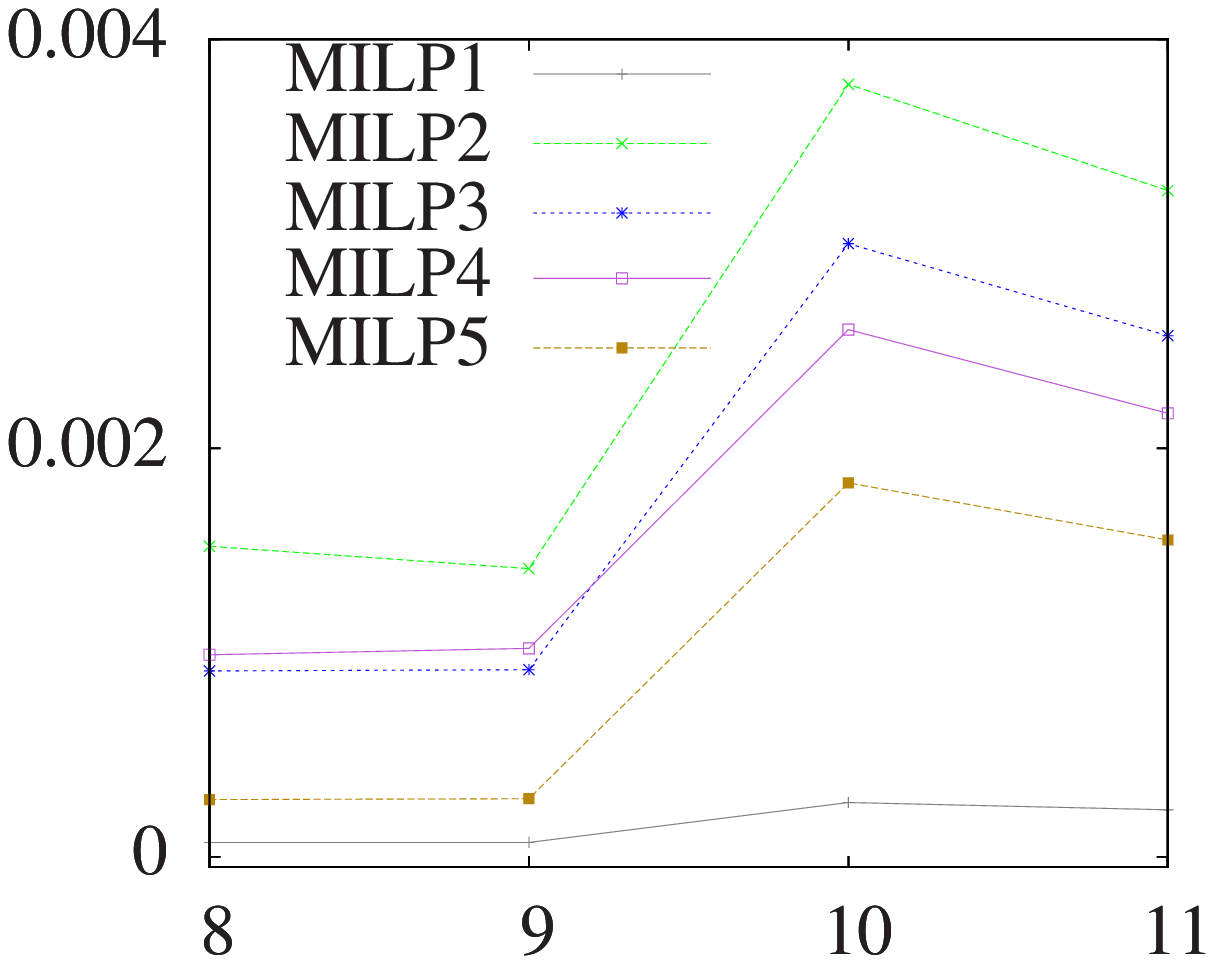}
  \caption{Buck DC-DC Converter: Average of \gls{MILP} calls.}
  \label{bits-avg.eps}
  \end{minipage}
  &
  \begin{minipage}{0.35\textwidth}
  \includegraphics[width=1\textwidth]{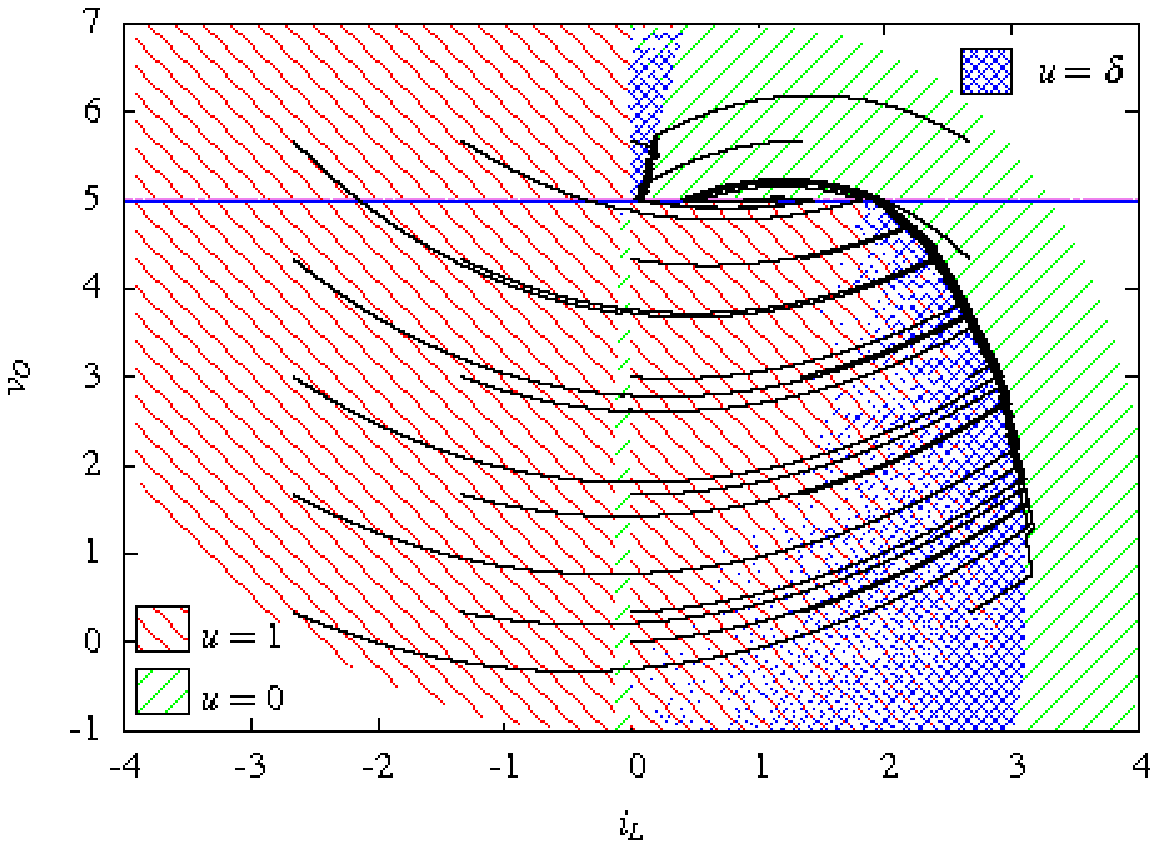}
  \caption{Buck DC-DC Converter: Controlled region with $b=10$ bits.}
  \label{controlled.region.10.eps}
  \end{minipage}
  \end{tabular}
\end{figure*}

%
%
%
%
%
%
%
%
%
%
%
%
%
%
%
%
%

For each \gls{MILP} problem solved in \gls{QKS}, 
Tabs.~\ref{expres-table.tras.milps.tex:1} and~\ref{expres-table.tras.milps.tex:2}
show (as a function of $b$)
the total and the average CPU time (in seconds) spent 
solving \gls{MILP} problems, together with the number of \gls{MILP} problems solved, divided
by different kinds of \gls{MILP} problems as follows.
MILP1 refers to the \gls{MILP} problems described in Sect.~\ref{dths-ctr-abs.tex}, i.e. those
computing the quantization for $I$ and $G$, 
MILP2 refers to \gls{MILP} problems in function \fun{SelfLoop} (see Alg.~\ref{alg:selfLoop}), 
MILP3 refers to the \gls{MILP} problems 
used in function \fun{overImg} (line~\ref{overimg.alg.step} of
Alg.~\ref{ctr-abs.alg}), 
MILP4 refers to \gls{MILP} problems used to check actions admissibility (line~\ref{check_s_s_prime.alg.step} of
Alg.~\ref{ctr-abs.alg}), and 
MILP5 refers to \gls{MILP} problems used to check transitions witnesses (line~\ref{check_s_u.alg.step} of
Alg.~\ref{ctr-abs.alg}).
Columns in Tabs.~\ref{expres-table.tras.milps.tex:1} and~\ref{expres-table.tras.milps.tex:2} 
have the following meaning: 
{\em Num} is the number of times that the \gls{MILP} problem of the given
type is called, {\em Time} is the total CPU time (in secs) needed to solve all the {\em
Num} instances of the \gls{MILP} problem of the given type, and {\em Avg} is the average
CPU time (in secs), i.e. the ratio between columns {\em Time} and {\em Num}
 \rimandotekrep{expres-details.tex}.

CPU time standard deviation is always less than 0.003 \rimandotekrep{expres-details.tex}.

Fig.~\ref{bits-time.eps} graphically shows 
(as a function of $b$)
the number of MILP4 instances solved (column {\em Num} of columns group 
MILP4 in Tabs.~\ref{expres-table.tras.milps.tex:1} and~\ref{expres-table.tras.milps.tex:2}).

From Tabs.~\ref{expres-table.tras.milps.tex:1} and~\ref{expres-table.tras.milps.tex:2}, column {\em Avg}, 
we see that the average time spent solving each \gls{MILP} instance
is small. Fig.~\ref{bits-avg.eps} graphically shows that \gls{MILP} average computation time 
does not heavily depend on $b$.
%
%
As observed in Remark \ref{explicit-loops-remark},  
Fig.~\ref{bits-time.eps} shows that the number of MILP4 invocations is 
much closer to 
$|\Gamma(A_X)||\Gamma(A_U)|$ = $2^{2b + 1}$, 
rather than the theoretical worst case running time $|\Gamma(A_X)|^2|\Gamma(A_U)|$ = $2^{4b + 1}$
of Alg.~\ref{ctr-abs.alg}.
This shows effectiveness of function \fun{overImg} heuristic.
\subsection{Buck DC-DC Converter: Control Software Performance}
\label{sec:controller-performances}

In this section we discuss the performance of the generated controller. 
Fig.
~\ref{generatedcodesnapshot.c.tex}
shows a snapshot of the \gls{QKS} synthesized 
control software for the Buck DC-DC converter 
when 10 bits
($b = 10$) are used for \gls{AD} conversion.


\begin{figure}[htbp]
  \framebox[1.0\hsize][c]{
    \begin{minipage}{0.9\hsize}
      \begin{center}
        \small
        \input{generatedcodesnapshot.c.tex}
      \end{center}
    \end{minipage}
  }
  \caption{A snapshot of the synthesized control software for the Buck DC-DC converter with 10 bit \gls{AD} conversion.} \label{generatedcodesnapshot.c.tex}
\end{figure}



\subsubsection{Controllable Region}\label{controllable-region.subsec}

One of the most important features of our approach is that it returns 
the guaranteed operational range (precondition) 
of the synthesized software (Theor. \ref{theor:synth-correct}). 
This is the controllable region $\hat{D}$ returned by 
Alg.~\ref{qfc-syn-outline}.
In our case study, 9 bit turns out to be enough to have 
a controllable region that covers the initial region~\cite{buck-tekrep-art-2011}.
Increasing the number of bits, we obtain even larger controllable regions.
Fig.~\ref{controlled.region.10.eps} shows the
controllable region $D_{10} = \Gamma^{-1}_{10}(\hat{D}_{10})$ for ${K}_{10}$ along with some trajectories
(with time increasing counterclockwise) for the closed loop system.
We see that the initial region $I \subseteq D_{10}$. Thus we know (on a
formal ground) that 10 bit \gls{AD} 
conversion suffices for our purposes. 
More details on controllable region visualization can be found in \cite{INFOCOMP2012}.
\begin{center}
  \begin{figure}[t]
    \centering
    \subfigure[$v_O$ from $i_L = 0, v_O = 0$]
    {
      \centering
      \includegraphics[width=0.32\textwidth, angle=-90]{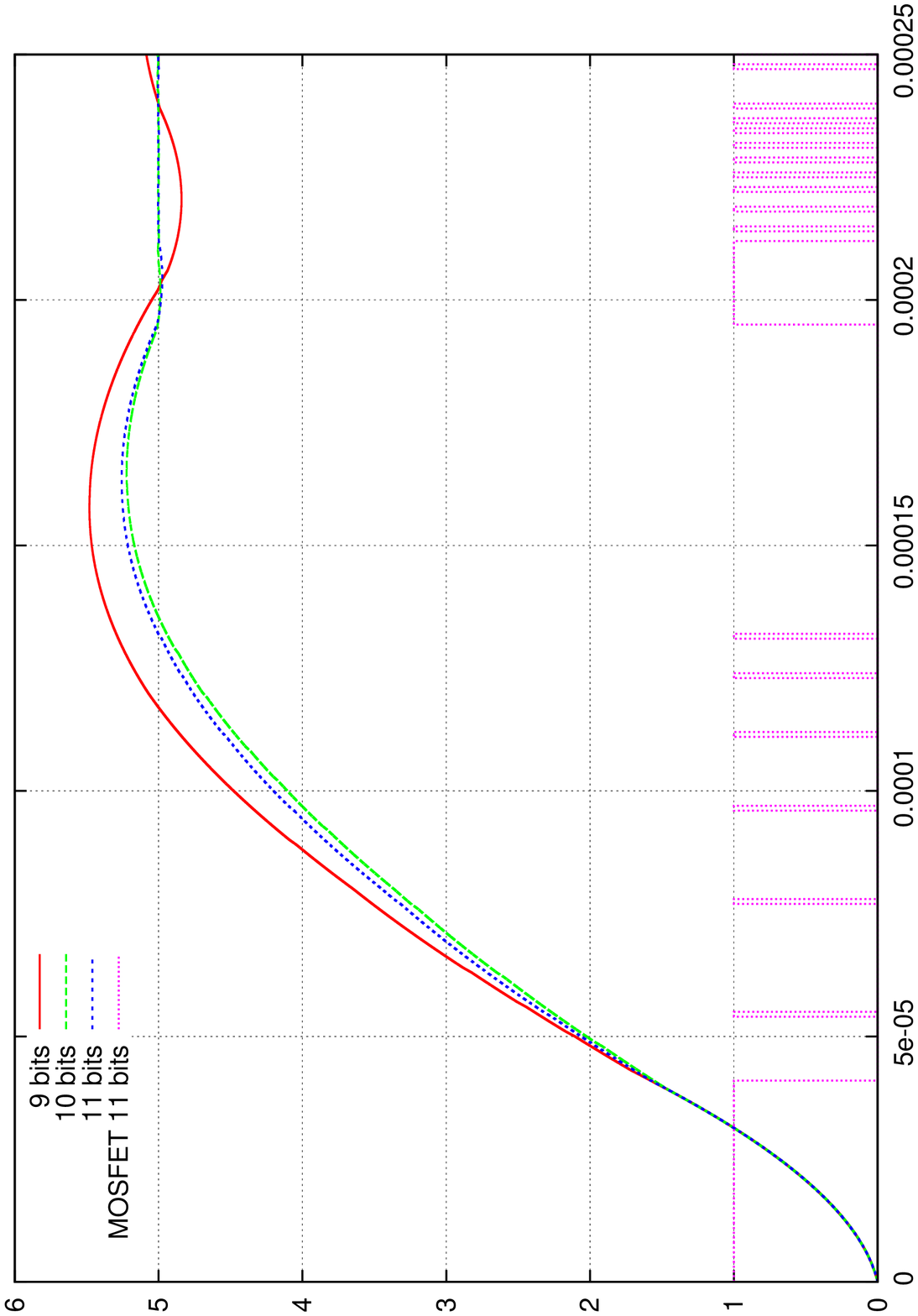}
      \label{load.eps}
    }
    \subfigure[{\em Ripple} for $v_O$ ($b = 11$)] 
    {
      \centering
      \includegraphics[width=0.32\textwidth, angle=-90]{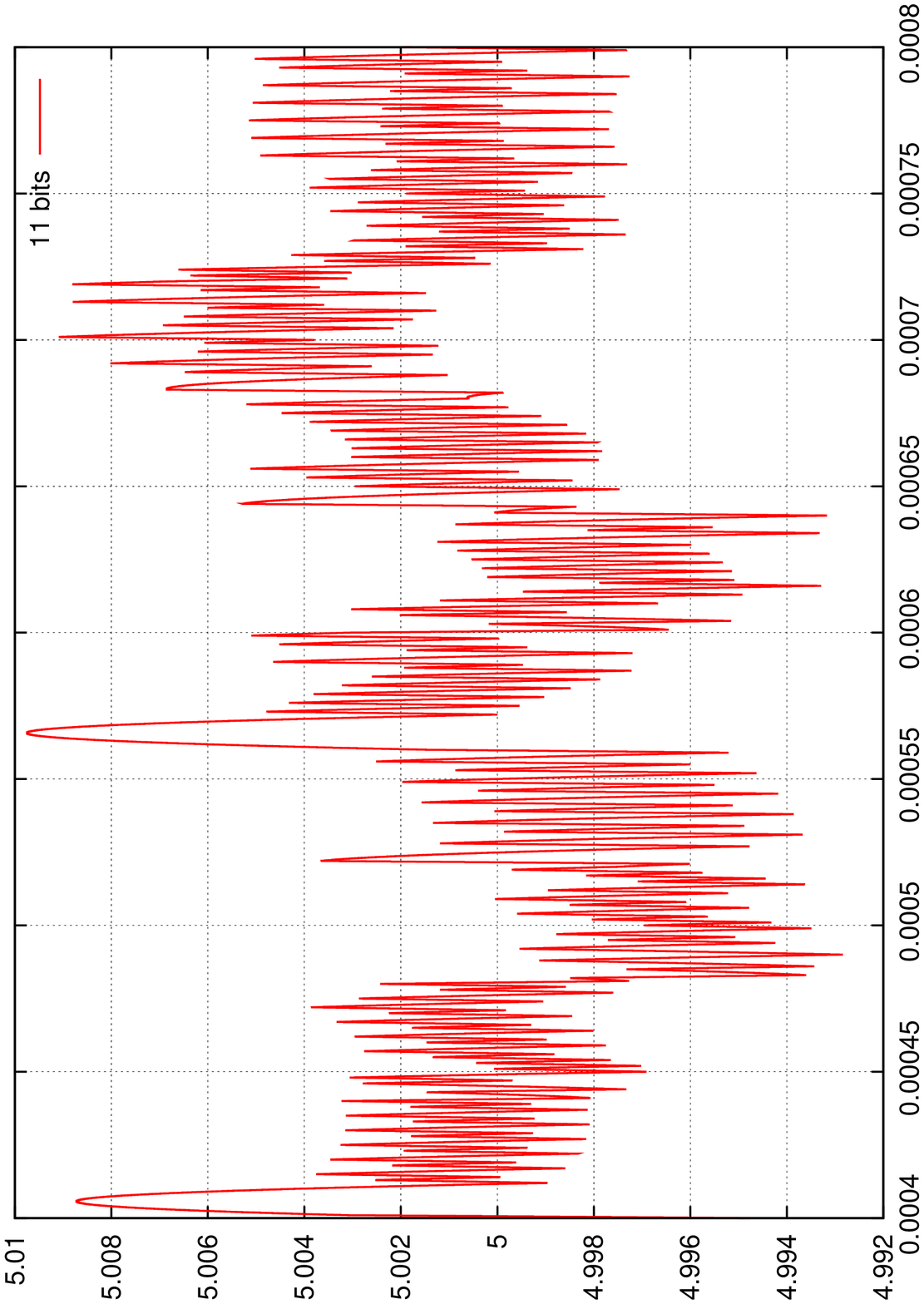}
      \label{ripple.11.eps}
    }
    \caption{Controller performances for the Buck DC-DC Converter: setup time and ripple.}
    \label{fig:performances}
  \end{figure}
\end{center}


\subsubsection{Setup Time and Ripple}
\label{startup-ripple.subsec}

Our model based control software synthesis approach
presently does not handle \emph{quantitative liveness specifications}.
Accordingly, quantitative system level formal specifications have to be
verified a posteriori. This can be done using a classical
\newacronym{HIL}{HIL}{Hardware-In-the-Loop}%
\glspl{HIL}
simulation approach or, 
even better, following a formal approach, as discussed in
\cite{henzingder-popl2010,kim-larsen-hscc2010}.
In our context \gls{HIL} simulation is quite easy since we already have
a \gls{DTLHS} model for the plant and the control software is 
generated automatically.

To illustrate such a point in this section we highlight \gls{HIL} simulation results
for two quantitative specifications typically considered in control
systems: \emph{Setup Time} and \emph{Ripple}.

The setup time measures the time it takes to reach the goal (steady state)
when the system is turned on. 
Fig.~\ref{load.eps} shows trajectories starting from point $(0, 0)$ for
$K_{9}$, $K_{10}$ and $K_{11}$ as well as the 
control command sent to the MOSFET (square wave in Fig.~\ref{load.eps}) for $K_{11}$. Note
that  all trajectories stabilize (steady state) after only $0.0003$ secs (setup time).

The ripple measures the wideness of the oscillations around the goal (steady state)
once this has been reached. 
Fig.~\ref{ripple.11.eps} shows the ripple for the output voltage after
stabilization.  
For $K_{11}$ we see that
the ripple is about $0.01$ V, that is $0.2\%$ of the reference value $V_{\mathrm{ref}}
= 5$ V. 

\sloppy

It is worth noticing that both setup time and ripple
compare well with typical figures of commercial high-end buck
DC-DC converters  
(e.g. see 
\cite{texas-instruments-buck-dc-dc}) and with the
results available from the literature (e.g.  
\cite{fuzzy-dc-dc-1996,time-optimal-dc-dc-2008}).

\fussy


%
%


\sloppy
\subsection{Inverted Pendulum: Experimental Settings}\label{expres.tex.setting.invpend}

\begin{figure*}[!b]
  \centering
  \begin{tabular}{cc}
  \begin{minipage}{0.27\textwidth}
  \includegraphics[width=\textwidth]{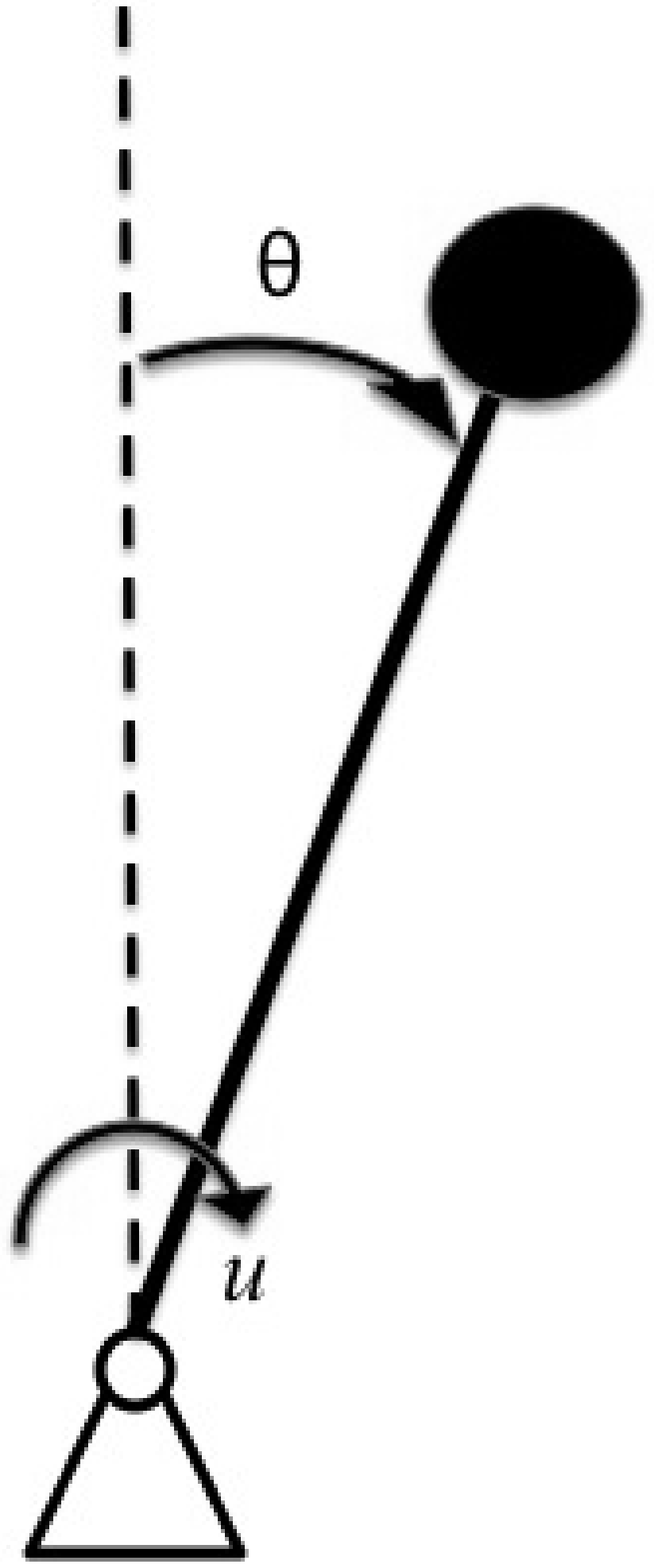}
  \caption{Inverted Pendulum with Stationary Pivot Point.}
  \label{fig:invpend}
  \end{minipage}
  &
  \begin{minipage}{0.5\textwidth}
  \includegraphics[width=\textwidth]{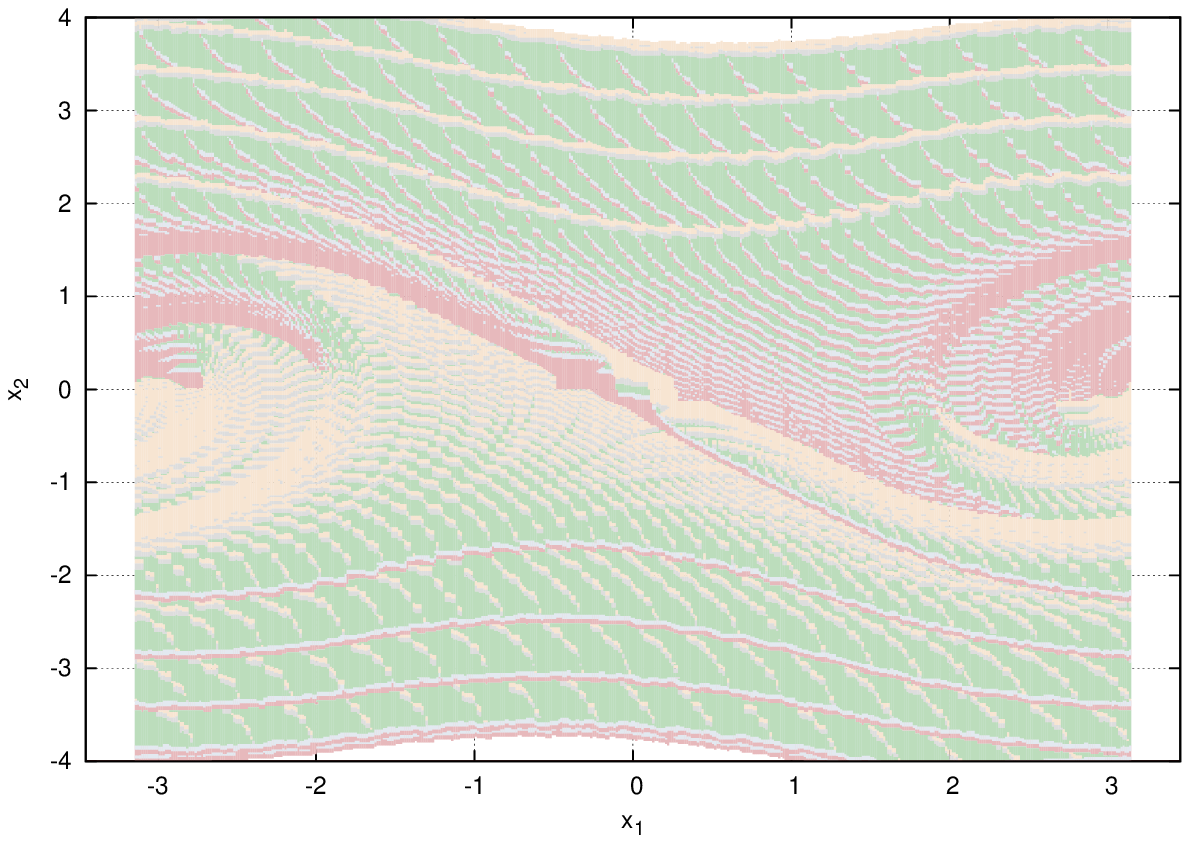}
  \caption{Inverted Pendulum: Controlled region with $b=9$ bits, $F=0.5$, $T=0.1$.}
  \label{fig:controllable-region-05}
  \end{minipage}
  \end{tabular}
\end{figure*}


In this section (and in Sects.~\ref{qks-perf.subsec.invpend},~\ref{sec:controller-performances.invpend}) 
we present experiment results obtained by using \gls{QKS} 
on the inverted pendulum described in \cite{KB94}, as shown in Fig. \ref{fig:invpend}.
The system is modeled by taking the angle $\theta$ and the angular velocity $\dot{\theta}$ as
state variables. The input of the system is the torquing force $u$, that can influence
the velocity in both directions.
Moreover, the behaviour of the system depends on the pendulum mass $m$, the length of the
pendulum $l$ and the gravitational acceleration $g$. Given such parameters,
the motion of the system is described by the differential equation
$\ddot{\theta} = {g \over l} \sin \theta + {1 \over m l^2} u$.

In order to obtain a state space representation, we consider the following normalized system,
where $x_1$ is the angle $\theta$ and $x_2$ is the angular speed $\dot{\theta}$.
\begin{eqnarray}
 \dot{x}_1 & = & x_2  \label{eq:pendmotion2.1} \\
 \dot{x}_2 & = & {g \over l} \sin x_1 + {1 \over m l^2} u \label{eq:pendmotion2.2} 
\end{eqnarray}

Differently from~\cite{KB94}, we consider the problem of finding a discrete controller, whose decisions may be ``apply the force clockwise'' ($u=1$), ``apply the force counterclockwise'' ($u=-1$), or ``do nothing'' ($u=0$).
The intensity of the force will be given as a constant $F$. Finally, the discrete time
transition relation $N$ is obtained from the equations (\ref{eq:pendmotion2.1}-\ref{eq:pendmotion2.2})
as the Euler approximation with sampling time $T$, i.e. the
predicate $(x'_1 = x_1 + T x_2)\,\land\,(x'_2 = x_2 + T {g \over l} \sin x_1 + T {1 \over m l^2} F u)$.

Since the system whose dynamics are in equations (\ref{eq:pendmotion2.1}-\ref{eq:pendmotion2.2}) is not linear,
we build a linear over-approximation of it as shown in \cite{cdc12}.
The result is the DTLHS ${\cal H}$ defined in Ex.~5 of \cite{cdc12}.
From now on we use ${\cal H}$ to denote the inverted pendulum system.

In all our experiments, as in~\cite{KB94},
we set parameters $l$ and $m$ in such a way that $\frac{g}{l}=1$ (i.e. $l=g$) and
$\frac{1}{ml^2}=1$ (i.e. $m=\frac{1}{l^2}$). Moreover, we set the force intensity $F=0.5$.
More experiments can be found in \cite{cdc12}.

As we have done for the buck DC-DC converter, we use uniform quantization functions dividing the domain of each state
variable ${\cal D}_{x_1}=[-1.1\pi, 1.1\pi]$ (we write $\pi$ for a rational approximation of it) and  ${\cal D}_{x_2}=[-4, 4]$
into $2^b$ equal intervals, where $b$ is the number of
bits used by AD conversion.
Since we have two quantized
variables, each one with $b$ bits,  the number of quantized
states 
is exactly $2^{2b}$.

The typical goal for the inverted pendulum is to turn the pendulum steady
to the upright position, starting from any possible initial position, within a given speed interval.
In our experiments, the goal region is defined by the predicate
$G(X)\equiv (-\rho\leq x_1\leq\rho)\,\land\,(-\rho\leq x_2\leq\rho)$,
where $\rho\in\{0.05, 0.1\}$, and the initial
region is defined by the predicate
$I(X)\equiv(-\pi\leq x_1\leq \pi)\,\land\,(-4\leq x_2\leq 4$).

We run \gls{QKS} on the control problem $({\cal H}, I, G)$
for different values of the remaining parameters, i.e. $\rho$ (goal tolerance),
$T$ (sampling time), and $b$ (number of bits of AD).
For each of such experiments, \gls{QKS} outputs a control software $K$
in C language. In the following, we sometimes make explicit the dependence on
$b$ by writing $K_{b}$. In order to evaluate performance of $K$,
we use an  {\em inverted pendulum  simulator} written in C. The simulator
computes the next state by using Eqs.~(\ref{eq:pendmotion2.1}-\ref{eq:pendmotion2.2}),
thus simulating a path of ${\cal H}^{(K)}$.
Such simulator also
introduces
random disturbances (up to 4\%) in the next state computation
to assess $K$ robustness w.r.t. non-modeled disturbances.
Finally, in the simulator
Eqs.~(\ref{eq:pendmotion2.1}-\ref{eq:pendmotion2.2}) are translated into the discrete time
version by means of a simulation time step $T_s$ much smaller than the sampling
time $T$ used in ${\cal H}$. Namely, $T_s = 10^{-6}$
seconds, whilst $T = 0.01$ or $T = 0.1$ seconds. This allows us to have a more
accurate simulation. Accordingly, $K$ is called each $10^4$ (or $10^5$)
simulation steps of ${\cal H}$. When $K$ is not called, the last chosen action
is selected again ({\em sampling and holding}).

All experiments for the inverted pendulum have been carried out on an Intel(R) Xeon(R) CPU @ 2.27GHz,
with 23GiB of RAM, 
Debian
GNU/Linux 6.0.3 (squeeze).

\begin{table}[!tb]
  \small
  \centering
  \caption{Inverted Pendulum: control abstraction \& controller synthesis results with $F=0.5$.\label{expres.invpend.table.tex}}{%
  \begin{tabular}{ccc|ccc}
    \toprule
    $b$ & $T$ & $\rho$ & $|K|$ & CPU & MEM\\
    \midrule

8 & 0.1 & 0.1 & 2.73e+04 & 2.56e+03 & 7.72e+04 \\
9 & 0.1 & 0.1 & 5.94e+04 & 1.13e+04 & 1.10e+05 \\
10 & 0.1 & 0.1 & 1.27e+05 & 5.39e+04 & 1.97e+05 \\
11 & 0.01 & 0.05 & 4.12e+05 & 1.47e+05 & 2.94e+05 \\

    \bottomrule
  \end{tabular}}
\end{table}






\subsection{Inverted Pendulum: \gls{QKS} Performance}\label{qks-perf.subsec.invpend}
To stabilize an {\em underactuated} inverted pendulum (i.e. $F<
1$) from the hanging position to the upright position, a controller  needs to
find a non obvious strategy that consists of swinging the pendulum once or more
times  to gain enough momentum. 
\gls{QKS} is able to synthesize such a
controller taking as input ${\cal H}$ with $F=0.5$ (note that
in~\cite{KB94} $F=0.7$). Results are in Tab.~\ref{expres.invpend.table.tex}, where each
row corresponds to a \gls{QKS} run,
columns $b$, $T$ and $\rho$ show the corresponding inverted
pendulum parameters, column $|K|$ shows the size of the C code for
$K_{b}$, and columns CPU and MEM show the computation
time (in seconds) and RAM usage (in KB) needed by \gls{QKS} to
synthesize $K_{b}$.

\begin{center}
  \begin{figure}
    \centering
    \subfigure[$x_1$ from $x_1=\pi$, $x_2=0$]
    {
      \centering
      \includegraphics[width=0.47\textwidth]{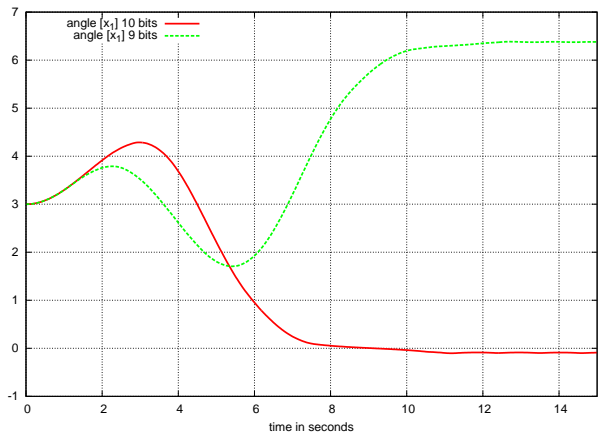}
      \label{fig:trajectories-05}
    }
    \subfigure[Ripple for $x_1$ ($b=10$)]
    {
      \centering
      \includegraphics[width=0.47\textwidth]{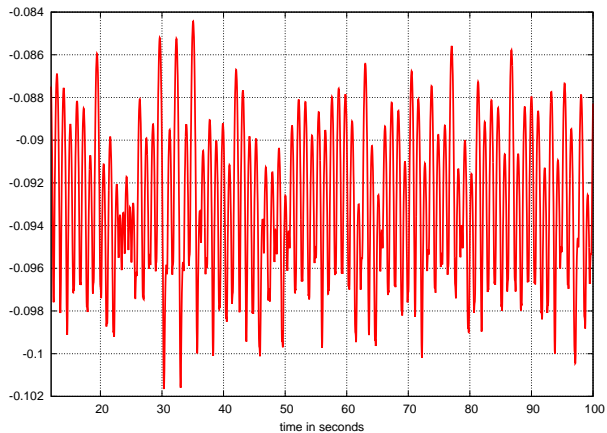}
      \label{fig:ripple-05}
    }
    \caption{Controller performances for the Inverted Pendulum with $F=0.5$, $b=9, 10$: setup time and ripple.}
    \label{fig:performances.invpend}
  \end{figure}
\end{center}

\subsection{Inverted Pendulum: Control Software Performance}\label{sec:controller-performances.invpend}

As for $K_{b}$ performance, it is easy to show that by reducing the
sampling time $T$ and the quantization step (i.e. increasing $b$), we increase
the quality of $K_{b}$ in terms of ripple and set-up time. 
Fig.~\ref{fig:trajectories-05} shows the simulations of ${\cal
H}^{(K_{9})}$ and ${\cal H}^{(K_{10})}$.  As we can see,
$K_{10}$ drives the system to the goal with a smarter trajectory,  with
one swing only. This have a significant impact on the set-up time  (the system
stabilizes after about $8$ seconds when controlled by $K_{10}$ instead of about  $10$
seconds required when controlled by $K_{9}$).
Fig.~\ref{fig:controllable-region-05} shows that
the controllable region of $K_{9}$
covers almost all
states in the admissible region that we consider. Different colors mean different
set of actions enabled by the controller.
Finally,
Fig.~\ref{fig:ripple-05} shows the ripple of $x_1$ for ${\cal H}^{(K_{10})}$
inside the goal. Note that such ripple is very low (0.018 radiants).

\fussy


\sloppy

\section{Related Work}
\label{related-works.tex}


\newacronym{LHA}{LHA}{Linear Hybrid Automaton}%
\newacronym{TA}{TA}{Timed Automaton}%
\newacronym{PWA-DTHS}{PWA-DTHS}{Piecewise Affine Discrete Time Hybrid Systems}%

This paper is a journal version of \cite{qks-cav10} which is extended here
by providing omitted proofs and algorithms.

Sect.~\ref{sect:controlsoftware} compares our contribution with related work on control software synthesis from system level formal specifications.
For the sake of completeness, Sect.~\ref{sect:software} expands such a comparison to recent results on (non control) software synthesis from formal specifications, focusing on papers using techniques related to ours (constraint solving, \gls{OBDD}, supervisory control \cite{RamadgeWonham1987}).
Sect.~\ref{sect:summary} describes Tab.~\ref{tab:related-works.tab.tex}, which summarizes the novelty of our contribution with respect to automatic methods for control software synthesis.



\subsection{Control Software Synthesis from System Level Formal Specifications}
\label{sect:controlsoftware}

Control Engineering has been studying
control law design (e.g., optimal control, robust control, etc.)
for more than half a century
(e.g., see~\cite{modern-control-theory-1990}).
As explained in Sect.~\ref{separation-of-concerns} such results
cannot be directly used in our (formal) software synthesis context.
On the other hand we note that there are many control systems that
are not software based (e.g., in analog circuit design).
In such cases, of course, our approach cannot be used.

\subsubsection{Control of Linear and Switched Hybrid Systems} 

%
%
%
%

The paper closer to our is \cite{KB94} which studies the problem of
control synthesis for discrete time hybrid systems.
However, while we present an automatic method, the approach in \cite{KB94} is not automatic since it requires
the user to provide a suitable Lyapunov function (a far from trivial task even for linear hybrid systems).

\emph{Quantized Feedback Control} 
has been widely studied in control engineering
(e.g. see \cite{quantized-ctr-tac05}).
However such research addresses linear systems
(not the general case of hybrid systems) and focuses on control law design rather
than on control software synthesis (our goal).
Furthermore, all control engineering approaches
model \emph{quantization errors} as statistical \emph{noise}. 
As a result, correctness of the control law holds in a probabilistic sense.
Here instead, we model quantization errors as nondeterministic 
(\emph{malicious})
\emph{disturbances}. This guarantees system level correctness of the generated
control software (not just that of the control law) 
with respect to \emph{any} possible sequence of quantization errors.

Control software synthesis for continuous time linear systems
has been widely studied 
(e.g. see \cite{modern-control-theory-1990}).
However such research does not account for quantization.
Control software synthesis for continuous time linear systems
with quantization has been investigated in \cite{MazoTabuada11}.
This paper presents an automatic method which, taking as
input a continuous time linear system and a goal specification,
produces a control law (represented as an \gls{OBDD}) 
through {\sc Pessoa} \cite{pessoa-cav10}.
While \cite{MazoTabuada11} applies to (continuous time) linear systems, our contribution focuses on (discrete time) linear {\em hybrid} systems (\glspl{DTLHS}).
%
Furthermore, although taking into account the quantization process, \cite{MazoTabuada11} does not supply an effective method to generate control software (as we do in Sect.~\ref{sec:QFCsynthesis}). As a consequence \cite{MazoTabuada11} gives no guarantee on \gls{WCET}, an important issue since an \gls{SBCS} is a hard real-time system.

%
\cite{GirardPolaTabuada10} presents a method to find an over-approximation of {\em switched systems}, under certain stability hypotheses. A switched system is a hybrid system whose mode transitions only depend on control inputs. Such a line of research goes back to \cite{PGT2007} which presents a method to compute symbolic models for nonlinear control systems. In combination with \cite{MazoTabuada11}, such results provide a semi-automatic method for the construction of a control law for switched and nonlinear systems.
However, we note that nonlinear systems in \cite{PGT2007} are not hybrid systems, since they cannot handle discrete variables.
Moreover, while a switched system as in  \cite{GirardPolaTabuada10} is a linear hybrid system the converse is false since in a linear hybrid system mode transitions can be triggered by state changes (without any change in the input).
For example, our approach can synthesize controllers both for the buck DC-DC converter of Fig.~\ref{buck.eps} (a linear hybrid system) and for the boost DC-DC converter in \cite{GirardPolaTabuada10} (a switched system). However, the approach in \cite{GirardPolaTabuada10} cannot handle the buck DC-DC converter of Fig.~\ref{buck.eps} because of the presence of the diode which triggers state dependent mode changes.
Moreover, \cite{MazoTabuada11} combined with \cite{GirardPolaTabuada10} and \cite{PGT2007} provide semi-automatic methods since they rely on a Lyapunov function provided by the user, much in the spirit of \cite{KB94}.

\subsubsection{Control of Timed Automata and Linear Hybrid Automata}

When the plant model is a 
\gls{TA}
\cite{alur-sfm04,MalerMP91} the reachability and control law synthesis
problems have both been widely studied. Examples are in
\cite{lpw_cav97,tiga-concur05,ctr-synthesis-cav07,optctr-hscc01,PEM11} 
and citations thereof.
When the plant model is a
\gls{LHA}
\cite{alg-hs-tcs95,AHH96}  
reachability and existence of a control law are both undecidable problems
\cite{dt-ctr-rect-aut-icalp97,decidability-hybrid-automata-jcss98}.
This, of course, has not prevented devising effective 
(semi) algorithms for such problems. Examples are in
\cite{AHH96,HHT96,phaver-sttt08,ctr-lha-cdc97,Benerecetti:2011yq}.
Much in the same spirit here we give 
necessary and sufficient constructive conditions for control software existence.
Note that none of the above mentioned papers address control software synthesis
since they all assume \emph{exact} (i.e. real valued) state measures
(that is, state feedback quantization is not considered).

\newacronym{LLHA}{LLHA}{Lazy Linear Hybrid Automaton}%
Continuous-time linear hybrid systems with time delays and given precision of state measurement,
\gls{LLHA}, have been studied in \cite{AgrawalT05}.
The reachability problem is shown to be undecidable for \glspl{LLHA} in \cite{AgrawalSTY06}.
Note that such results do not directly apply to our context (Theorem~\ref{th:undecidability})
since we are addressing discrete-time systems.

\subsubsection{Control of Piecewise Affine and Nonlinear Hybrid Systems}
%
%
%
%
%
%
%
%

Finite horizon control of
\gls{PWA-DTHS}
has been studied using
a \gls{MILP} based approach.
See, for example, 
\cite{sat-opt-ctr-hscc04}.
\glspl{PWA-DTHS} form a strict subclass of \glspl{DTLHS} since \gls{PWA-DTHS} cannot handle linear
constraints consisting of discrete state variables whereas \glspl{DTLHS} can.
Such approaches cannot be directly used in our context since they
address synthesis of finite horizon controllers and do not account
for quantization.

%
Much in the spirit of \cite{KB94}, \cite{icar08} presents an explicit control synthesis algorithm for discrete time (possibly nonlinear) hybrid systems, by avoiding the needs of providing Lyapunov functions. Moreover, \cite{upmurphi-icaps09} presents control synthesis algorithms for discrete time hybrid systems cast as universal planning problems.
Such approaches cannot be directly used in our context since they do not account for quantization.

Hybrid Toolbox \cite{HybTBX} considers continuous time piecewise affine systems.
Such a tool outputs a feedback control law that is
then passed to Matlab in order to generate control software.
We note that such an approach does not account for state feedback
quantization and thus,
as explained in Sect.~\ref{separation-of-concerns}, does not
offer any formal guarantee about system level correctness
of the generated software, which is instead our focus here.

Using the engine proposed in this paper and computing suitable over-approximations of nonlinear functions, it is possible to address synthesis for nonlinear hybrid systems as done in \cite{cdc12}.

\subsubsection{Software Synthesis in a Finite Setting}

Correct-by-construction software synthesis
in a finite state setting
has been studied, for example, in
\cite{Tro97,Tro98,Tro99,Tro99a}.
An automatic method for the generation of supervisory controllers for finite state systems is presented in \cite{Tro96}.
Control software synthesis in non-deterministic finite domains is studied in \cite{strong-planning-98} (cast as a universal planning problem).
%
Such approaches cannot be directly used in our context since they
cannot handle
continuous state variables.

In Sect.~\ref{sec:qfc-syn-outline} we presented our \gls{QFC} synthesis algorithm (Alg.~\ref{qfc-syn-outline}).
Line~\ref{item:StrongCtr} of Alg.~\ref{qfc-syn-outline} calls function \fun{strongCtr} (implementing a variant of the algorithm in \cite{strong-planning-98}) in order to compute a time optimal controller for the finite state quantized system.
\cite{emsoft12} presents a method to obtain a compressed non time optimal controller for a finite state system. This is done by trading the size of the synthesized controller with time optimality while preserving closed loop performances (Remark~\ref{rem:emsoft12}).
Such a method can be implemented in function \fun{strongCtr}.
Thus, \cite{emsoft12} is not an improvement to the present paper but it is a contribution on controller synthesis for finite state systems.

\subsubsection{Switching Logic}

Optimal switching logic for hybrid systems has been also widely
investigated. For example, see \cite{TGT09,JGST10,jha-emsoft11} and citations
thereof. Such approaches, by ignoring the quantization process,
indeed focus on the control law design (see Sect.~\ref{separation-of-concerns}).
However we note that \cite{JGST10,jha-emsoft11} address dwell-time and optimality
issues which are not covered by our approach.

\subsubsection{Abstraction}

Quantization can be seen as a sort of abstraction (the reason for the name {\em control abstraction}),
which has been widely studied in a
hybrid system formal verification context
(e.g., see \cite{AHLP-ieee00,pred-abs-tecs06,Tiwari08,ST11}).
Note however that in a verification context abstractions
are designed so as to ease the verification task whereas
in our setting quantization is a design requirement since it models a hardware component
(\gls{AD} converter) which is part of the specification of the control software synthesis problem.
Indeed, in our setting, we have to design a controller \emph{notwithstanding} the nondeterminism stemming
from the quantization process. As a result, the techniques used to devise clever abstractions
in a verification setting cannot be directly used in our synthesis setting where the quantization to be used is given.

\subsection{Software Synthesis from Formal Specifications}
\label{sect:software}

Much as control software synthesis, also software synthesis has been widely studied since a long time in many contexts.
For examples, see \cite{PR89b,PR89a,SF06,GR09}.
We give a glimpse of recent results on (non control) software synthesis approaches using techniques related to ours (constraint solving, \gls{OBDD}, supervisory control).


\cite{emerson-toplas-04} shows how to mechanically synthesize fault-tolerant concurrent programs for various fault classes.
\cite{SGF2010} presents a method that synthesizes a program, if there exists one, that meets the input/output specification and uses only the given resources.
\cite{GJTV2011} addresses the problem of synthesizing loop-free programs starting from logical relations between input and output variables.
\cite{SGCF2011} proposes a synthesis technique and applies it to the inversion of imperative programs (e.g., such as insert/delete operations, compressors/decompressors).
\cite{CCHRS11} presents a method for the quantitative, performance-aware synthesis of concurrent programs.
Procedures and tools for the automated synthesis of code fragments are also proposed in \cite{KMPS12,GKP11,KMPS10}.

Such approaches build on techniques (constraint solving, \gls{OBDD}, supervisory control) related to ours, but do not address control software synthesis from system level formal specifications.


\newcommand{\Y}{$\bullet$}

\begin{table}
  \centering
  \small
  \caption{Summary of Related Work. `\Y' stands for `Yes'. An empty cell means that feature is not supported.\label{tab:related-works.tab.tex}}{%
  \begin{tabular}{!{\vrule width 1.1pt}p{0.32\textwidth}!{\vrule width 1.1pt}p{0.05cm}|p{0.05cm}!{\vrule width 1.1pt}p{0.05cm}|p{0.05cm}|p{0.05cm}|p{0.05cm}|p{0.05cm}|p{0.05cm}|p{0.05cm}|p{0.05cm}!{\vrule width 1.1pt}p{0.05cm}!{\vrule width 1.1pt}p{0.05cm}|p{0.05cm}|p{0.05cm}!{\vrule width 1.1pt}p{0.05cm}|p{0.05cm}|p{0.05cm}!{\vrule width 1.1pt}}
    \toprule
    {\em Citation} & 
    \multicolumn{2}{c!{\vrule width 1.1pt}}{\em T} & 
    \multicolumn{8}{c!{\vrule width 1.1pt}}{\em Input System} & 
     &
    \multicolumn{3}{c!{\vrule width 1.1pt}}{\em K} & 
    \multicolumn{3}{c!{\vrule width 1.1pt}}{\em Impl}\\
    \hline
    & 
    \begin{sideways}Continuous time\end{sideways} & 
    \begin{sideways}Discrete time\end{sideways} & 
    \begin{sideways}Finite state\end{sideways} & 
    \begin{sideways}Linear\end{sideways} & 
    \begin{sideways}Switched\end{sideways} & 
    \begin{sideways}\glsname{TA}, \glsname{LHA}\end{sideways} &
    \begin{sideways}Piecewise affine\end{sideways} & 
    \begin{sideways}Linear hybrid sys.\end{sideways} &
    \begin{sideways}Nonlinear\end{sideways} & 
    \begin{sideways}Nonlinear hybrid sys.\end{sideways} & 
    \begin{sideways}Quantization\end{sideways} &
    \begin{sideways}Formally verified\end{sideways} & 
    \begin{sideways}Control Software\end{sideways} & 
    \begin{sideways}Guaranteed \gls{WCET}\end{sideways} & 
    \begin{sideways}Fully automatic\end{sideways} & 
    \begin{sideways}Semi automatic\end{sideways} & 
    \begin{sideways}Tool available\end{sideways}\\
    \midrule
    \cite{qks-cav10} and this paper &   &\Y &   &\Y &\Y &\Y &\Y &\Y &   &   &\Y &\Y &\Y &\Y &\Y &   &\Y \\
    \midrule
    \cite{cdc12}                    &   &\Y &   &   &   &   &   &   &   &\Y &   &\Y &\Y &\Y &\Y &   &\Y \\
    \hline
    \cite{emsoft12}                 &   &\Y &\Y &   &   &   &   &   &   &   &   &\Y &\Y &\Y &\Y &   &\Y \\
    \hline
    \cite{optctr-hscc01}            &\Y &   &   &   &   &\Y &   &   &   &   &   &\Y &   &   &\Y &   &   \\
    \hline
    \cite{HybTBX}                   &\Y &   &   &   &   &   &\Y &   &   &   &   &   &\Y &   &\Y &   &\Y \\
    \hline
    \cite{sat-opt-ctr-hscc04}       &   &\Y &   &   &   &   &\Y &   &   &   &   &   &   &   &\Y &   &   \\
    \hline
    \cite{Benerecetti:2011yq}       &\Y &   &   &   &   &\Y &   &   &   &   &   &\Y &   &   &\Y &   &\Y \\ 
    \hline
    \cite{tiga-concur05}            &\Y &   &   &   &   &\Y &   &   &   &   &   &\Y &   &   &\Y &   &   \\
    \hline
    \cite{strong-planning-98}       &   &\Y &\Y &   &   &   &   &   &   &   &   &\Y &\Y &   &\Y &   &\Y \\
    \hline
    \cite{icar08}                   &   &\Y &   &\Y &\Y &\Y &\Y &\Y &\Y &\Y &   &   &   &   &\Y &   &\Y \\
    \hline
    \cite{upmurphi-icaps09}         &   &\Y &   &\Y &\Y &\Y &\Y &\Y &   &   &   &   &   &   &\Y &   &\Y \\
    \hline
    \cite{quantized-ctr-tac05}      &\Y &   &   &\Y &   &   &   &   &   &   &\Y &   &   &   &   &   &   \\
    \hline
    \cite{GirardPolaTabuada10}      &\Y &   &   &   &\Y &   &   &   &   &   &\Y &\Y &   &   &   &\Y &   \\
    \hline
    \cite{JGST10}                   &\Y &   &   &   &\Y &   &   &   &   &   &   &\Y &   &   &\Y &   &   \\
    \hline
    \cite{jha-emsoft11}             &\Y &   &   &   &\Y &   &   &   &   &   &   &\Y &   &   &\Y &   &   \\
    \hline
    \cite{KB94}                     &   &\Y &   &\Y &   &   &   &   &\Y &   &\Y &\Y &   &   &   &\Y &   \\
    \hline
    \cite{lpw_cav97}                &\Y &   &   &   &   &\Y &   &   &   &   &   &\Y &   &   &\Y &   &   \\
    \hline
    \cite{ctr-synthesis-cav07}      &\Y &   &   &   &   &\Y &   &   &   &   &   &\Y &   &   &\Y &   &   \\
    \hline
    \cite{MazoTabuada11}            &\Y &   &   &\Y &   &   &   &   &   &   &\Y &\Y &   &   &\Y &   &\Y \\
    \hline
    \cite{PEM11}                    &\Y &   &   &   &   &\Y &   &   &   &   &   &\Y &\Y &   &\Y &   &\Y \\
    \hline
    \cite{PGT2007}                  &\Y &   &   &   &   &   &   &   &\Y &   &\Y &\Y &   &   &   &\Y &   \\
    \hline
    \cite{TGT09}                    &\Y &   &   &   &\Y &   &   &   &   &   &   &\Y &   &   &\Y &   &   \\
    \hline
    \cite{Tro96}                    &   &\Y &\Y &   &   &   &   &   &   &   &   &\Y &\Y &   &\Y &   &   \\
    \hline
    \cite{Tro97}                    &   &\Y &\Y &   &   &   &   &   &   &   &   &\Y &\Y &   &\Y &   &   \\
    \hline
    \cite{Tro98}                    &   &\Y &\Y &   &   &   &   &   &   &   &   &\Y &\Y &   &\Y &   &   \\
    \hline
    \cite{Tro99}                    &   &\Y &\Y &   &   &   &   &   &   &   &   &\Y &\Y &   &\Y &   &   \\
    \hline
    \cite{Tro99a}                   &   &\Y &\Y &   &   &   &   &   &   &   &   &\Y &\Y &   &\Y &   &   \\
    \hline
    \cite{ctr-lha-cdc97}            &\Y &   &   &   &   &\Y &   &   &   &   &   &\Y &   &   &\Y &   &   \\ 
    \bottomrule
  \end{tabular}}
\end{table}

\subsection{Summary}
\label{sect:summary}

Tab.~\ref{tab:related-works.tab.tex} summarizes the novelty of our contribution with respect to automatic methods for control software synthesis (our focus here).
For this reason, it only considers papers addressing control software synthesis. Namely, those in Sect.~\ref{sect:controlsoftware} but the ones focusing on abstraction (since Sect.~\ref{sect:software} results do not address control software synthesis).

Tab.~\ref{tab:related-works.tab.tex} is organized as follows.
Each row refers to a citation. Each column represents a feature of a cited work. A bullet in a cell means that the citation in the cell row has the feature in the cell column. Where the feature is missing, the cell is empty.
The group of columns labeled {\em T} denotes whether the input model is expressed in {\em continuous time} or {\em discrete time}.
The group of columns labeled {\em Input System} lists the kind of input models we are interested in. Namely: {\em finite state}, {\em linear}, {\em switched}, {\em piecewise affine}, {\em \gls{TA} or \gls{LHA}}, {\em linear hybrid sys.}, {\em nonlinear}, {\em nonlinear hybrid sys}.
Note that the combination of columns {\em linear hybrid sys.} and {\em discrete time} denotes our class of \glspl{DTLHS}.
The column labeled {\em Quantization} denotes that the row supplies the quantization process.
The group of columns labeled {\em K} lists the output controller characteristics we are interested in. In particular: {\em Formally verified} denotes if the output controller is guaranteed to satisfy the given input specification; {\em Control software} indicates if the presented method outputs a control software implementation; {\em Guaranteed \gls{WCET}} denotes if the output controller has a guaranteed \gls{WCET}.
Finally, the group of columns labeled {\em Impl} considers implementation issues, namely if a method is {\em fully automatic} or {\em semi automatic}, and if there exists a {\em tool available} implementing the presented method.
Note that \cite{GirardPolaTabuada10} and \cite{PGT2007} in Tab.~\ref{tab:related-works.tab.tex} represent their combination with \cite{MazoTabuada11}.

Summing up, to the best of our knowledge, no previously published
result is available about {\em fully automatic} generation
(with a {\em tool available})
of {\em correct-by-construction control software}
with a {\em guaranteed \gls{WCET}}
from a \gls{DTLHS} model of the plant, 
\emph{system level formal specifications} and
\emph{implementation specifications} ({\em quantization}, that is number of bits in \gls{AD} conversion).

\fussy

\section{Conclusions}
\label{conclu.tex}

\sloppy

We presented an algorithm and a tool \gls{QKS} implementing it,
to support a \emph{Formal Model Based Design}
approach to control software. Our tool takes as input
a formal \gls{DTLHS} model of the plant, 
\emph{implementation specifications} 
(namely, number of bits in \gls{AD} conversion), 
and \emph{system level formal specifications}
(namely, safety and liveness properties for the closed loop system).
It returns as output a correct-by-construction
C implementation (if any) of the control software 
(namely, \texttt{Control\_Law} and \texttt{Controllable\_Region})
with a \gls{WCET} guaranteed to be linear in the number of bits of the 
quantization schema. We have shown feasibility of our proposed
approach by presenting experimental results on
using it to synthesize C controllers for the buck DC-DC converter
and the inverted pendulum.

\fussy


In order to speed-up the computation and to avoid possible 
numerical errors due to \gls{MILP} solvers \cite{NS04}, a natural
possible future research direction is to investigate
fully symbolic control software synthesis algorithms
based on efficient \emph{quantifier elimination} procedures
(e.g., see \cite{quantifier-elimination-cav10} and citations thereof).


%


\subsubsection*{Acknowledgments}
{\small 
We gratefully acknowledge partial support from
FP7 projects GA218815 (ULISSE), 317761 (SmartHG), 600773 (PAEON) and MIUR project DM24283 (TRAMP).}




\renewcommand*{\glossaryname}{ACRONYMS}

\printglossaries


\sloppy
\bibliographystyle{plain}
\bibliography{modelchecking}
\fussy


\clearpage
\appendix




\section*{Appendix}

%
%


%
%


\section{\gls{LTS} controller synthesis}
\label{sec:lts-alg}

Symbolic (\gls{OBDD} based) control software synthesis algorithms for finite state
deterministic \glspl{LTS} have been studied in 
\cite{Tro98} and citations thereof.
In such a
context of course strong and weak solutions are the same. Symbolic
(\gls{OBDD} based) control synthesis algorithms for finite state
nondeterministic \glspl{LTS} have been studied in \cite{strong-planning-98}
in a universal planning setting. In such a context strong and weak
solutions in general differ.

\sloppy

To compute strong solutions, we implemented a variant of the algorithm
in \cite{strong-planning-98} in function \fun{strongCtr}. In our
variant, a strong controller for the given \gls{LTS} control problem is
always returned, even if it is not possible to entirely control the
given initial region (see Sect.~\ref{sec:qfc-syn-outline}).
More precisely, it returns the strong mgo (see Def.~\ref{def:sol}),
i.e. the \emph{unique} strong
solution $K$ to a control problem (${\cal S}$, $I$, $G$) that,
disallowing as few actions as possible, drives as many states as
possible to a state in $G$ along a shortest path. 
For the sake of
completeness, we show the resulting algorithm in Alg.~\ref{strngctr.alg}.

\fussy

Analogously, function \fun{existsWeakCtr} exploits the algorithm in~\cite{Tro98}
to verify the existence of weak solutions. Function \fun{existsWeakCtr} 
is shown in
Alg.~\ref{wkctr.alg}.


\begin{algorithm}
\caption{Building a strong mgo for an \gls{LTS} control problem}
\label{strngctr.alg}
\begin{algorithmic}[1]
\REQUIRE \gls{LTS} control problem $({\cal S}, I, G)$, with \gls{LTS} ${\cal S} = (S, A, T)$.
\ENSURE \fun{strongCtr}(${\cal S}, I, G$)
\STATE $K(s, a) \gets 0$, $D(s) \gets G(s)$, $\tilde{D}(s) \gets 0$
\WHILE {$D(s) \neq \tilde{D}(s)$} 
  \STATE $F(s, a) \gets \exists s': T(s, a, s') \land \forall s'\; [T(s, a, s') \Rightarrow D(s')]$
  \STATE $K(s, a) \gets K(s, a) \lor (F(s, a) \land \not\exists a: K(s, a))$
  \STATE $\tilde{D}(s) \gets D(s)$, $D(s) \gets D(s) \lor \exists a: K(s, a)$
\ENDWHILE
\STATE {\bf return} $\langle \forall s\; [I(s) \Rightarrow \exists a: K(s, a)], \exists a: K(s, a), K(s, a)\rangle$
\end{algorithmic}
\end{algorithm}  


Correctness of function \fun{strongCtr} in Alg.~\ref{strngctr.alg} is proved in
Prop.~\ref{strongCtr.corr.prop}.

\begin{proposition}\label{strongCtr.corr.prop}
Let ${\cal S} = (S, A, T)$ be an \gls{LTS} and ${\cal P} = ({\cal S}, I, G)$ be an \gls{LTS} control
problem. Then, \fun{strongCtr}(${\cal S}, I, G$) returns $\langle b, D,
K\rangle$ s.t. $K$ is the strong mgo for $({\cal S}, \varnothing, G)$, $D = {\rm
Dom}(K)$ and $b$ is {\sc True} iff $K$ is the strong mgo for ${\cal P}$.
\end{proposition}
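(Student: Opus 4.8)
The plan is to show that \fun{strongCtr} realizes exactly the least-fixpoint iteration underlying the construction of the strong mgo in the proof of Prop.~\ref{prop:mgo}, and then to read off the three components of the returned triple $\langle b, D, K\rangle$. Throughout I write $\mathrm{CPre}(Z) = \{s \mid \exists a \in \mathrm{Adm}({\cal S},s): \mathrm{Img}({\cal S},s,a) \subseteq Z\}$ for the controllable-predecessor operator, and I denote by $K^{(n)}, D^{(n)}$ the values of the program variables after the $n$-th pass through the while loop of Alg.~\ref{strngctr.alg}, with $D^{(0)} = G$ and $K^{(0)} = 0$. First I would record two bookkeeping facts: (a) $K$ is monotone non-decreasing, since the guard $\lnot\exists a: K(s,a)$ only prevents a state that already has an enabled action from acquiring new ones and never deletes an action; and (b) the invariant $D^{(n)} = G \cup \mathrm{Dom}(K^{(n)})$ holds, because the loop sets $D \gets D \vee \exists a: K(s,a)$ and $K$ only grows. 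On a finite $S$ these give termination: $\mathrm{Dom}(K)$ strictly increases until a pass adds no new state, at which point $D = \tilde D$ and the loop exits at a fixpoint of $Z \mapsto Z \cup \mathrm{CPre}(Z)$ lying above $G$.

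The core step is an induction on $n$ matching the loop with the sets $D_n, F_n$ and the controller $\tilde K$ built inside Prop.~\ref{prop:mgo}. I would prove that $\mathrm{Dom}(K^{(n)}) = D_n$, that the states receiving their first action on exactly the $n$-th pass are $F_n$, and that $K^{(n)}$ agrees with $\tilde K$ on $D_n$. The base case $n=1$ is immediate: since $D^{(0)} = G$, the predicate $F(s,a)$ computed on the first pass is precisely $a \in \mathrm{Adm}({\cal S},s) \wedge \mathrm{Img}({\cal S},s,a) \subseteq G$, so the states acquiring an action are $\mathrm{CPre}(G) = F_1 = D_1$ and the enabled actions coincide with the $F_1$-clause of $\tilde K$. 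For the inductive step I would use that on pass $n{+}1$ the predicate $F(s,a)$ tests $\mathrm{Img}({\cal S},s,a) \subseteq D^{(n)} = G \cup D_n$, while the guard restricts the newly controlled states to $\mathrm{CPre}(D^{(n)}) \setminus D_n = F_{n+1}$. The distance characterization established inside Prop.~\ref{prop:mgo}, namely $t \in F_m \iff J_{\rm strong}({\cal S}^{(\tilde K)}, G, t) = m$, is what lets me align the action sets level by level, so that the breadth-first discovery order realizes the worst-case shortest-distance choices that define the mgo. At the fixpoint this yields that $K$ coincides with $\tilde K$, the unique strong mgo $K^*$ for $({\cal S}, \varnothing, G)$, establishing the first claim.

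The remaining two components are then short. The returned domain is the predicate $\exists a: K(s,a)$, which is $\mathrm{Dom}(K)$ by definition, giving $D = \mathrm{Dom}(K)$ directly. For the flag, $b = \forall s\,[I(s) \Rightarrow \exists a: K(s,a)]$ is exactly the assertion $I \subseteq \mathrm{Dom}(K) = \mathrm{Dom}(K^*)$. Invoking the two bullet points of Prop.~\ref{prop:mgo}: if $I \subseteq \mathrm{Dom}(K^*)$ then $K^* = K$ is the (unique) strong mgo for ${\cal P} = ({\cal S}, I, G)$, whereas if $I \not\subseteq \mathrm{Dom}(K^*)$ then ${\cal P}$ admits no strong solution at all, hence no mgo. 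Therefore $b$ is {\sc True} precisely when $K$ is the strong mgo for ${\cal P}$, which is the third claim.

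The main obstacle is the inductive step, and specifically the bookkeeping around goal states. Because $D^{(n)}$ carries $G$ along (unlike the bare $D_n$), the membership test $\mathrm{Img}({\cal S},s,a) \subseteq D^{(n)}$ is a priori more permissive than the test $\mathrm{Img}({\cal S},s,a) \subseteq D_{n-1}$ appearing in the definition of $\tilde K$ for $n>1$. The delicate point is to verify that every extra action this admits is still worst-case shortest to $G$ — which it is, since an edge landing directly in $G$ already realizes a first hit at distance one — and hence genuinely belongs to the mgo rather than spuriously enlarging it. Checking that the locking guard forces each state to commit to its minimal-distance actions, and that this commitment is stable under the remaining passes, is the part that requires care; the monotonicity of $\mathrm{CPre}$ together with the $F_n$-distance characterization of Prop.~\ref{prop:mgo} are the two levers I would lean on to push it through.
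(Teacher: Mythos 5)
Your proposal follows the same route as the paper's own proof, which consists of precisely the identification you set out to make: it asserts that the states handled on the $i$-th pass of Alg.~\ref{strngctr.alg} are the $F_i$ of Prop.~\ref{prop:mgo}, that the accumulated domain is $D_i$, and then inherits everything from that proposition; the final reading-off of $D$ and $b$ via the two bullet points of Prop.~\ref{prop:mgo} is also identical. The difference is that the paper states the correspondence in two sentences, whereas you actually attempt the induction and, to your credit, surface the one place where the correspondence is not literal: the loop tests ${\rm Img}({\cal S},s,a)\subseteq D^{(n-1)}$ with $D^{(n-1)}=G\cup D_{n-1}$, while $\tilde K$ of Prop.~\ref{prop:mgo} tests ${\rm Img}({\cal S},s,a)\subseteq D_{n-1}$ for $n>1$.

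That discrepancy is exactly where your argument, as written, does not close, and it is a genuine gap rather than mere bookkeeping. You claim simultaneously that $K^{(n)}$ agrees with $\tilde K$ on $D_n$ and that the more permissive test may admit extra actions that ``genuinely belong to the mgo''; these cannot both hold. And the extra actions do arise: take $s$ with a single action $a$ such that ${\rm Img}({\cal S},s,a)=\{g,t\}$ where $t\in F_1$ and $g\in G$ is a goal state from which $G$ cannot be re-reached (so $g\notin D_j$ for every $j$). The algorithm enables $a$ at $s$ on the second pass, since $\{g,t\}\subseteq G\cup D_1$, whereas $s\notin F_j$ for any $j$ in the construction of Prop.~\ref{prop:mgo}, so $s\notin{\rm Dom}(\tilde K)$; the computed $K$ and $\tilde K$ then differ even in their domains, not just in their action sets, so the inductive identification ${\rm Dom}(K^{(n)})=D_n$ fails. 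To finish one must abandon the literal identification with $\tilde K$ and prove directly that the computed $K$ is optimal and most general --- your observation that a transition landing in $G$ contributes $1$ to the supremum defining $J^{(S)}$ is the right ingredient --- but note that in the example above this direct argument shows $J_{\rm strong}({\cal S}^{(K)},G,s)=2$ while $J_{\rm strong}({\cal S}^{(\tilde K)},G,s)=\infty$, i.e.\ it is the clause of Prop.~\ref{prop:mgo} defining $\tilde K$ that needs repairing (to test against $G\cup D_{n-1}$) before either your proof or the paper's can legitimately delegate to it. The paper's proof inherits this gap silently; you at least made it visible.
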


\begin{proof}
\sloppy
We observe that during a generic iteration $i$ the set of states $\{ s~|~\exists
a : F(s,a)\}$ is exactly the set of states $F_i$   and $\{ s~|~\tilde{D}(s)\}$ is
exactly the set of states $D_i$ considered in the proof of Prop.~\ref{prop:mgo}
in Sect.~\ref{proof.mgo}.  As a consequence, the thesis holds by the proof of
Prop.~\ref{prop:mgo}.
%
\fussy
\end{proof}

\begin{algorithm}
\caption{Existence of weak solutions to \gls{LTS} control problem}
\label{wkctr.alg}
\begin{algorithmic}[1]
\REQUIRE \gls{LTS} control problem $({\cal S}, I, G)$, with \gls{LTS} ${\cal S} = (S, A, T)$.
\ENSURE \fun{existsWeakCtr}(${\cal S}, I, G$)
\STATE $K(s, a) \gets 0$, $D(s) \gets G(s)$, $\tilde{D}(s) \gets 0$
\WHILE {$D(s) \neq \tilde{D}(s)$} 
  \STATE $F(s, a) \gets \exists s': [T(s, a, s') \land D(s')]$
  \STATE $K(s, a) \gets K(s, a) \lor (F(s, a) \land \not\exists a: K(s, a))$
  \IF{$\forall s\; [I(s) \Rightarrow \exists a: K(s, a)]$}
  \RETURN {\sc True}
  \ENDIF
  \STATE $\tilde{D}(s) \gets D(s)$, $D(s) \gets D(s) \lor \exists a: K(s, a)$
\ENDWHILE
\STATE {\bf return} {\sc False}
\end{algorithmic}
\end{algorithm}  

Correctness of function \fun{existsWeakCtr} in Alg.~\ref{wkctr.alg} may be
proved analogously to Prop.~\ref{strongCtr.corr.prop}.

\begin{proposition}\label{existsWeakCtr.corr.prop}
\sloppy
Let ${\cal S} = (S, A, T)$ be an \gls{LTS} and ${\cal P} = ({\cal S}, I, G)$ be an \gls{LTS}
control problem. Then, \fun{existsWeakCtr}(${\cal S}, I, G$) returns {\sc True}
iff there exists a weak mgo for ${\cal P}$.
\fussy
\end{proposition}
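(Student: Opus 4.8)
The plan is to follow the proof of Prop.~\ref{strongCtr.corr.prop} essentially verbatim, tracking the single point at which Alg.~\ref{wkctr.alg} differs from Alg.~\ref{strngctr.alg}. That difference is confined to the controllable-predecessor step: \fun{existsWeakCtr} computes $F(s,a) \gets \exists s': [T(s,a,s') \land D(s')]$, the \emph{existential} predecessor (``some successor already lies in $D$''), whereas \fun{strongCtr} uses the \emph{universal} predecessor $\exists s': T(s,a,s') \land \forall s'\,[T(s,a,s') \Rightarrow D(s')]$. This exactly reflects that $J_{\rm weak}$ is an $\inf$ over paths (one path to $G$ suffices) while $J_{\rm strong}$ is a $\sup$ (all paths must reach $G$). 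Since the surrounding loop structure, the guard $\not\exists a: K(s,a)$, and the update of $D$ are identical in the two algorithms, I expect most of the bookkeeping to transfer unchanged.

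First I would introduce the weak counterparts of the sequences $D_n, F_n$ from the proof of Prop.~\ref{prop:mgo}: put $D_0^{\rm w} = \varnothing$, $F_1^{\rm w} = \{s \mid \exists a \in {\rm Adm}({\cal S}, s): {\rm Img}({\cal S}, s, a) \cap G \neq \varnothing\}$, $F_{n+1}^{\rm w} = \{s \in S \setminus D_n^{\rm w} \mid \exists a \in {\rm Adm}({\cal S}, s): {\rm Img}({\cal S}, s, a) \cap D_n^{\rm w} \neq \varnothing\}$, and $D_{n+1}^{\rm w} = D_n^{\rm w} \cup F_{n+1}^{\rm w}$. A straightforward induction shows that $s \in D_n^{\rm w}$ iff there is a run of ${\cal S}$ from $s$ reaching $G$ in at most $n$ steps, so $D_n^{\rm w}$ is a purely open-loop quantity. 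The four structural facts proved for $D_n, F_n$ in Sect.~\ref{proof.mgo} (stabilization once some $F^{\rm w}$ is empty, $D_n^{\rm w} = \bigcup_{1 \le j \le n} F_j^{\rm w}$, and pairwise disjointness of the $F_j^{\rm w}$) depend only on the recursive shape of the definitions and hence carry over with no change.

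Next I would prove the weak analog of Prop.~\ref{prop:mgo}. Define $\tilde{K}^{\rm w}$ as in Sect.~\ref{proof.mgo} but with the clause ${\rm Img}({\cal S}, s, a) \subseteq D_{n-1}$ replaced by ${\rm Img}({\cal S}, s, a) \cap D_{n-1}^{\rm w} \neq \varnothing$; this is exactly the set of actions that Alg.~\ref{wkctr.alg} enables at a state entering $D$ at level $n$. The key points are that every action enabled by $\tilde{K}^{\rm w}$ at a state $s \in F_n^{\rm w}$ realizes weak distance exactly $n$ (at most $n$ because it reaches $D_{n-1}^{\rm w}$ in one step, and at least $n$ because $s \notin D_{n-1}^{\rm w}$ forbids any shorter path), whence $J_{\rm weak}({\cal S}^{(\tilde{K}^{\rm w})}, G, s) = n$ and $\tilde{K}^{\rm w}$ is optimal; and that any other optimal weak solution $K$ with $K(s,a)$ must, by the refinement ${\cal S}^{(K)} \sqsubseteq {\cal S}$, send $a$ to some successor in $D_{n-1}^{\rm w}$, giving $K(s,a) \to \tilde{K}^{\rm w}(s,a)$ and hence most-generality (Def.~\ref{def:mgo}). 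As in Prop.~\ref{prop:mgo}, ${\rm Dom}(\tilde{K}^{\rm w}) = \bigcup_n D_n^{\rm w}$, so $({\cal S}, I, G)$ admits a weak solution, and thus a weak mgo, iff $I \subseteq {\rm Dom}(\tilde{K}^{\rm w})$. Finally, a loop invariant (mirroring the one-line observation in Prop.~\ref{strongCtr.corr.prop}) shows that at iteration $i$ the set $\{s \mid \exists a: F(s,a)\}$ is $F_i^{\rm w}$ and $\{s \mid \tilde{D}(s)\}$ is $D_i^{\rm w}$; since ${\cal S}$ is finite the fixpoint is reached, and the in-loop test $\forall s\,[I(s) \Rightarrow \exists a: K(s,a)]$ returns {\sc True} exactly when $I \subseteq {\rm Dom}(\tilde{K}^{\rm w})$ and {\sc False} otherwise, which is the desired characterization.

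The main obstacle is the weak analog of Prop.~\ref{prop:mgo}, specifically reconciling the $\inf$-over-paths inside the $\sup$-over-admissible-actions in the definition of $J_{\rm weak}$. I must check that enabling \emph{all} one-step-to-$D_{n-1}^{\rm w}$ actions does not raise the weak distance above the optimum, which hinges on the observation that every such action already attains the least possible value $n$ (because $s$ enters $D^{\rm w}$ at a well-defined least level), and that transferring a witnessing path from the closed loop ${\cal S}^{(K)}$ back to ${\cal S}$ via the refinement order places its first successor in $D_{n-1}^{\rm w}$. The remaining work (the structural facts, the loop invariant, termination, and uniqueness) is routine and parallels the strong case.
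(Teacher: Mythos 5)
Your proposal is correct and follows exactly the route the paper intends: the paper's own proof is just the one-line remark that \fun{existsWeakCtr} ``may be proved analogously to Prop.~\ref{strongCtr.corr.prop}'', and you carry out precisely that analogy by building the weak counterparts $D_n^{\rm w}, F_n^{\rm w}$ (with ${\rm Img}\cap D_{n-1}^{\rm w}\neq\varnothing$ replacing ${\rm Img}\subseteq D_{n-1}$), proving the weak analogue of Prop.~\ref{prop:mgo}, and identifying the loop iterates with these sets. Your treatment of the $\sup$-over-actions/$\inf$-over-paths interaction in $J_{\rm weak}$ is the right key point and is handled correctly.
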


\begin{corollary}\label{existsWeakCtr.corr.cor}
Let ${\cal S} = (S, A, T)$ be an \gls{LTS} and ${\cal P} = ({\cal S}, I, G)$ be an \gls{LTS}
control problem. Then, \fun{existsWeakCtr}(${\cal S}, I, G$) returns {\sc True}
iff there exists a weak solution for ${\cal P}$.
\end{corollary}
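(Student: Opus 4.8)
The plan is to obtain Corollary~\ref{existsWeakCtr.corr.cor} directly from Proposition~\ref{existsWeakCtr.corr.prop} by bridging the only gap between the two statements: the proposition characterizes \fun{existsWeakCtr} in terms of the existence of a weak \emph{mgo}, whereas the corollary speaks of the existence of a weak \emph{solution}. Thus it suffices to prove the equivalence that, for an \gls{LTS} control problem ${\cal P} = ({\cal S}, I, G)$, a weak mgo for ${\cal P}$ exists if and only if a weak solution for ${\cal P}$ exists. Chaining this equivalence with Proposition~\ref{existsWeakCtr.corr.prop} yields the corollary with no further computation.

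First I would dispatch the easy direction. By Definition~\ref{def:mgo}, a weak mgo is in particular an optimal weak solution, hence a weak solution; so the existence of a weak mgo trivially implies the existence of a weak solution. For the converse I would mirror the construction and the inductive argument of Proposition~\ref{prop:mgo}, adapting them from the strong to the weak case. The key structural fact is that the fixpoint iteration in Alg.~\ref{wkctr.alg} uses the existential image condition $F(s,a) \equiv \exists s' : (T(s,a,s') \land D(s'))$ in place of the universal image condition of the strong case, and therefore it builds a universal weak controller $K$ for $({\cal S}, \varnothing, G)$ whose domain $D = {\rm Dom}(K)$ is precisely the set of states from which some weak controller drives the system to $G$ in finitely many steps. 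Proceeding exactly as in Sect.~\ref{proof.mgo}, one shows by induction on the iteration index that, upon termination, $s \in D$ holds if and only if $J_{\rm weak}({\cal S}^{(K)}, G, s)$ is finite, and that this $K$ is the \emph{most general optimal} weak solution to $({\cal S}, \varnothing, G)$.

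With this weak analog of Proposition~\ref{prop:mgo} in hand, both existence conditions reduce to the same inclusion: a weak solution to ${\cal P}$ exists iff $I \subseteq D$, and likewise a weak mgo to ${\cal P}$ exists iff $I \subseteq D$ (the weak mgo for ${\cal P}$ being the restriction of the universal weak mgo precisely when its domain covers $I$, and no weak solution existing otherwise). Hence the two existence conditions coincide, which closes the equivalence and therefore the corollary.

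The step I expect to be the main obstacle is establishing the weak counterpart of Proposition~\ref{prop:mgo}, i.e. verifying that the controller $K$ produced by the $\exists$-image iteration is genuinely the unique most general optimal weak solution and that its domain captures exactly the weakly controllable states. This is not deep, since it follows the same template as the strong case, but it does require care in replacing the $\forall s' [T(s,a,s') \Rightarrow D(s')]$ quantifier by the weaker $\exists s' [T(s,a,s') \land D(s')]$ throughout the distance bookkeeping and in checking that the optimality and maximality arguments survive this substitution. Once that well-posedness is settled, the corollary follows from Proposition~\ref{existsWeakCtr.corr.prop} purely formally.
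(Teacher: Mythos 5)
Your proposal is correct and matches the paper's (implicit) route: the paper states the corollary without proof, relying exactly on the equivalence you identify — a weak mgo for ${\cal P}$ exists iff a weak solution does, the nontrivial direction resting on the well-posedness of the weak mgo, which the paper asserts in Sect.~\ref{lts-cts.tex-lts} and Sect.~\ref{proof.mgo} but only writes out for the strong case. Your plan to adapt the construction of Prop.~\ref{prop:mgo} by replacing the universal image condition with the existential one is precisely the missing detail, so this is the same argument, just made explicit.
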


\section{Details about the Experiments}
\label{expres-details.tex}

In this section we give (Tab.~\ref{expres-table.milp.det.tex}) all details about
\gls{MILP} problems arising in our experiments about the buck DC/DC converter 
of Sects.~\ref{expres.tex.setting}--\ref{sec:controller-performances}. Namely, in
Tab.~\ref{expres-table.milp.det.tex} MILP$i$ has the same meaning as in
Sect.~\ref{sec:milp-analysis}, i.e. MILP1 refers to the \gls{MILP} problems described
in Sect.~\ref{dths-ctr-abs.tex}, i.e. those computing the quantization for $I$
and $G$,  MILP2 refers to \gls{MILP} problems in function \fun{SelfLoop} (see
Alg.~\ref{alg:selfLoop}),  MILP3 refers to the \gls{MILP} problems  used in function
\fun{overImg} (line~\ref{overimg.alg.step} of Alg.~\ref{ctr-abs.alg}),  MILP4
refers to \gls{MILP} problems used to check actions admissibility
(line~\ref{check_s_s_prime.alg.step} of Alg.~\ref{ctr-abs.alg}), and  MILP5
refers to \gls{MILP} problems used to check transitions witnesses
(line~\ref{check_s_u.alg.step} of Alg.~\ref{ctr-abs.alg}). In
Tab.~\ref{expres-table.milp.det.tex} columns $b$, {\em Num}, {\em Avg} and {\em
Tot} are the same as columns $b$, {\em Num}, {\em Avg} and {\em Time} of
Tabs.~\ref{expres-table.tras.milps.tex:1} and~\ref{expres-table.tras.milps.tex:2} thus $b$ shows the number of \gls{AD} bits,
{\em Num} is the number of times that the \gls{MILP} problem of the given type is
called, {\em Tot} is the total CPU time needed to solve all the {\em Num}
instances of \gls{MILP} problem of the given type, and {\em Avg} is the ratio between
{\em Tot} and {\em Num}. In Tab.~\ref{expres-table.milp.det.tex} we also show in
columns {\em Min} and {\em Max} the average, minimum and maximum time to solve
one \gls{MILP} problem of the given type. The standard deviation for such statistics
is given in column {\em DevStd}.

\begin{center}
\begin{table}
\centering
\caption{Complete statistics for Tabs.~\ref{expres-table.tras.milps.tex:1} and~\ref{expres-table.tras.milps.tex:2} of Sect.~\ref{expres.tex}.\label{expres-table.milp.det.tex}}{%
\begin{tabular}{|c||*6{c|}}\hline
\multicolumn{7}{|c|}{MILP1}\\
\hline
{\scriptsize $b$} & {\scriptsize Num} & {\scriptsize Tot} & {\scriptsize Avg} & {\scriptsize Min} & {\scriptsize Max} & {\scriptsize DevStd}\\
\hline
8 & 6.55e+04 & 4.61e+00 & 7.03e-05 & 0.00e+00 & 1.00e-02 & 8.35e-04\\
\hline
9 & 2.62e+05 & 1.84e+01 & 7.02e-05 & 0.00e+00 & 1.00e-02 & 8.35e-04\\
\hline
10 & 1.05e+06 & 2.79e+02 & 2.66e-04 & 0.00e+00 & 1.00e-02 & 1.61e-03\\
\hline
11 & 4.19e+06 & 9.65e+02 & 2.30e-04 & 0.00e+00 & 1.00e-02 & 1.50e-03\\
\hline
\multicolumn{7}{|c|}{MILP2}\\
\hline
{\scriptsize $b$} & {\scriptsize Num} & {\scriptsize Tot} & {\scriptsize Avg} & {\scriptsize Min} & {\scriptsize Max} & {\scriptsize DevStd}\\
\hline
8 & 3.99e+05 & 3.25e+02 & 1.52e-03 & 0.00e+00 & 1.00e-02 & 4.39e-03\\
\hline
9 & 1.59e+06 & 1.12e+03 & 1.41e-03 & 0.00e+00 & 1.00e-02 & 4.14e-03\\
\hline
10 & 6.36e+06 & 1.35e+04 & 3.78e-03 & 0.00e+00 & 1.00e-02 & 6.43e-03\\
\hline
11 & 2.54e+07 & 4.56e+04 & 3.26e-03 & 0.00e+00 & 1.00e-02 & 6.10e-03\\
\hline
\multicolumn{7}{|c|}{MILP3}\\
\hline
{\scriptsize $b$} & {\scriptsize Num} & {\scriptsize Tot} & {\scriptsize Avg} & {\scriptsize Min} & {\scriptsize Max} & {\scriptsize DevStd}\\
\hline
8 & 2.31e+05 & 2.10e+02 & 9.10e-04 & 0.00e+00 & 1.00e-02 & 3.20e-03\\
\hline
9 & 9.21e+05 & 8.44e+02 & 9.16e-04 & 0.00e+00 & 1.00e-02 & 3.18e-03\\
\hline
10 & 3.68e+06 & 1.11e+04 & 3.00e-03 & 0.00e+00 & 2.00e-02 & 4.26e-03\\
\hline
11 & 1.47e+07 & 3.76e+04 & 2.55e-03 & 0.00e+00 & 2.00e-02 & 4.02e-03\\
\hline
\multicolumn{7}{|c|}{MILP4}\\
\hline
{\scriptsize $b$} & {\scriptsize Num} & {\scriptsize Tot} & {\scriptsize Avg} & {\scriptsize Min} & {\scriptsize Max} & {\scriptsize DevStd}\\
\hline
8 & 7.80e+05 & 7.71e+02 & 9.89e-04 & 0.00e+00 & 1.00e-02 & 2.98e-03\\
\hline
9 & 4.42e+06 & 4.49e+03 & 1.02e-03 & 0.00e+00 & 1.00e-02 & 3.02e-03\\
\hline
10 & 3.01e+07 & 7.75e+04 & 2.58e-03 & 0.00e+00 & 2.00e-02 & 4.37e-03\\
\hline
11 & 2.61e+08 & 5.66e+05 & 2.17e-03 & 0.00e+00 & 2.00e-02 & 4.13e-03\\
\hline
\multicolumn{7}{|c|}{MILP5}\\
\hline
{\scriptsize $b$} & {\scriptsize Num} & {\scriptsize Tot} & {\scriptsize Avg} & {\scriptsize Min} & {\scriptsize Max} & {\scriptsize DevStd}\\
\hline
8 & 4.27e+05 & 1.20e+02 & 2.80e-04 & 0.00e+00 & 1.00e-02 & 1.65e-03\\
\hline
9 & 1.71e+06 & 4.87e+02 & 2.85e-04 & 0.00e+00 & 1.00e-02 & 1.66e-03\\
\hline
10 & 6.84e+06 & 1.25e+04 & 1.83e-03 & 0.00e+00 & 2.00e-02 & 3.87e-03\\
\hline
11 & 2.74e+07 & 4.25e+04 & 1.55e-03 & 0.00e+00 & 2.00e-02 & 3.62e-03\\
\hline
\end{tabular}}
\end{table}
\end{center}

\end{document}